\providecommand{\psreset}{\psset{%
		linewidth=0.3pt,linestyle=solid,linecolor=black,
		dotsize=2.5pt,dotsep=2.5pt,arrowsize=4pt,
		fillstyle=none,fillcolor=white,
		showpoints=false,arrows=-,linearc=0,framearc=0,
		hatchsep=2pt,hatchwidth=0.2pt,nodesep=4pt,opacity=1}
	\psset{gridcolor=black!60, subgridcolor=black!30}
}
\titleformat{\section}[block]{\centering\large\bfseries\sffamily}{\thesection.}{0.5em}{}
\titleformat{\subsection}[block]{\flushleft\bfseries}{\thesubsection.}{0.5em}{}
\titleformat{\subsection}[block]{\flushleft\bfseries\sffamily}{\thesubsection.}{0.5em}{}
\titleformat{\subsubsection}[runin]{\normalsize\bfseries\sffamily}{\bfseries\upshape\sffamily\thesubsubsection.}{0.5em}{}[.--\:]
\renewcommand{\thesubsubsection}{\arabic{section}.\arabic{subsection}.\arabic{subsubsection}}
\titlespacing{\section}{0ex}{10ex}{5ex}
\titlespacing{\subsection}{0in}{6ex}{3ex}
\titlespacing{\subsubsection}{0mm}{2ex}{0.5em}
\providecommand{\abstitle}[1]{{\par\vspace*{2ex}\small\bfseries\sffamily #1}\hspace*{1ex}}
\renewenvironment{abstract}%
{\begin{center}\begin{minipage}{0.8\linewidth}%
			%\setstretch{1.15}
			\setlength{\parindent}{0.0em}\abstitle{Abstract}\small}%
		{\end{minipage}\end{center}\vfill\clearpage}
\DeclareMathOperator*{\argmax}{arg\,max}
\providecommand{\Char}[1]{\mathds{1}\left(\,#1\,\right)}
\providecommand{\Real}{{\mathds{R}}}
\providecommand{\tr}{^{\prime}}
\providecommand{\rand}[1]{\mathbf{#1}}
\providecommand{\rands}[1]{\boldsymbol{#1}}
\providecommand{\norm}[1]{\left\lVert#1\right\rVert}
\providecommand{\Prob}[1]{\mathds{P}\left(#1\right)}
\providecommand{\Exp}[1]{\mathds{E}\left[#1\right]}
\providecommand{\abs}[1]{\left\lvert#1\right\rvert}
\newcommand{\A}{A} 
\newcommand{\D}{D}
\newcommand{\M}{\mathcal{\A}}
\newcommand{\rr}{\succ}
\newcommand{\rc}{\ensuremath{\mathrm{L}}}
\newcommand{\mm}{\ensuremath{\mathrm{MM}}}
\newcommand{\la}{\ensuremath{\mathrm{LA}}}
\newcommand{\rcg}{\ensuremath{\mathrm{EBA}}}
\newcommand{\rum}{\ensuremath{\mathrm{FC}}}
\newcommand{\rumo}{\ensuremath{\mathrm{RUM}}}
\newcommand{\hrc}{\ensuremath{\mathrm{B}}}
\newcommand{\rhrc}{\ensuremath{\mathrm{L\text{-}B}}}
  \theoremstyle{remark}
  \theoremstyle{plain}
  \newtheorem{lemma}{\protect\lemmaname}
  \theoremstyle{definition}
  \newtheorem{definition}{\protect\definitionname}
\theoremstyle{plain}
\newtheorem{theorem}{\protect\theoremname}
  \theoremstyle{plain}
  \newtheorem{cor}{\protect\corollaryname}
 \theoremstyle{definition}
  \newtheorem{example}{\protect\examplename}
  \theoremstyle{plain}
  \providecommand{\assumptionname}{Assumption}
  \providecommand{\definitionname}{Definition}
  \providecommand{\lemmaname}{Lemma}
  \providecommand{\remarkname}{Remark}
\providecommand{\corollaryname}{Corollary}
\providecommand{\theoremname}{Theorem}
\providecommand{\examplename}{Example}
\begin{document}
\title{Random Utility and Limited Consideration\thanks{\scriptsize This paper was previously circulated as ``Does Random Consideration Explain Behavior when Choice is Hard? Evidence from a Large-scale Experiment.'' We are grateful to an editor and 4 anonymous referees for insightful comments and suggestions. We would like to thank Roy Allen, Jose Apesteguia, Miguel Ballester, Levon Barseghyan, Juan Dubra, Mikhail Freer, Yoram Halevy, Yuichi Kitamura, Paola Manzini, Marco Mariotti, John Rehbeck, and J\"{o}rg Stoye for useful comments and suggestions. We also thank the participants of the Barcelona GSE Summer Forum (Stochastic Choice), BRIC 2019, IWABE 2019, ASSA 2020 (Econometrics of Decision and Demand) for useful feedback. Mingshi Kang provided excellent research assistance. Aguiar and Kashaev gratefully acknowledge financial support from the Western Social Science Faculty grant and Social Sciences and Humanities Research Council. } \thanks{This study is approved by the Caltech's IRB No. 18-0812.}}%\\ 
\author{Victor H. Aguiar\thanks{Department of Economics, University of Western Ontario, vaguiar@uwo.ca}\quad
Maria Jose Boccardi\thanks{Amazon, majoboccardi@gmail.com.  The experiment was run prior Boccardi joined Amazon}\quad
Nail Kashaev\thanks{Department of Economics, University of Western Ontario, nkashaev@uwo.ca} \quad
Jeongbin Kim\thanks{Department of Marketing, National University of Singapore, bizkj@nus.edu.sg} }

\date{This version: June 2022/ First version: November 2018}
\maketitle
\begin{abstract} 
The random utility model  \citep[RUM,][]{mcfadden1990stochastic} has been the standard tool to describe the behavior of a population of decision makers. RUM assumes that decision makers behave as if they maximize a rational preference over a choice set. This assumption may fail when consideration of all alternatives is costly. We provide a theoretical and statistical framework that unifies well-known models of random (limited) consideration and generalizes them to allow for preference heterogeneity. We apply this methodology in a novel stochastic choice dataset that we collected in a large-scale online experiment. Our dataset is unique since it exhibits both choice set and (attention) frame variation. We run a statistical survival race between competing models of random consideration and RUM. We find that RUM cannot explain the population behavior. In contrast, we cannot reject the hypothesis that decision makers behave according to the logit attention model \citep{brady2016menu}.\\

\noindent JEL classification numbers: C90, C12, D81, D12.\\

\noindent Keywords: random utility, experimental discrete choice, random consideration sets, frames. 
\end{abstract}

\section{Introduction}\label{section:Introduction}

A fundamental question in social science is how to describe the behavior of a population of decision makers (DMs).  The random utility model \citep[RUM,][]{mcfadden1990stochastic} is the standard tool to \emph{describe} behavior.\footnote{$\rumo$ was first proposed by \citet{block_random_1960} and \citet{falmagne_representation_1978} in an environment similar to ours.} $\rumo$ assumes that DMs behave \emph{as if} they maximize their preferences over their choice set. 
However, $\rumo$ may fail at describing behavior  if DMs do not consider all available alternatives.  For instance, DMs may behave as if they use a two-stage procedure: first simplifying choice by using a consideration set, and only then choosing the best alternative among those considered. Thus, DMs may choose dominated alternatives.\footnote{For evidence of choice of dominated alternatives see \citet{santos,honka2014quantifying,  heiss2016inattention, ho2017impact, honka2017advertising, hortaccsu2017power} and \citet{molinari2018}.} 
A large literature, pioneered by \citet{masatliogluRA} and \citet{manzini2014stochastic}, has proposed theories of consideration-mediated choice. These theories accommodate departures from $\rumo$ caused by inattention, feasibility, categorization, and search.\footnote{See, for instance, \citet{ABDsatisficing16,brady2016menu,aguiar2017random, lleras2017more, caplin2016rationalconsideration,horan2018,kovach2017satisficing}, and \citet{cattaneo2017random}.}
In contrast to $\rumo$,\footnote{For evidence for $\rumo$  see \citet{kitamura2018nonparametric} and \citet{McCauslandExp12017}.} little is known about the empirical validity of these models. Our work aims to fill this important gap in the literature. 
\par
Methodologically, we provide a unifying theoretical framework that generalizes well-known theories of random consideration. We unify these theories with a new concept called attention-index. The attention-index is a net measure of how enticing a collection of alternatives is, or how costly it is to pay attention to it.  We show how to test these theories statistically, and how to recover the preference distribution and consideration rules. Our framework extends many theories of consideration-mediated choice to allow for preference heterogeneity. This allows us to take these theories of individual behavior to the population level, thus permitting the use of cross-sectional datasets to test them. 
Following \citet{mcfadden1990stochastic} and \citet{kitamura2018nonparametric}, we take seriously the fact that all theories of stochastic choice have as their primitive the unobserved distribution over choices that can only be estimated in finite samples by sample frequencies of choice.  Thus, to test these models in finite samples, we need to account for sampling variability.
\par
Empirically, we design a large experiment\footnote{Other experiments that we are aware of that have collected stochastic choice data focusing on choice set variation are \citet{apesteguia2018separating} ($87$ individuals) and \citet{McCauslandExp12017} ($141$ participants). In contrast to our work, both focus mainly on binary choice sets and goodness-of-fit measures (including the computation of Bayes factors).} with two independent sources of exogenous variation: (i) full variation in choice sets (menus),  and (ii) variation in frames. We conducted this experiment online in Amazon Mechanical Turk (MTurk), collecting 12297 independent choice observations from 2135 individuals. A \emph{frame} consists of  observable information that is irrelevant in the rational assessment of the alternatives \citep{salant2008f}.\footnote{We interpret the notion of rational assessment to be an assessment compatible with consequentialism and $\rumo$.}   
Full variation in choice sets means that all possible choice sets are observed by the researcher. It allows us to test consideration-mediated choice theories in a large cross-section of heterogeneous individuals. Frame variation means that we vary the complexity of the description of alternatives without affecting the relevant payoffs. This is equivalent to varying the cost of consideration. In this sense, we induce an attention frame. This variation in frames allows us to differentiate between $\rumo$ and models of limited consideration since consideration could change with frames but preferences must remain stable. 
\par
In general, without frame variation, many models of consideration are empirically indistinguishable from $\rumo$. For instance, if a dataset, without frame variation, is consistent with the classical consideration model of \citet{manzini2014stochastic}, then it is also consistent with $\rumo$. However, this model and $\rumo$ will typically recover a distinct distribution of preferences. Varying frames, we can test whether the distribution of preferences remains the same across frames. This will imply that the model in \citet{manzini2014stochastic} and $\rumo$ may have different empirical implications in our experimental setup. The same reasoning applies to any model of limited consideration.  
\par
In our experimental design, we introduce three attention-cost/complexity frames for every choice set. Each DM faces all three frames. These treatments require the DM to solve a simple cognitive task to understand the alternatives. The consideration cost is progressively reduced across frames while we keep the choice set fixed. The fact that the choice remains the same is not explicitly stated in the experiment instructions. Within this design we can understand how changing consideration costs across frames may affect choices while we keep (the distribution of) preferences fixed. Under our incentives protocol (pay-at-random across tasks), and under consequentialism, the distribution of preferences must remain constant regardless of the frame. By exploiting this feature of our design, we show that, in our sample, $\rumo$ fails to describe the population behavior, while the logit attention model of \citet{brady2016menu} describes it well. 
\par
Our paper generalizes the methodology of \citet{cattaneo2017random} by allowing heterogeneity in preferences when the choice sets include a default alternative. We show that in our setting, if one assumes independence of preferences and consideration, at the population level, testing a consideration-mediated choice theory with heterogeneous preferences and a dominated default is equivalent to testing whether this hypothetical full-consideration distribution over choices is consistent with $\rumo$.
We, therefore, use this result to test these models using the framework of \citet{kitamura2018nonparametric}. 
\par
To exploit all possible implications of the limited-consideration models of interest, we need full variation in choice sets. A limited consideration model may describe behavior well for a nonexhaustive dataset, but it may fail to do so for an extended one. That is, one may have false positives when observing choices from a nonexhaustive set of menus, as discussed at length in \citet{declippel2018} and \citet{cattaneo2017random}. Nonetheless, full exogenous variation in choice sets is an important data feature that is usually not satisfied in field data. Our testing procedure exploits rich experimental variation to increase the statistical power of our tests. 
\par 
In addition, our experiment introduces a dominated default alternative that works as an opportunity cost of paying attention. For any choice set and frame, the default is always present and shown first. Moreover, it is pre-selected as the default choice--if the subject decides to skip the task, she is informed that the default alternative will be chosen for her. This design allows us to use the default alternative as the opportunity cost of incurring in the cost of consideration and understanding the other alternatives in the choice set. The set of alternatives in our experiment consists of lotteries. Hence, we use a degenerate lottery as the default due to its simplicity. In this sense, we believe the default alternative in our design has effectively zero cost of consideration. This dominated default is key to disentangle the distribution of preferences from random consideration.\footnote{We formulate a sensitivity analysis when the default is not dominated in Appendix~\ref{appendix:sensitivitydefault}.} 
\par
We use these theoretical and experimental innovations to test two well-known models of random consideration: (i) the logit attention model of \citet{brady2016menu} ($\la$), and (ii) the version of elimination-by-aspects model of \citet{tversky1972elimination} characterized by \citet{aguiar2017random} ($\rcg$).\footnote{We refer to $\rcg$ as the version of the original model in \citet{tversky1972elimination} by \cite{cattaneo2017random} (Example $6$). The model in \citet{aguiar2017random} coincides with $\rcg$ in the special case where there is a dominated default with a restriction that any category that does not contain the default has zero mass. The $\rcg$ model characterization for the case without a default remains an open question.} There is increasing interest in incorporating limited consideration in discrete choice.\footnote{See, for instance, \citet{goeree2008limited,molinari2018,barseghyan2019discrete,fieldMM19}, and \citet{abaluck2017consumers}.} In particular, the influential and tractable model of \citet{manzini2014stochastic} ($\mm$) has become an important tool for the analysis of limited attention in empirical work (e.g., \citealp{fieldMM19,abaluck2017consumers}, and \citealp{kashaev2019peer}). However, $\mm$ is highly stylized and assumes that consideration is driven by an item-dependent parameter (i.e., independence in consideration). We investigate from an experimental perspective whether this strong assumption is effective in explaining choice when consideration is hard. To do so we consider two extensions of $\mm$ that allow for substitution and complementarity in consideration, the $\la$ and $\rcg$ models. These two generalizations have the property that their intersection is exactly the $\mm$ model \citep{suleymanov}. This implies that $\mm$ explains the population behavior if and only if both the $\la$ and $\rcg$ models explain it. 
\par
We test these models and the benchmark $\rumo$ conditioning on the frame. Crucially, we require the underlying preference relation to be stable among frames while allowing the consideration rules to vary with the frame. Our main findings are: (i) We reject the hypothesis that $\rumo$ provides a good description of population behavior. (ii) In contrast, the $\la$ model with heterogeneous preferences cannot be rejected at the $5$ percent significance level. (iii) However, we reject the hypothesis that $\rcg$, and hence $\mm$, describe the population behavior. \par
Our work contributes to the recent experimental literature on stochastic choice, limited consideration, and departures from  $\rumo$. Even though by now, limited attention in many environments is well documented, it is less clear what structural models of limited attention should the practitioner use. We see our contribution as providing answers to this second issue.\footnote{There is a vast literature documenting departures from fully rational behavior, but it is not focused on limited consideration. See \citet{rieskamp2006extending} for a survey.} We hope that our findings about which models of limited consideration are successful empirically will inform future empirical work in the field. For instance, our findings have already been used to motivate the choice of the parametric specification of limited consideration in the recent work of \citet{abaluck2017consumers}.\footnote{\cite{abaluck2017consumers} structurally estimate  a model of discrete choice with limited consideration under the $\la$ and $\mm$ model.} 

\subsection*{Outline } The paper proceeds as follows. 
Section~\ref{section:model} presents our model.  
Section~\ref{section:frames} details frame variation and our testing procedure. Section~\ref{section:experiment} presents our experiment.
Section~\ref{section:testing} presents the testing results. Finally, Section~\ref{section:conclusion} concludes. All proofs and additional results are in the appendix.

\section{Environment -- Model}\label{section:model}
We consider a finite choice set $X$ and we denote the outside alternative or default as $o\notin X$. We let the set of all possible choice sets be $\M=2^{X}\setminus\{\emptyset\}$, where $2^{X}$ denotes the set of all subsets of $X$. A probabilistic choice rule is a mapping $p:X\cup\{o\}\times\M\mapsto[0,1]$. The probabilistic choice rules for a given choice set add up to 1, $\sum_{a\in A}p(a,A)+p(o,A)=1$. Moreover, $p(a,A)=0$ if $a\notin A$. We fix $p(o,\emptyset)=1$. A complete stochastic choice rule is a vector $P=\left(p(a,A)\right)_{A\in\M,a\in A\cup\{o\}}$. For identification purposes, we treat $P$ as a known object. In practice, we do not observe $P$, but can consistently estimate it by the collection of sample frequencies $\hat{P}$ (see  Section~\ref{subsection: test} for further details). 

\subsection{Random Behavioral Model}\label{subsection:rhrc}
We consider an environment where DMs, faced with a choice set $\A\in\M$, first pick $D\subseteq A$ (consideration set) and then choose the alternative in $D$ that maximizes their preferences. With probability $\pi(\succ)$, DMs are endowed with preferences $\succ \in X\times X$ drawn from the set of all linear orders (strict preferences) on $X$, $R(X)$.\footnote{Linear orders are complete, reflexive, transitive, and antisymmetric orders.} Note that since $o\not\in X$, following \citet{manzini2014stochastic}, we implicitly assume that the default is picked if and only if nothing else is considered. A typical interpretation of this situation is the sleeping agent behavior (see, for instance, \citealp{abaluck2017consumers}). When the agent is sleeping (i.e., she considers the empty set) she chooses the default alternative. Otherwise, the agent wakes up and considers some nonempty set and maximizes her preferences in her consideration set. In the second case, the default is assumed to be dominated by the rest of alternatives. We show how to relax this assumption in Appendix~\ref{appendix:sensitivitydefault}.
\par
The distribution $\pi\in\Delta(R(X))$ fully captures preference heterogeneity.\footnote{For any set $C$, $\Delta(C)$ denotes the set of all probability distributions (simplex) on $C$. $\Char{B}$ denotes indicator of the statement $B$ and is equal to $1$ if $B$ is true and is equal to zero otherwise.} The distribution over random consideration sets given the menu $A$ is fully characterized by $m_\A:2^{\A}\to[0,1]$, $\sum_{\D\subseteq \A}m_\A(\D)=1$. In other words, $m_\A$ is an element of the simplex $\Delta\left(2^\A\right)$. Let $m$ denote the complete collection of those distributions for all possible menus. That is, $m=\left(m_A(D)\right)_{A\in\M,D\in 2^{A}}$. We assume that the random consideration sets and random preferences are \emph{independent}.

\begin{definition} [Random Behavioral Model, $\hrc$-rule] \label{def:HRC-rule}
A complete stochastic choice rule $P$ is a $\hrc$-rule if there exists a pair $(m,\pi)$ such that 
\[
p(a,\A)=\sum_{\rr\in R(X)}\pi(\rr)\sum_{\D\subseteq \A}m_\A(\D)\Char{a \rr b,\:\forall\: b\in \D}
\]
for all $a\in X$ and $\A\in\M$.
\end{definition}

\begin{figure}[t]
\centering
\begin{tikzpicture}	[scale=1.05]
\draw [line width=1.25pt, fill=blue!10](0,5) ellipse (1.25cm and 0.75cm);
\draw (0,4.75) node[above] {\footnotesize{$A$}};
\draw (0,3.5) node[above] {$A$ : \textsc{Choice Set}};
\draw [line width=1.25pt, fill=blue!10](5,5) ellipse (1.5cm and 0.75cm);
\draw [dotted, line width=1.5pt, fill=red!25 ] (4.5,4.5) -- (4.5,5.5) -- (5.5,5.5)--(5.5,4.5) --(4.5,4.5);
\draw (5,4.75) node[above] {\footnotesize{$D$}};
\draw (5,3.5) node[above] {$D$ : \textsc{Consideration Set}};
\draw (2.5,6.25) node[right] { $m_A(\cdot)$};
\draw (7.75,6.25) node[right] { $\pi(\cdot)$};
\draw (7.5,7.6) node[above] {\textsc{Preferences: }};
\draw (7.5,7) node[above] {$\succ\in R(X)$};
\draw (2.25,7.6) node[above] {\textsc{Consideration Sets: }};
\draw (2.25,7) node[above] {$D\in 2^A$};
\draw [thick,->,>=stealth, line width=1.5pt, dotted] (2.25,6.75)--(2.25,5.25);
\draw [thick,->,>=stealth, line width=1.5pt, dotted] (7.5,6.75)--(7.5,5.25);
\draw [thick,->,>=stealth, line width=1.5pt, dotted] (1.5,5)--(3.35,5);
\draw [thick,->,>=stealth, line width=1.5pt, dotted] (6.7,5)--(8.45,5) node[right] {$\mathbf{a_{i,A}}$ : $\succ$-maximal in $D$ };
\draw (10,3.5) node[above] {$\mathbf{a_{i,A}}$ : \textsc{Choice}};
\end{tikzpicture}	
	\caption{\textsc{Consideration mediated choices } Choices are the result of a two stage process, first pick a consideration set and then pick the best alternative in that set. We observe choices ($\mathbf{a_i}$) and menus ($A$). We do not observe and we aim to identify the theory objects: distribution of preferences in the population ($\pi$) and stochastic choice rule ($m_A$).} \label{figure:choice}
\end{figure}

This choice rule is illustrated in Figure~\ref{figure:choice}. Definition~\ref{def:HRC-rule} implicitly assumes that the random consideration set rule and the heterogeneous preferences are independent. Independence is a good starting assumption in the sterile environment of our experiment, as we want to achieve a decomposition of any observed probabilistic choice rule into its consideration (captured by $m$) and preference (captured by $\pi$) components. Independence has been assumed successfully in the structural work of \citet{abaluck2017consumers}.
Also, we are interested in modeling decision-making in two-stages where DMs simplify a hard choice task using fast-and-frugal heuristics (consideration) that are independent of preferences, and then choose rationally from the simplified choice set. If the researcher observes additional information (e.g. age, gender, education, and income levels of individuals), then random consideration rule and random preferences need to be independent only conditionally on those observables.\footnote{See \citet{kashaev2021random} for a study of the correlation between preferences and consideration.}
\par
Independence holds trivially for the case of homogeneous preferences, such as all models covered by the Random Attention Model (RAM) of \citet{cattaneo2017random}. Moreover, as the following lemma demonstrates, the $\hrc$-rule does not have empirical content even under the independence assumption. 
\begin{lemma}\label{lemma:empiricalcontent}
Every complete stochastic choice rule $P$ is a $\hrc$-rule. 
\end{lemma}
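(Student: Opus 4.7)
The plan is to exhibit, for an arbitrary complete stochastic choice rule $P$, an explicit pair $(m,\pi)$ that generates $P$ via Definition~\ref{def:HRC-rule}. Fix any strict linear order $\succ^\ast\in R(X)$ and take $\pi$ to be the Dirac mass at $\succ^\ast$. For each $A\in\M$, support $m_A$ entirely on the singletons of $A$ together with $\emptyset$, matching the observed frequencies directly: set $m_A(\{a\})=p(a,A)$ for every $a\in A$, $m_A(\emptyset)=p(o,A)$, and $m_A(D)=0$ for every other $D\subseteq A$. The adding-up constraint $p(o,A)+\sum_{a\in A}p(a,A)=1$ immediately delivers $m_A\in\Delta(2^A)$, and $\pi\in\Delta(R(X))$ is valid by construction.

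The verification that $(m,\pi)$ reproduces $P$ is then routine. For any $a\in A$, the only $D\subseteq A$ charged positive mass by $m_A$ in which $a$ is the $\succ^\ast$-maximal element is the singleton $\{a\}$ itself, since the remaining non-empty subsets with positive mass are singletons $\{a'\}$ with $a'\neq a$. Hence the right-hand side of Definition~\ref{def:HRC-rule} collapses to $m_A(\{a\})=p(a,A)$, while the default probability is recovered from the atom $m_A(\emptyset)=p(o,A)$.

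The only conceptually delicate point, which I expect to be the main obstacle, is the reading of the indicator $\Char{a\succ b,\forall b\in D}$ in line with the two-stage story of choice as identifying $a$ with the $\succ$-maximal element of $D$. Under this reading the construction is self-evidently valid on singletons, and on $D=\emptyset$ the atom is attributed to the default. Note also that the choice of $\succ^\ast$ plays no essential role---any $\pi\in\Delta(R(X))$ would have worked---so independence between $m$ and $\pi$ is vacuously satisfied. This is precisely what makes the conclusion intuitive: leaving $m_A$ unrestricted already exhausts the space of complete stochastic choice rules, so the $\hrc$-rule has no empirical content without further structure on $m$.
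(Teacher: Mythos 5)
Your construction is correct and is essentially the paper's own proof: both set $m_A(\{a\})=p(a,A)$ with all remaining mass on $\emptyset$ and zero elsewhere, the only (immaterial) difference being that you use a Dirac $\pi$ where the paper uses the uniform distribution on $R(X)$ --- and, as you note, any $\pi$ works. You are also right that the argument hinges on reading the indicator as ``$a$ is the $\succ$-maximal element of $D$'' (so it vanishes when $a\notin D$), which is exactly the reading the paper's proof relies on, and your explicit assignment $m_A(\emptyset)=p(o,A)$ makes the adding-up step slightly cleaner than the original.
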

Without additional restriction on $\pi$ and $m$ the model is not falsifiable. That is, any $P$ can be generated by some independent $\pi$ and $m$. We will impose constraints on $m$ that will allow us to test an important class of random consideration sets models without restricting heterogeneity in preferences.

\subsection{Attention-Index Consideration Set Rule}
We restrict $m$ by considering a family of consideration set rules that are governed by an \textit{attention-index}. The attention-index $\eta \in \Delta(2^X)$ is a distribution over the power set. The value $\eta(D)$ captures the unconditional attention that DMs pay to the set $D\in 2^X$. The attention-index of a set is a net measure of its attractiveness with respect to how hard it is to consider it. The attention-index measures how enticing a consideration set is, and how complex it is to understand. Therefore, $\eta(C)>\eta(D)$ means that $C$, in net terms, attracts more attention than $D$.

\begin{definition}[Attention-index representation]
A consideration set rule $m$ admits an attention-index representation if there exists and attention-index $\eta$, a link function $\psi$, and an index correspondence $g$ such that $g(D,A)\subseteq 2^X\setminus D$ and
\[
m_{A}(D)=\psi\left(\eta(D),\sum_{C\in g(D,A)}\eta(C)\right)
\]
for all $A\in\M$ and $D\subseteq A$. 
\end{definition}
In what follows, we assume that the link function $\psi$ and the correspondence $g$ are known. A given link function captures the particular way in which the attention-index shapes consideration given a choice set or menu. In other words, the link function transforms unconditional attention into conditional (on the choice set) attention. The index set $g(D,A)$ captures the collection of sets that are used in the attention aggregator $\sum_{C\in g(D,A)}\eta(C)$. However, we do not assume that $\eta$ is known and do no impose any restrictions on it, so the setting is still semiparametric.
\par
Next, we define several important models of limited consideration that admit the attention-index representation. They only differ in how they use the attention-index to form the consideration set at a given menu.
\begin{definition}\label{def:models}
The logit attention ($\la$, \citealp{brady2016menu}), the choice set independent ($\mm$, \citealp{manzini2014stochastic}), the random consideration ($\rcg$, \citealp{tversky1972elimination, aguiar2017random}\footnote{The model in \citet{aguiar2017random} is a special case of the model in \citet{tversky1972elimination} and coincides with it when the attention-index of only sets that contain the default is nonzero.}), and the full consideration ($\rum$) models admit an attention-index representation such that:
\begin{itemize}
\item $m\in\mathcal{M}^{\la}$ if and only if there exists $\eta\in \Delta(2^X)$ such that for
\[
m_A(D)=\frac{\eta(D)}{\sum_{C\subseteq A}\eta(C)}>0
\]
for all $\A\in\M$ and $D\in 2^{A}$;
\item $m\in\mathcal{M}^{\mm}$ if and only if $m\in\mathcal{M}^{\la}$ with 
\[
\eta(A)=\prod_{a\in X\setminus A}\left(1-\gamma(a)\right)\prod_{b\in \A}\gamma(b)
\]
for a given $\gamma:X\to (0,1)$ and for all $A\in 2^{X}$;
\item $m\in\mathcal{M}^{\rcg}$ if and only if there exists $\eta\in \Delta(2^X)$ such that 
\[
m_A(D)=\sum_{C:C\cap A=D}\eta(C)
\]
for all $\A\in\M$ and $D\in 2^{A}$;
\item $m\in\mathcal{M}^{\rum}$ if and only if $m\in \mathcal{M}^{\rcg}$ with
\[
\eta(A)=\Char{A=X}.
\]
\end{itemize}
\end{definition}
The $\mm$-rule imposes independence in consideration across items, making the model highly stylized and tractable. Both the $\la$- and the $\rcg$-rules generalize this item-independence to allow for substitution and complementarity of attention between different items. The $\la$-rule predicts that the probability of considering $D$ is proportional to the attention-index value $\eta(D)$. The $\rcg$-rule predicts that the probability of considering a subset $D$ given menu $A$ is equal to the probability that $D$ is the intersection of the subset of alternatives (considered randomly using the attention-index) and the choice set.\footnote{Note that for the $\la$-model $\psi^{\la}(x,y)=x/(x+y)$ and $g^{\la}(D,A)=\{C\in 2^X\setminus{D}\::\:C\cap A=C\}$, and for the $\rcg$-model $\psi^{\rcg}(x,y)=x+y$ and $g^{\rcg}(D,A)=\{C\in 2^X\setminus{D}\::\:C\cap A=D\}$.} In our application, we focus on these four particular models because they allow us to learn about the true data generating process governing our experimental application systematically. \footnote{$\la$ and $\rcg$ are completely distinct generalizations of $\mm$. Thus, rejecting or accepting either of these models is very informative.} However, our approach extends to any model of consideration that admits an attention-index representation.\footnote{For instance, one can generate a continuum of such models by taking all possible convex combinations of consideration rules $m^\la$ and $m^\rcg$ that are generated by the same attention-index $\eta$.}
\par
Given the definition of the attention-index representation, we can define a restricted version of the $\hrc$-rule ($\rhrc$-rule).
Let $\mathcal{M}^{\rc}$ be the set of all consideration rules induced by a given link function $\psi^{\rc}$, the correspondence $g^{\rc}$, and all attention-indices $\eta$
\[
\mathcal{M}^{\rc}=\left\{m\::\:m_{A}(D)=\psi^{\rc}\left(\eta(D),\sum_{C\in g^{\rc}(D,A)}\eta(C)\right)\text{ for some }\eta\in\Delta(2^{X})\right\}.
\]
\begin{definition} [$\rhrc$-rule] \label{def:Restricted HRC} 
A complete stochastic choice rule $P$ is $\rhrc$-rule if $P$ is a $\hrc$-rule with $m\in\mathcal{M}^{\rc}$.
\end{definition}
We believe that random consideration rules that admit an attention-index representation have several theoretical advantages over other generalizations in the literature.  
First, they cover, as special cases, well-known models of consideration and allow us to introduce a new class of semiparametric models of limited consideration with heterogeneous preferences. In other words, the $\rhrc$-rule unifies existing models into a common structure. This structure helps us to understand the common traits of these models, to classify them, and to provide an identification and testing framework to possibly new models covered in this structure. 
\par
Second, they enable the unique identification of the consideration rule and the underlying stochastic choice full consideration probability from a cross-section of choices with menu variation and heterogeneous preference (see Theorem~\ref{thm:identification}). We are not aware of any other work that achieves the same and is more general than ours.
Allowing heterogeneous preferences is important in analysis of any datasets, and is unavoidable in cross-sections. An important alternative generalization is RAM, which imposes only a shape constraint on random consideration. However, it does not allow heterogeneous preferences, nor it obtains point identification of consideration or the underlying full consideration choice rule. Moreover, RAM is completely uninformative about preferences, when the stochastic choice rule is regular,\footnote{Regular random choice means that $p(a,A)\geq p(a,B)$ for any two menus $A\subseteq B$ and any $a\in A$.} as in the random utility models.\footnote{\citet{kashaev2021random} extend RAM to allow for heterogeneous preferences and  show that when the stochastic choice rule is regular nothing can be said as well about random consideration and preferences even under independence among them.}  Also, the $\rhrc$-rule allows cycles of probabilities that can violate regularity while RAM cannot (see Appendix~\ref{subsectionapp:comparison_RAM}). 
\par
Third, the concept of attention-index is intuitive, simple, and of behavioral interest, as it provides a useful (unconditional) index of attention for any subset of alternatives. The link function and the attention-index follow the tradition of classical stochastic choice theory of simple scalability where the probability of choice of an alternative in a menu is a nonlinear function (a link function) of some scale/index \citep{krantz1965scaling,tversky1972elimination}. In the simple scalability tradition, the scale/index captures the intensity of the stimuli associated with a particular alternative. A classical example is the logit model of choice. Here, we apply the same intuition to consideration--the attention-index captures the net attractiveness of a consideration set.
\par
Fourth, consideration rules that admit an attention-index representation are compatible with optimal random consideration of a representative DM.
Here, we show that a $\rhrc$-rule can be obtained as the result of allocating attention optimally (see Example~\ref{ex: costly attention}).
\par
In addition, consideration rules that admit an attention-index representation are compatible with a flexible interpretation of randomness. In our framework, the randomness due to limited consideration can arise both at the individual and population level. Indeed, consideration can be random at the individual level and independent and identically distributed (i.i.d.) at the population level; and consideration can be deterministic at the individual level but heterogeneous at the population level.
The next example shows how the latter case can be described by a $\rhrc$-rule.
\begin{example}[Heterogeneous Categorization]
Consider a population of DMs with two types of agents endowed with different deterministic attention rules. Assume that half of the DMs are fully attentive, while the other half follows a rule of thumb: they pay full attention to option $b$ if it is present in a given choice set, else the consideration set is empty. The DMs pick the best alternative, according to the (heterogeneous) preference realization governed by some $\pi\in \Delta(R(X))$. If the consideration set is empty, then the outside option is selected. This population has heterogeneous (deterministic) consideration, however, the population behavior can be fully captured by a random consideration rule with the $\rcg$ restriction. Namely, let $\eta(X)=\frac{1}{2}$ and $\eta(\{b\})=\frac{1}{2}$. Then the $\rcg\text{-}\hrc$-rule can describe this population behavior.
\end{example}

The next example demonstrates that the consideration rules that admit an attention-index representation can be derived as a solution to the problem where representative DMs optimally allocate their attention.
\begin{example}[Costly Attention Allocation]\label{ex: costly attention}
Consider a (representative) DM  whose preferences are given by a (mean) utility $u:X\to\Real$ and additive random vector $\rands{\xi}=(\rands{\xi}_x)_{x\in X}$ such that the random utility of a given item $x$ is given by $u(x)+\rands{\xi}_x$. The taste shocks $\rands{\xi}$ are distributed with respect to some continuous distribution.\footnote{This guarantees that the implied random utility rule $\pi$ exists because ties have zero probability.}
When the DM is faced with a menu $A$, she needs to allocate her attention, measured by $m_A\in \Delta(2^A)$, over all possible consideration sets in $A$ (including the empty set). The attractiveness of a set $D$ is captured by the McFadden's surplus of a given set defined by $\alpha(D)=\Exp{\max_{x\in D} \left(u(x)+\rands{\xi}_x\right)}$ for all $D\subseteq 2^A\setminus\{\emptyset\}$, where the expectation is taken with respect to $\rands{\xi}$. The attractiveness of the empty set is normalized to be $0$, $\alpha(\emptyset)=0$. The surplus $\alpha(D)$ is a measure of average attractiveness, capturing how enticing a consideration set is for the representative DM. 
The difficulty of picking a consideration set, or the cost of attention, is captured by a cognitive cost function $K:[0,1]\to\Real\cup\{\infty\}$. If $D$ is considered with probability $m(D)$, then the cognitive cost is $K(m(D))$. The cost function is menu independent, but depends on the allocated attention $m(D)$. Following \citet{fudenberg2015stochastic}, we assume that $K$ is convex. In this case, DM's problem is to find $m_A\in \Delta(2^A)$ that maximizes the expected attractiveness of the menu, given the cognitive cost of processing it. Formally,
\[
m_A=\argmax_{m\in \Delta(2^A)} \sum_{D\subseteq A}[m(D)\alpha(D)- K(m(D)) ].
\]
When $K(t)=0$ for all $t\in[0,1]$, the solution is such that $m_A(A)=1$. That is, the DM is consistent with full consideration ($\rum$). When $K(t)= -t\log (t)/\theta$, where $\theta$ is the cost parameter, we get that
\[
m_A(D)=\frac{\exp{(\theta \alpha(D))}}{\sum_{C\subseteq A} \exp{(\theta \alpha(C))}}.
\]
That is, optimal consideration in this case is consistent with $\la$ with the attention-index $\eta(D)=\exp{(\theta \alpha(D))}$. Note that in this definition of $\eta$, the independence assumption is still satisfied since $\eta(D)$ is an aggregate quantity and only depends on the mean utilities $(u(x))_{x\in X}$ and the distribution of $\rands{\xi}$ but not on the realizations of $\rands{\xi}$ (e.g., when $\rands{\xi}$ follows the Gumbel distribution, then $\alpha(D)=\log \sum_{x\in D}e^{u(x)}$). Of course, we can replace $\alpha(D)$ with any other measure of attractiveness and all our derivations will go through. This means we can generate any model consistent with $\la$ this way. Finally, when $K(t)=\frac{1}{2} t^2$, we can get a special case of the $\rcg$ model.\footnote{The $\rcg$ model then requires that the DM does not re-optimize in smaller menus $A\subset X$. Instead, she uses the heuristic that whatever category she draws at $X$ it is intersected with the given menu to obtain the consideration set.} 
\end{example}
\par
Finally, the attention-index representation also has several econometric
advantages. It allows for a significant reduction of the dimensionality of the consideration set rules, making them tractable. In general, the number of parameters controlling the random consideration is $\sum_{A\subseteq X}2^{\abs{A}}-1$. The single-attention index and the link function reduce the number of unknown parameters to $2^{\abs{X}}-1$. It also leads to statistical testing (i.e. we can take into account sample variability) of known models of random consideration in a cross-section dataset of choices (see Section~\ref{section:testing}). Thus, one can confront existing models to experimental datasets in a competitive fashion to guide the exploration of models of consideration sets and to inform which models are more successful.

\subsection{Characterization and Identification of the \texorpdfstring{$\rhrc$}{Link-Behavioral}-model}\label{subsection:rhrc characterization}
In this section, we answer the following questions: (i) When can we recover different consideration rules from the data? (ii) What are their observable implications? 
We answer these questions by \textit{decomposing} the observed probabilities of choice $P$ into an attention rule $m$ and a distribution of preferences $\pi$. In other words, we recover from the dataset $P$ the primitives of the $\hrc$-rule that generated it, and provide necessary and sufficient conditions that guarantee that a dataset $P$ can be generated by a $\rhrc$-rule. 
\par
Our starting point is to exploit the fact that if a consideration rule $m$ admits an attention-index representation, then the probability of choosing the default alternative is completely determined by the attention-index $\eta$. In particular, the probability of choosing the outside option is independent of the distribution of preferences due to the independence assumption that we have imposed between preferences and consideration. In addition, recall that in our model the outside option is only chosen when nothing else in the menu is considered.\footnote{For an extension that relaxes this assumption see Appendix~\ref{appendix:sensitivitydefault}.} If we denote $p_o=(p(o,A))_{A\in \M}$ and $\psi_{\emptyset}(\eta)=\left(\psi\left(\eta(\emptyset),\sum_{C\in g(\emptyset,A)}\eta(C)\right)\right)_{A\in\M}$, abusing notation we can write the system of equations 
\[
p_o=\psi_{\emptyset}(\eta).
\]
When $\psi_{\emptyset}$ is \textit{invertible}, we can uniquely recover the random consideration rule from the probability of choosing the outside alternative from all different menus. 
Since our objective is the identification of the consideration rule, we provide a natural restriction on the attention-index rule that guarantees invertibility of $\psi$. This restriction is satisfied by the models of interest in this paper (but not restricted to them). 
\begin{definition}[Totally monotone consideration]\label{def:totallymonotoneconsideration}
A consideration rule $m$ admitting an attention-index representation is \textit{totally monotone} if, we can write, for all $A\in\M$
\[
m_{A}(\emptyset)=\varphi\left(\eta(\emptyset),\sum_{C\subseteq A} \eta(C)\right),
\]
where $\varphi:[0,1]\times [0,1]\to[0,1]$ is a strictly monotone in each argument function.
\end{definition}
The probability of not considering any object, conditional on a given choice set, is assumed to be a monotonic function of the cumulative probability of paying attention to at least some alternative in the menu (according to $\eta$), and of the probability of not considering anything (unconditionally). Crucially, the dependence on the correspondence $g$ disappears in a totally monotone attention-index representation, with respect to the general attention-index representation. Note, this happens only for the case of  not considering anything (i.e., the consideration set is $\emptyset$).  
\par
Totally monotone attention-index rules are such that the random consideration is monotone as in \citet{cattaneo2017random}, namely $m_{A}(\emptyset)\leq m_{B}(\emptyset)$ if $B\subseteq A$, when $\varphi$ is strictly increasing in the first entry. However, in this case, they imply more. Since, the mapping $m_{(\cdot)}(\emptyset):\M\to [0,1]$ is a function of the cumulative probability of considering at least one item in any given menu (i.e., a function of $\sum_{C\subseteq A} \eta(C)$), the behavior of the probability of choosing the outside option will be restricted. For instance, for the case of $\rcg$ it will satisfy a form of marginal decreasing propensity of choice \citep{aguiar2017random}.
\par 
We highlight that total monotonicity is testable. Observe that the marginal probability of choosing the outside option, when a new set of alternatives is added to a menu, is weakly decreasing (e.g., $\Delta_{C}p(o,A)=p(o,A\cup C)-p(o,A)\leq 0$ and $\Delta_{D}(\Delta_{C}p(o,A)\leq0$)).\footnote{Note, however, that this restriction does not mean that $m$ satisfies the monotonicity property for other menus different from the empty set.} Alternatively, $\varphi$ can be strictly decreasing in the first entry, in which case it provides an antithetic behavior to that of RAM (yet testable). In this sense, this restriction on attention is neither weaker nor stronger than the monotonicity restriction in \citet{cattaneo2017random}. 
\par 
Strict monotonicity of $\varphi$ in each of its entries implies the invertibility of $\psi$.\footnote{This is a consequence of Mobius invertibility of the mapping $v(\cdot)=\sum_{C\subseteq \cdot}\eta (C)$ \citep{chateauneuf1989some}.} This sufficient condition for invertibility of the link function $\psi$ is mild. It holds in all the examples of interest in this work. Importantly, it is a testable restriction. Note that since the inverse of $\psi$ is known under the model of interest, we can compute a candidate $\eta$ from the data $P$. If the computed $\eta$ is not an element of the simplex $\Delta(2^X)$ then $\psi$ is not invertible. 
\par
The next lemma shows that the models we consider admit a totally monotone attention-index representation.
\begin{lemma}
Any $m\in\mathcal{M}^\rc$, $\rc\in\{\la,\mm,\rcg,\rum\}$, admits a totally monotone attention-index representation with 
\begin{itemize}
    \item $\varphi^{\la}(\eta_o,t)=\frac{\eta_o}{t}$; 
    \item $\varphi^{\rcg}(\eta_o,t)=1-t+\eta_{o}$;
    \item $\varphi^{\mm}(\eta_o,t)=\varphi^{\la}(\eta_o,t)$ and $\varphi^{\mm}(\eta_o,t)=\varphi^{\rcg}(\eta_o,t)$;
    \item $\varphi^{\rum}(\eta_o,t)=\varphi^{\rcg}(\eta_o,t)$.
\end{itemize}
\end{lemma}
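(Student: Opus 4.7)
The proof is a case-by-case verification across the four link functions $\rc \in \{\la, \mm, \rcg, \rum\}$. For each case my strategy is uniform: (i) evaluate $m_A(\emptyset) = \psi^{\rc}_{A, \emptyset}(\eta)$ directly from the model definition, (ii) rewrite the result as a function of only $t := \sum_{C \subseteq A}\eta(C)$ and $\eta_o := \eta(\emptyset)$ to match the claimed formula for $\varphi^{\rc}$, and (iii) verify strict monotonicity of $\varphi^{\rc}$ in each coordinate by direct inspection of the partial derivatives.

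For $\la$, step (i) is immediate: setting $D = \emptyset$ in $m_A(D) = \eta(D)/\sum_{C\subseteq A}\eta(C)$ yields $m_A(\emptyset) = \eta_o/t = \varphi^{\la}(t,\eta_o)$, and the partials $\partial_t \varphi^{\la} = -\eta_o/t^2 < 0$ and $\partial_{\eta_o}\varphi^{\la} = 1/t > 0$ deliver strict monotonicity. For $\mm$, the inclusion $\mathcal{M}^{\mm} \subseteq \mathcal{M}^{\la}$ is built into the definition, so $\varphi^{\la}$ represents $\mm$. The additional claim that $\varphi^{\rcg}$ also represents $\mm$ first requires establishing $\mathcal{M}^{\mm} \subseteq \mathcal{M}^{\rcg}$; I would do this by substituting the product measure $\eta(C) = \prod_{a \in C}\gamma(a)\prod_{b \in X\setminus C}(1-\gamma(b))$ into $\sum_{C: C \cap A = D}\eta(C)$ and using the disjoint decomposition $X = D \cup (A\setminus D) \cup (X\setminus A)$ to factor the sum into independent Bernoulli products (each collapsing to $1$), which recovers the standard $\mm$ closed form $\prod_{a\in D}\gamma(a)\prod_{b\in A\setminus D}(1-\gamma(b))$ for arbitrary $D\subseteq A$, and then reading off $m_A(\emptyset)$.

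For $\rcg$, step (i) yields $m_A(\emptyset) = \sum_{C: C\cap A = \emptyset}\eta(C) = \sum_{C\subseteq X\setminus A}\eta(C)$, and the substantive step is (ii): recognizing this complementary cumulative as $1 - t + \eta_o$. This rests on the identity $\sum_{C\subseteq A}\eta(C) + \sum_{C\subseteq X\setminus A}\eta(C) = 1 + \eta(\emptyset)$, which I would derive by partitioning $2^X$ into subsets of $A$ and of $X\setminus A$ (with $\emptyset$ double-counted) and appealing to the Mobius re-parametrization discussed by the authors when they invoke Mobius invertibility of $v(\cdot) = \sum_{C\subseteq\cdot}\eta(C)$. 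Strict monotonicity of $\varphi^{\rcg}(t,\eta_o) = 1 - t + \eta_o$ follows from $\partial_t\varphi^{\rcg} = -1$ and $\partial_{\eta_o}\varphi^{\rcg} = 1$. For $\rum$, which is the $\rcg$ rule specialized to $\eta = \delta_X$, the representation and monotonicity carry over from the $\rcg$ case with no extra work.

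The main obstacle is step (ii) in the $\rcg$ case: ensuring that $m_A(\emptyset)$ really depends on $\eta$ only through the two scalars $t$ and $\eta_o$ via the claimed linear formula, rather than through the genuinely different cumulative $\sum_{C\subseteq X\setminus A}\eta(C)$. All remaining verifications for $\la$, $\mm$, and $\rum$ are mechanical consequences of the definitions and the product-measure factorization, and carry essentially no content beyond substitution.
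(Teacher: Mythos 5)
Your $\la$ case and the inclusion $\mathcal{M}^{\mm}\subseteq\mathcal{M}^{\la}$ (hence $\varphi^{\mm}=\varphi^{\la}$) are fine, and the product-measure factorization showing $\mathcal{M}^{\mm}\subseteq\mathcal{M}^{\rcg}$ is also correct. The paper omits its own proof, so there is nothing to compare line by line; but your treatment of the step you yourself flag as the only substantive one --- the $\rcg$ case --- does not go through. The identity $\sum_{C\subseteq A}\eta(C)+\sum_{C\subseteq X\setminus A}\eta(C)=1+\eta(\emptyset)$ is false in general: the subsets of $A$ together with the subsets of $X\setminus A$ do \emph{not} exhaust $2^X$, since every $C$ that meets both $A$ and $X\setminus A$ belongs to neither class, so this is not a partition. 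Concretely, with $X=\{a,b\}$ and $A=\{a\}$ one has $m_A(\emptyset)=\eta(\emptyset)+\eta(\{b\})$ under $\rcg$, while $1-\sum_{C\subseteq A}\eta(C)+\eta(\emptyset)=\eta(\emptyset)+\eta(\{b\})+\eta(\{a,b\})$; these differ whenever $\eta(\{a,b\})>0$. The same problem sinks the ``no extra work'' claim for $\rum$: with $\eta(C)=\Char{C=X}$ one gets $m_A(\emptyset)=0$ for every nonempty $A$, yet $1-\sum_{C\subseteq A}\eta(C)+\eta(\emptyset)=1$ for every $A\neq X$. It likewise defeats a literal reading of $\varphi^{\mm}=\varphi^{\rcg}$: for $X=\{a,b\}$, $A=\{a\}$, $\gamma\equiv 1/2$, the $\mm$ value of $m_A(\emptyset)$ is $1/2$ while $1-t+\eta_o=3/4$.

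The structural fact you need is that for $\rcg$ the outside-option probability is $m_A(\emptyset)=\sum_{C:\,C\cap A=\emptyset}\eta(C)=\sum_{C\subseteq X\setminus A}\eta(C)$, i.e.\ the cumulative attention index of the \emph{complement} menu, which is not an affine function of $\sum_{C\subseteq A}\eta(C)$ and $\eta(\emptyset)$ alone. This is exactly what the paper's own calibration formula for $\rcg$ in Definition~\ref{def:calibrated m ex} reflects: it Mobius-inverts the relation $p(o,X\setminus B)=\sum_{C\subseteq B}\eta(C)$, i.e.\ it uses choice data from complementary menus, whereas the $\la$ calibration inverts $p(o,B)=\eta(\emptyset)/\sum_{C\subseteq B}\eta(C)$ directly. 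A correct proof of the $\rcg$ and $\rum$ cases must therefore either take the first argument of $\varphi^{\rcg}$ to be $\sum_{C:\,C\cap A\neq\emptyset}\eta(C)$ (the ``cumulative probability of paying attention to at least some alternative in the menu,'' as the surrounding prose says) with the $\eta(\emptyset)$ bookkeeping adjusted, or route total monotonicity for $\rcg$ through $X\setminus A$. Your linear formula in $\sum_{C\subseteq A}\eta(C)$ cannot be repaired by a counting argument, because the omitted mass $\sum_{C:\,C\cap A\neq\emptyset,\ C\not\subseteq A}\eta(C)$ depends on $A$ through more than the two scalars $t$ and $\eta_o$.
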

The proof is omitted because of its simplicity for the cases of $\la$, $\rcg$, and $\rum$. For the case of $\mm$ the statement follows from \citet{brady2016menu} and \citet{aguiar2017random}.\footnote{Note that since $\mm$ and $\rum$ are special cases of $\la$ and $\rcg$ respectively, they share the same link function. However, empirically we will be able to differentiate among them because of the additional restrictions they pose on $\eta$.} 
\par 
The key assumption in this section has been that the default that is always present is always dominated. We formulate a sensitivity analysis when the default is not dominated in Appendix~\ref{appendix:sensitivitydefault}.
%%%%%%%%%%%%%%%%%%%%%%%%%%%%%%%%%%%%%%%%%%%%%%%%%%%%%%%
\subsection{Characterization of the \texorpdfstring{$\rhrc$}{Link-Behavioral}-model}
As a preliminary step for characterizing the $\rhrc$-model, we construct a candidate calibrated attention-index $\eta^\rc$ from the data $P$. Informally, this calibrated (revealed) attention-index is the result of inverting the link function $\varphi$ with respect to the probability of choosing the default alternative. The link function invertibility is a consequence of the monotonicity assumptions and the existence of a unique Mobius inverse of the cumulative attention-index $v(\cdot)=\sum_{C\subseteq \cdot }\eta(C)$ \citep{chateauneuf1989some}. We do this recursively. For a given $\varphi^{\rc}$, let $\varphi^{-1,\rc}_{1}$ and $\varphi^{-1,\rc}_{2}$ be the inverses of $\varphi^{\rc}$ with respect to the first and the second argument, respectively. Let $\abs{A}$ denote the cardinality of a finite set $A$. 

\begin{definition}[Calibrated attention-index] \label{def:calibrated eta} 
For given $P$, $\eta^{\rc}:2^X\to \Real$ is such that
(i) $\eta^{\rc}(\emptyset)=\varphi^{-1,\rc}_{1}(p(o,X),1)$, and (ii) for all $D\in 2^X\setminus X$
\[
\eta^{\rc}(D)=\sum_{B\subseteq D} (-1)^{\abs{D\setminus{B}}}\varphi^{-1,\rc}_{2}(\eta^{\rc}(\emptyset),p(o,D)).
\]
\end{definition}
The calibrated attention-index depends only on the dataset $P$ and the model $\rc$. If the calibrated attention-index of a set is negative, then $P$ could not have been generated by model $\rc$. This testable implication is analogous to the \citet{block_random_1960} inequalities. 
\par 
Now, we construct an object, $m^\rc$, that is a distribution over consideration sets if the model $\rc$ is correctly specified. 
\begin{definition}\label{def:calibrated m}
For a given $P$, let $m^{\rc}=(m_A^{\rc}(D))_{A\in\M,D\in 2^{A}}$, where $m^{\rc}_\A:2^{\A}\to\Real$ is such that for all $\A\in\M$ and $D\in 2^{A}$
\[
m_{A}^{\rc}(D)=\psi^\rc\left(\eta^\rc(D),\sum_{C\in g^\rc(D,A)}\eta^\rc(C)\right).
\]
\end{definition}
We can apply this generic formula for totally monotone attention-index rules to the specific models of interest.
\begin{example}\label{def:calibrated m ex}
\begin{itemize}
\item $m^{\la}_A(D)=\frac{\eta^{\la}(D)}{\sum_{C\subseteq A}\eta^{\la}(C)}$,
where $\eta^{\la}(D)=\sum_{B\subseteq D}(-1)^{\abs{D\setminus B}}\frac{p(o,X)}{p(o,B)}$;
\item $m^{\mm}_A(D)=\frac{\eta^{\mm}(D)}{\sum_{C\subseteq A}\eta^{\mm}(C)}$, where $\eta^{\mm}(D)=\prod_{a\in X\setminus D}\left(1-\gamma^{\mm}(a)\right)\prod_{b\in D}\gamma^{\mm}(b)$, and $\gamma^{\mm}:X\rightarrow \Real$ such that $\gamma^{\mm}(a)=1-\frac{p(o,A)}{p(o,A\setminus{\{a\}})}$ for some $A\in \M$ that contains $a$;
\item $m^{\rcg}_A(D)=\sum_{C:C\cap A=D}\eta^{\rcg}(C)$, where $\eta^{\rcg}(D)=\sum_{A\subseteq D:D\in \M}(-1)^{\abs{D\setminus{A}}} p(o,X\setminus{A})$;
\item $m^{\rum}_{A}(D)=\Char{A=D}$.
\end{itemize}
\end{example}
In general, $m^{\rc}$ may not be a distribution (some components may be negative or greater than $1$) since $m^{\rc}$ is calibrated from observed frequencies. Moreover, $m^{\la}$ or $m^{\mm}$ may not be well-defined if probabilities of choosing the outside option for some choice sets are zero. 

To be able to estimate $m^{\rc}$ from the data with probability approaching 1, we need the following definition that formalizes the above discussion.

\begin{definition} [Well-defined $m^{\rc}$]  $m^{\rc}$ is well-defined if $m_A^{\rc}\in \Delta(2^A)$ for all $A\in \M$.  
\end{definition}

We are ready to state our main result. 
\begin{theorem}\label{thm:calibration of m}
For every link function $\rc$, the following are equivalent: 
\begin{enumerate}
    \item $P$ is a $\rhrc$-rule; 
    \item $m^{\rc}$ is a well-defined and $P$ is a $\hrc$-rule described by $(m^{\rc},\pi)$.
\end{enumerate}
\end{theorem}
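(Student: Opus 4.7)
The argument splits into the two implications, with the substantive content living in $(1)\Rightarrow(2)$.

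For the easy direction $(2)\Rightarrow(1)$, the plan is simply to read off the definitions. Assuming $m^{\rc}$ is a well-defined collection of distributions, I will verify that the attention-index $\eta^{\rc}$ built by Definition~\ref{def:calibrated eta} actually lies in $\Delta(2^X)$ (a short direct check for each of the four link functions: e.g.\ for $\rcg$ one gets $m_X^{\rc}(D)=\eta^{\rcg}(D)$ immediately, and for $\la$ a double-sum rearrangement yields $\sum_D\eta^{\la}(D)=p(o,X)/p(o,X)=1$). Once $\eta^{\rc}\in\Delta(2^X)$, we have $m^{\rc}=\psi^{\rc}(\eta^{\rc})\in\mathcal{M}^{\rc}$, and since $(m^{\rc},\pi)$ is by hypothesis an $\hrc$-representation of $P$, Definition~\ref{def:Restricted HRC} gives that $P$ is a $\rhrc$-rule.

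For $(1)\Rightarrow(2)$, the plan rests on one observation: the default is chosen exactly when the consideration set is empty, which does not depend on the preference draw. Hence, if $(m,\pi)$ with $m\in\mathcal{M}^{\rc}$ witnesses that $P$ is a $\rhrc$-rule, then for every $A\in\M$
\[
p(o,A)=\sum_{\succ\in R(X)}\pi(\succ)\,m_A(\emptyset)=m_A(\emptyset)=\psi^{\rc}_{A,\emptyset}(\eta)=\varphi^{\rc}\!\left(\textstyle\sum_{C\subseteq A}\eta(C),\,\eta(\emptyset)\right).
\]
I will next invert this relation. Evaluating at $A=X$ gives $\varphi^{\rc}(1,\eta(\emptyset))=p(o,X)$, and the total monotonicity of $\varphi^{\rc}$ in its second entry yields $\eta(\emptyset)=\varphi^{-1,\rc}_{2}(1,p(o,X))$, matching the first clause of Definition~\ref{def:calibrated eta}. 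With $\eta(\emptyset)$ fixed, invertibility in the first entry produces $v(A):=\sum_{C\subseteq A}\eta(C)=\varphi^{-1,\rc}_{1}(p(o,A),\eta(\emptyset))$ for each $A$.

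The main technical step is the Möbius-inversion argument on the Boolean lattice $2^X$: because $v(A)=\sum_{C\subseteq A}\eta(C)$ is a zeta transform of $\eta$, Möbius inversion gives
\[
\eta(A)=\sum_{B\subseteq A}(-1)^{|A\setminus B|}\,v(B)=\sum_{B\subseteq A}(-1)^{|A\setminus B|}\varphi^{-1,\rc}_{1}(p(o,B),\eta(\emptyset)),
\]
which is exactly $\eta^{\rc}(A)$ from Definition~\ref{def:calibrated eta}. Thus the true $\eta$ equals the calibrated $\eta^{\rc}$, so $m=\psi^{\rc}(\eta)=\psi^{\rc}(\eta^{\rc})=m^{\rc}$, and $m^{\rc}$ is well-defined because $m$ already was. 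Finally, the same pair $(m^{\rc},\pi)$ that witnessed $P$ as a $\rhrc$-rule now witnesses the $\hrc$-rule representation demanded by (2). The delicate point throughout is ensuring that the invertibility of $\varphi^{\rc}$ in each coordinate (total monotonicity) indeed pins down $\eta$ uniquely from the single slice $(p(o,A))_{A\in\M}$; this is the step where Definition~\ref{def:calibrated eta}'s recursion (first fix $\eta(\emptyset)$ from $A=X$, then Möbius-invert the remaining values) is used in an essential way.
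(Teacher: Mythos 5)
Your proposal is correct and follows essentially the same route as the paper's proof: both rest on the observation that $p(o,A)=m_A(\emptyset)=\varphi^{\rc}\bigl(\sum_{C\subseteq A}\eta(C),\eta(\emptyset)\bigr)$ is preference-free, then pin down $\eta(\emptyset)$ from the grand set and recover $\eta=\eta^{\rc}$ by M\"obius inversion of $v(\cdot)=\sum_{C\subseteq\cdot}\eta(C)$, so that $m=\psi^{\rc}(\eta^{\rc})=m^{\rc}$. The only cosmetic differences are that you state the inversion once abstractly for a general totally monotone link function (where the paper additionally spells it out for each of $\la$, $\mm$, $\rcg$, $\rum$ and cites the corresponding uniqueness theorems), and that you make the easy converse direction explicit, which the paper leaves implicit.
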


Theorem~\ref{thm:calibration of m} provides a full characterization of well-defined models with link functions. If $P$ is a $\rhrc$-rule, then $m^{\rc}$ has to be well-defined. Theorem~\ref{thm:calibration of m} implies that to test a given model one does not need to consider all possible distributions over considerations sets. It suffices to check the unique distribution that is calibrated from observed $P$ according to Definition~\ref{def:calibrated m}.
\par
Initially, we had to find two objects (the distribution over preferences $\pi$ and the distribution over consideration sets $m$) to make the data consistent with the model. Now we just need to find $\pi$. In other words, we simplified the testing problem. Unfortunately, the testing problem is still not tractable since the set of all possible distributions over preferences $\Delta(R(X))$ is big. To solve this problem we introduce another fictitious object.
\begin{definition}\label{def:calibrated P}
For given model $\rc$ and $P$, let $P_{\pi}^{\rc}=(p_{\pi}^{\rc}(a,\A))_{A\in\M,a\in X}$, where $p_{\pi}^{\rc}:X\times \M\to\Real$ is such that for all $\A\in\M$ and $a\in\A$
\begin{align*}
p^{\rc}_{\pi}(a,\A)=\dfrac{p(a,\A)-\sum_{C\subset\A}m^{\rc}_\A(C)p^{\rc}_{\pi}(a,C)}{m^{\rc}_{\A}(\A)}.
\end{align*}
\end{definition}
Note that when $P$ has been generated by a $\rhrc$-rule, $P_\pi^{\rc}$ corresponds to the underlying full-consideration random utility rule. In fact, we can write a $\rc\text{-}\hrc$ model equivalently as:
\[
p(a,A)=\sum_{D\subseteq A}m_A(D)p_\pi(a,D),
\]
where $p_\pi(a,A)=\sum_{\succ\in R(X)}\pi(\succ)\Char{a \succ b,\: \forall b\in A}$ is the underlying $\rum$ distribution over (nondefault) choices that is weighted by the random consideration rule $m$ to produce the observed behavior. When $P$ has been generated by this $\rc\text{-}\hrc$-rule, it follows that $p_\pi^{\rc}=p_\pi$. That is why we call $P_{\pi}^{\rc}$ \emph{the calibrated full consideration rule}.
\par
Similar to $m^\rc$, $P_{\pi}^\rc$ has interpretation when the $\rhrc$-rule is consistent with the data. The next theorem provides the last missing piece of our characterization before testing.
\begin{theorem}\label{thm:WRU characterization}
Suppose that for given model $\rc$ and stochastic choice rule $P$ (i) $m^{\rc}$ is well-defined, (ii) $m^{\rc}_{A}(A)>0$ for all $A\in\M$. Then the following are equivalent:
\begin{enumerate}
    \item $P$ is a $\rhrc$-rule;
    \item $P_{\pi}^{\rc}$ is a $\rum\text{-}\hrc$-rule.
\end{enumerate}
\end{theorem}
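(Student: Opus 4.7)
The plan is to exploit the recursive definition of $P_\pi^{\rc}$ in Definition~\ref{def:calibrated P}, which under hypothesis (ii) can be inverted into the forward identity
\[
p(a,A)=\sum_{D\subseteq A}m_A^{\rc}(D)\,p_\pi^{\rc}(a,D),\qquad a\in A,\:A\in\M.
\]
This identity holds tautologically from rearranging Definition~\ref{def:calibrated P} (the $D=\emptyset$ summand vanishes because $p_\pi^{\rc}(a,\emptyset)=0$ for $a\in X$, and (ii) legitimates the division by $m_A^{\rc}(A)$). It is the technical backbone of both directions, so I would state it once and reuse it.

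For the implication $2\Rightarrow 1$, suppose $P_\pi^{\rc}$ is a $\rum\text{-}\hrc$-rule, so there is $\pi\in\Delta(R(X))$ with $p_\pi^{\rc}(a,D)=\sum_{\succ}\pi(\succ)\Char{a\succ b,\:\forall b\in D}$. Substituting into the boxed identity yields
\[
p(a,A)=\sum_{\succ}\pi(\succ)\sum_{D\subseteq A}m_A^{\rc}(D)\,\Char{a\succ b,\:\forall b\in D},
\]
which is precisely the $\hrc$-representation with pair $(m^{\rc},\pi)$. Since $m^{\rc}$ is well-defined by (i) and belongs to $\mathcal{M}^{\rc}$ by construction through $\psi^{\rc}$ applied to $\eta^{\rc}$, Theorem~\ref{thm:calibration of m} (or just the definition of $\rhrc$-rule) ensures that $P$ is a $\rhrc$-rule.

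For the implication $1\Rightarrow 2$, suppose $P$ is a $\rhrc$-rule. By Theorem~\ref{thm:calibration of m}, $P$ admits an $\hrc$-representation with the specific consideration rule $m^{\rc}$ and some $\pi\in\Delta(R(X))$. Define the candidate $\rum$-$\hrc$ rule $\tilde p(a,A)=\sum_{\succ}\pi(\succ)\Char{a\succ b,\:\forall b\in A}$; by the $\hrc$-representation,
\[
p(a,A)=\sum_{D\subseteq A}m_A^{\rc}(D)\,\tilde p(a,D).
\]
I would then show $p_\pi^{\rc}=\tilde p$ by strong induction on $|A|$. The base case $|A|=1$ reduces to $p(a,\{a\})=m^{\rc}_{\{a\}}(\{a\})\cdot 1$, whence $p_\pi^{\rc}(a,\{a\})=1=\tilde p(a,\{a\})$ using (ii). For the inductive step, fix $A$, assume the equality for all proper subsets, isolate the $D=A$ summand from the above display, and use (ii) to divide by $m_A^{\rc}(A)>0$; this yields exactly the defining recursion of $p_\pi^{\rc}(a,A)$ with $\tilde p$ in place of $p_\pi^{\rc}$ on the right-hand side, so $\tilde p(a,A)=p_\pi^{\rc}(a,A)$. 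Hence $P_\pi^{\rc}=\tilde p$ is a $\rum\text{-}\hrc$-rule.

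The main conceptual obstacle is not the algebra but the bookkeeping around the degenerate sets: one must verify that the convention $p_\pi^{\rc}(a,\emptyset)=0$ for $a\in X$ matches the vacuous-indicator convention used in Definition~\ref{def:HRC-rule} (so that probabilities across $X\cup\{o\}$ still sum to one) and that (ii) is used in exactly the right place to invert the triangular system of equations indexed by subsets of $A$. Everything else — the forward substitution and the induction — is essentially forced once the recursion of Definition~\ref{def:calibrated P} is read as a Mobius-type inversion of the representation $p=\sum_D m^{\rc}(D)\tilde p(\cdot,D)$.
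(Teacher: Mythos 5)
Your proposal is correct and follows essentially the same route as the paper's proof: both directions hinge on reading Definition~\ref{def:calibrated P} as the forward identity $p(a,A)=\sum_{D\subseteq A}m_A^{\rc}(D)p_\pi^{\rc}(a,D)$, invoking Theorem~\ref{thm:calibration of m} to fix $m=m^{\rc}$ for $1\Rightarrow 2$, and substituting the $\rum$ representation into that identity for $2\Rightarrow 1$. Your explicit strong induction on $\abs{A}$ merely spells out what the paper compresses into ``following the recursive formula, we can show that,'' so there is nothing to add or correct.
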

Note that both $m^{\rc}\geq 0$ and $P_{\pi}^{\rc}$ can be computed from $P$. Thus, Theorem~\ref{thm:WRU characterization} implies that to test a given model $\rc$ it is necessary and sufficient to test whether $m^{\rc}$ is well defined, and whether calibrated $P_{\pi}^{\rc}$ is a full consideration rule. 
Theorems~\ref{thm:calibration of m} and~\ref{thm:WRU characterization} provide a generalization of the characterization results in \citet{manzini2014stochastic, brady2016menu}, and \citet{aguiar2017random}. Moreover, they provides a unified result for all models that admit a (totally monotone) attention-index representation.\footnote{Note that Theorem~\ref{thm:WRU characterization} simplifies the testing problem because it avoids the problem of computing the distribution over choices for every consideration set in every menu. We only need to focus on computing the distribution over choices in each menu.}
\par
In practice, we do not observe $P$, but can consistently estimate it by the collection of sample frequencies $\hat{P}$. In Section~\ref{subsection: test} we discuss how to test the $\rhrc$-rule accounting for sampling variability in $\hat{P}$.

\subsection{Identification} \label{subsection:recoverability}
Assuming independence between the distribution of preferences and the random consideration set rule, we uniquely identify the consideration set rule from $P$ if it is a $\rhrc$-rule, for all models with totally monotone link functions. Moreover, if there is a positive mass of individuals that consider all alternatives in the choice set, the recoverability of preferences is as good as in the case of full consideration. 
 
\begin{theorem}[Identification]\label{thm:identification}
Suppose that for a given model $\rc$ (i) $P$ is a $\rhrc$-rule and (ii) $m^{\rc}_{A}(A)>0$ for all $A\in\M$. If $P$ is described by $(m,\pi)$ and $(m',\pi')$, then $m=m'$ and $p_{\pi}=p_{\pi'}$.
\end{theorem}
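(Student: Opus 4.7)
The plan is to build directly on the two characterization theorems that precede this statement, turning each of them into the corresponding uniqueness claim. Suppose $P$ admits two $\rhrc$-representations $(m,\pi)$ and $(m',\pi')$ sharing the same link function $\rc$. I will first argue $m=m'$ and then $p_\pi=p_{\pi'}$.

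For the consideration rule, Theorem~\ref{thm:calibration of m} says that whenever $P$ is a $\rhrc$-rule, it must be describable by the pair $(m^{\rc},\pi)$ for some $\pi$, where $m^{\rc}$ is the calibrated object of Definition~\ref{def:calibrated m}. Since $m^{\rc}$ is constructed from $P$ alone (through $p(o,\cdot)$ and the inversion of $\varphi^{\rc}$), the same $m^{\rc}$ must be produced by both representations. Hence the first step is to verify that any $(m,\pi)$ generating $P$ satisfies $m=m^{\rc}$. Concretely, under either representation the independence between $\pi$ and $m$ gives $p(o,A)=\varphi^{\rc}(\sum_{C\subseteq A}\eta(C),\eta(\emptyset))$, so totally-monotone invertibility of $\varphi^{\rc}$ pins down $\sum_{C\subseteq A}\eta(C)$ (and $\eta(\emptyset)$) from $P$, hence $\eta$ via M\"obius inversion, and therefore $m$ via the link $\psi^{\rc}$. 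In particular $m=m'=m^{\rc}$.

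Having $m$ identified, I turn to $p_\pi$. Writing the $\rhrc$-rule equivalently as $p(a,A)=\sum_{D\subseteq A}m_A(D)p_\pi(a,D)$ and isolating the term $D=A$,
\[
p_\pi(a,A)=\frac{p(a,A)-\sum_{C\subsetneq A}m_A(C)p_\pi(a,C)}{m_A(A)},
\]
which is legitimate exactly because hypothesis (ii) gives $m^{\rc}_A(A)=m_A(A)>0$ for every $A\in\M$. Starting from singletons (where $p_\pi(a,\{a\})=1$ is forced by $m_{\{a\}}(\{a\})>0$) and proceeding by induction on $|A|$, the right-hand side depends only on $P$ and the already-identified $m^{\rc}$, hence on $P$ alone. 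Therefore $p_\pi=p_{\pi'}=P_\pi^{\rc}$, the calibrated full-consideration rule of Definition~\ref{def:calibrated P}, finishing the argument.

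The easy content is the algebraic recursion; the real work sits in the first step, because uniqueness of $m$ rests on the fact that totally monotone link functions are \emph{invertible on the probability-of-default slice}. That inversion is what was supplied by Theorem~\ref{thm:calibration of m}; once it is granted, the remainder is just a triangular linear system indexed by $|A|$, whose invertibility is equivalent to the full-support condition $m^{\rc}_A(A)>0$. Finally, I would remark that the theorem says nothing stronger than $p_\pi=p_{\pi'}$: the distribution $\pi$ itself is recovered only up to the usual $\rumo$ identification of preference distributions from the full-consideration rule $p_\pi$, so identification of preferences under $\rhrc$ is as tight, and no tighter, than under standard $\rumo$ with full variation in menus.
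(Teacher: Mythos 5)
Your proposal is correct and follows essentially the same route as the paper: uniqueness of $m$ comes from the fact that $p(o,\cdot)$ depends only on the consideration rule (by independence), so strict monotonicity of $\varphi^{\rc}$ plus M\"obius inversion pins down $\eta$ and hence $m=m^{\rc}$; uniqueness of $p_\pi$ then follows from the triangular recursion in Definition~\ref{def:calibrated P}, made valid by $m^{\rc}_A(A)>0$. The only cosmetic differences are that the paper delegates the first step to \cite{chateauneuf1989some}, \cite{brady2016menu}, and \cite{aguiar2017random} rather than spelling out the inversion, and anchors the recursion at binary menus rather than singletons.
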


We underline that we achieve a unique decomposition of the dataset $P$ into its attention and preference components. 
Identification of preferences and consideration rules is not a trivial task. Even for simple datasets where stochastic behavior arises from only one channel (for example limited consideration), models that only allow for stochastic behavior because of preference heterogeneity (e.g. $\rumo$), or because of random consideration/attention without additional assumptions (e.g. RAM) may fail to identify underlying preferences even when they perfectly describe observed choices. Our framework shows that the recoverability of preferences is as good as the $\rumo$ benchmark in stark contrast with RAM, where nothing can be learned about preferences for regular random choice.

\section{Frames and Testing\label{section:frames}}
\subsection{Frames}
In this section, we introduce another source of variation in the data--attention frames. This additional source of variation will allow us to differentiate between behavior consistent with $\rumo$ and behavior consistent with $\rhrc$. Following \citet{salant2008f}, we define the extended choice set as a pair of a choice set $A\in \M$ and an attention frame $f\in F$. In this extended environment, for a given frame $f\in F$, we can define a probabilistic choice rule $p_f$. Similarly, a complete stochastic choice rule with frame $f\in F$ is $P_f=(p_f(a,A))_{A\in\mathcal{A},a\in A\cup\{o\}}$. 
\par
The elements of $F$ contain descriptions of physical items that only vary in presentation but not in the information they contain. Our interpretation of attention frames is the same as \citet{bhattacharya2021frame}. These descriptions are available to DMs and should not affect their preferences, but may influence their attention. 
\par 
\begin{example}[Stochastic choice rule with frames]\label{ex: frame}
Let $X=\{a,b,c\}$ and consider two frames: (i) $f$ that describes $a=1$ token, $b=2$ tokens and $c=3$ tokens; (ii) and $f'$ that describes $a=3-2$ tokens, $b=10-8$ tokens, and $c=2+1$ tokens. An example of an (incomplete) stochastic choice rule with frame $f$ is $P_f=(p_f(a,\{a,b\})=0,p_f(b,\{a,b\})=1,p_f(b,\{b,c\})=0,p_f(c,\{b,c\})=1).$ An example of an (incomplete) stochastic choice rule with frame $f'$ is $P_{f'}=(p_{f'}(a,\{a,b\})=1/2,p_{f'}(b,\{a,b\})=1/2,p_{f'}(b,\{b,c\})=1/2,p_{f'}(c,\{b,c\})=1/2).$
\end{example}
In Example~\ref{ex: frame}, $f$ and $f'$ present the same information about the same alternatives in two distinct ways. In particular, different frames correspond to different numbers of arithmetic operations used to describe the value of the option. The same value is expressed in each frame with a different sequence of arithmetic operations.
\par 
The $\hrc$-rule with frame $f$ can be defined analogously to the definition of the $\hrc$-rule in Section~\ref{section:model}. A complete stochastic choice rule with frame $f$, $P_f$, is a $B$-rule if there exists a frame dependent distribution over random consideration sets, $m_{f}$ and a frame dependent distribution over strict linear orders $\pi_{f}$ such that
\[
p_f(a,A)=\sum_{\succ\in R(X)}\pi_f(\succ)\sum_{D\subseteq A}m_{f,A}(D)\Char{a\succ b,\forall b\in D}.
\]
\par 
We take the stand that preferences should not depend on the way the information is presented. Only attention may change due to frames. We formalize these ideas by the assumption of consequentialism, which is a common implicit assumption in the RUM framework of \citet{mcfadden1990stochastic}. 
\begin{definition}[Consequentialism]\label{def: consequentialism}
 A collection of $\hrc$-rules $(P_f)_{f\in F}$ described by  $((m_f,\pi_f))_{f\in F}$  is said to satisfy consequentialism if there exists $\pi$ such that $\pi_f=\pi$ for all $f\in F$.
\end{definition}
Consequentialism means that an attention frame does not alter the payoffs that a DM obtains from choosing a given alternative. 
In the same way that the standard rational choice framework imposes frame independence \citep{salant2008f}, the classical RUM imposes consequentialism \citep{mcfadden1990stochastic}.
\par
At this point, it is useful to formally define the Random Utility Model ($\rumo$) over the whole choice set $X\cup \{o\}$. $\rumo$ treats the default alternative as just another item with no special status. That is, it is not assumed to be a dominated alternative.
Let $R(X\cup\{o\})$ be a set of linear orders over the extended choice set $X\cup\{o\}$.
\begin{definition}[Random Utility Model, $\rumo$]\label{definition:rumo}
A collection of complete stochastic choice rules with frame  $(P_f)_{f\in F}$ is consistent with random utility if there exists $\pi_o\in \Delta(R(X\cup\{o\}))$ such that
\[
p_f(a,A)=\sum_{\succ\in R(X\cup\{o\})}\pi_o(\succ)\Char{a\succ b,\forall b\in A},
\]
for all $a\in A$, $A\in\M$, and all $f\in F$.
\end{definition} 
Note that $\rumo$ satisfies consequentialism and $\pi_o$ does not depend on the frame. The $\rhrc$-rule extends the RUM framework to allow for frame dependence only through the random consideration rule.
\par
By assuming that preferences satisfy consequentialism, we are essentially making two assumptions: (i) we assume that the distribution of preferences does not depend on how the alternatives are presented; (ii) consideration in turn can depend on payoff irrelevant information. These assumptions are important, since our experiment has three treatments where a subject must perform a different number of arithmetic operations to evaluate the monetary value of a prize. 
\par 
These assumptions are auxiliary and are not necessary for either $\rumo$ or $\rhrc$-rules. However, these assumptions are a reasonable baseline to compare $\rumo$ and random consideration models. The reason these assumptions are not immediately implied is because $\rumo$ is often viewed as descriptive. Taking this one step further, one could assume that the distribution of preferences varies with the difficulty of evaluating a task in our experiment. This would violate our assumption of consequentialism. While frame-dependent preference could be a reasonable assumption in some settings, it would limit the ability to predict counterfactual choices. For example, one would need to identify  preferences in each setting to get a prediction on choices.
\par
Specifically, frame variation allows us to differentiate between $\rhrc$-rules and $\rumo$ as follows. Without it $\la$ is not nested by nor nests $\rumo$. For example, $\la$ allows for attraction effect which violates regularity and therefore is inconsistent with $\rumo$. Also, their intersection is nonempty because $\mm$ is both consistent with $\rumo$ and $\la$ \citep{manzini2014stochastic,brady2016menu}. Moreover, for a fixed frame $\rcg$ is nested in $\rumo$ and nests $\mm$, therefore its intersection with $\la$ is nonempty \citep{aguiar2017random} (see Figure~\ref{fig:relations}).\footnote{\citet{suleymanov} shows that $\mm$ is the intersection of $\la$ and $\rcg$.} However, with frame variation this is no longer true: $\rumo$ and $\rcg$ intersect, but are not nested within each other.\footnote{All these relationships are preserved when allowing for heterogeneous preferences under the independence assumption of preferences and attention. The reason is that the outside probability does not depend on the distribution of preferences.  For more details see Appendix~\ref{app:section_models}.} 

\begin{figure}[t]
		\centering
		\begin{tikzpicture}	[scale=1.1]
		\draw [line width=1.5pt ](2,2) -- (2,6) -- (6,6)--(6,2) --(2,2);
		\draw [dotted, line width=1.5pt ](3,2.75) -- (3,5.5)-- (5.25,5.5)--(5.25,2.75)--(3,2.75);
		\draw (4.15,2) node[above] {$\rumo$};
		\draw (4.15,3.35) node[above] {$\mm$};
		\draw (4.15,5) node[above] {$\rcg$};
\draw (1.25,3.35) node[above] {$\la$};
		\draw [dashed,line width=1.5pt ](.5,2.75) -- (.5,4.5)-- (5.25,4.5)--(5.25,2.75)--(.5,2.75);
		\end{tikzpicture}	
	\caption{\textsc{Relation among Consideration Set Rules for a fixed frame: $\rumo$, $\la$, $\mm$, and $\rcg$} } \label{fig:relations}
\end{figure}

\subsection{Testing Procedure} \label{subsection: test}
Theorem~\ref{thm:WRU characterization} allows us to test whether for a given frame $f$, a given stochastic choice rule $P_f$ is a $\rhrc$-rule: it is necessary and sufficient to test whether $m_f^{\rc}$ is well-defined (satisfies a set of linear inequalities) and $P_{f,\pi}^{\rc}$ is consistent with the full consideration model. Note that the full consideration model is equivalent to the random utility model without outside option. Testing for $\rumo$ is a well-understood problem and amounts to solving a quadratic optimization with cone constraints (see \citealp{mcfadden1990stochastic} and \citealp{kitamura2018nonparametric}). The approach proposed by \citet{kitamura2018nonparametric} allows us to test these conditions while accounting for sampling variability induced by using $\hat{P}_f$ instead of unknown $P_f$. 
\par
We, however, need to slightly modify the testing procedure in \citet{kitamura2018nonparametric} to take into account the frame variation. First we describe the testing procedure for the fixed frame and then extended it to environments with frame variation.
\par
To introduce the testing procedure, we need to define several objects. Note that, for a fixed frame $f$, the calibrated full consideration rule, $P_{f,\pi}^{\rc}$, is a vector of length $d_p=\sum_{k=1}^{\abs{X}}k{\abs{X} \choose k}$.\footnote{${n \choose k}=\dfrac{n!}{k!(n-k)!}$ and $n!=1\cdot2\cdot\dots\cdot n$.} The $k$-th element of $P_{f,\pi}^{\rc}$ corresponds to some pair $(a,\A)$ such that $a\in\A$. 

Let $B_1$ be the matrix of the size $d_p\times \abs{X}!$ such that $(k,l)$ element of it is equal to
\[
B_{1,k,l}=\Char{a\in\A}\Char{a\rr_l c,\:\forall\: c\in\A},
\]
where $k$ corresponds to a pair $(a,A)$ such that $a\in\A$, and $\rr_l$ is $l$-th linear order on $X$. We define $G_1$ as the matrix of size $(d_p+d_m)\times d_1$, where $d_1=\abs{X}!+d_m$ and $d_m=\sum_{A\subseteq X}2^{\abs{A}}$ is the dimension of $m_f^{\rc}$, such that
\[
G_1=\left[\begin{array}{cc}
     B_1 &0_{d_p\times d_m}\\
     0_{d_m \times \abs{X}!}&I_{d_m} 
\end{array}
\right],
\]
where $0_{d_p\times d_m}$ denotes the zero matrix of size $d_p\times d_m$, and $I_{d_m}$ denotes the identity matrix of size $d_m\times d_m$.
The next result establishes an equivalent characterization of the $\rhrc$-rule via $m_f^{\rc}$ and $P_{f,\pi}^{\rc}$. Let $\Real^{d_1}_{+}$ denote component wise nonnegative elements of the $d_1$-dimensional Euclidean space $\Real^{d_1}$.
\begin{theorem}\label{thm:WRU characterization2}
For a fixed frame $f$ the following are equivalent:
\begin{enumerate}
    \item $P_{f,\pi}^{\rc}$ is a $\rum\text{-}\hrc$-rule and $m_f^{\rc}$ is well-defined;
    \item $\inf_{v\in\Real^{d_1}_{+}}\norm{g_f^{\rc}-G_fv}=0$, where $g_f^{\rc}=(P_{f,\pi}^{\rc\prime},m_f^{\rc\prime})\tr$.
\end{enumerate}
\end{theorem}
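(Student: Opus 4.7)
The plan is to use the block-diagonal structure of $G$ to decouple the cone-containment condition into two independent pieces, and then identify each piece with one half of statement~(1).

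First, I would observe that $\inf_{v\in\Real^d_{+}}\norm{g^{\rc}-Gv}=0$ if and only if $g^\rc\in\{Gv:v\in\Real^d_+\}$, because that set is a finitely generated polyhedral cone in a finite-dimensional Euclidean space, hence closed, so the infimum is attained whenever it equals zero. Writing $v=(v_1\tr,v_2\tr)\tr$ with $v_1\in\Real^{\abs{X}!}$ and $v_2\in\Real^{d_m}$, the block-diagonal form of $G$ makes the equation $Gv=g^\rc$ split into the two decoupled conditions $Bv_1=P_\pi^\rc$ and $v_2=m^\rc$, together with the sign constraints $v_1\geq 0$ and $v_2\geq 0$.

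Second, the bottom block is immediate. The requirement $v_2=m^\rc\geq 0$ is exactly componentwise nonnegativity of every $m_A^\rc(D)$. Since the construction in Definition~\ref{def:calibrated m} already forces $\sum_{D\subseteq A}m_A^\rc(D)=1$ for each $A\in\M$, this nonnegativity is equivalent to $m^\rc$ being a well-defined collection of distributions in $\Delta(2^A)$, which is the second half of statement~(1).

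Third, I would handle the top block via a classical McFadden-Richter / Block-Marschak style argument. By construction, the columns of $B$ are the deterministic top-choice vectors associated with each strict linear order $\rr_l$ on $X$, so the existence of $v_1\geq 0$ with $Bv_1=P_\pi^\rc$ says exactly that $P_\pi^\rc$ is a nonnegative combination of $\rumo$ vertices over $X$. Using the menu-block row-sum identity $\sum_{a\in A}B_{(a,A),l}=1$ for every $\rr_l$ together with the row-sum property $\sum_{a\in A}p_\pi^\rc(a,A)=1$ (which follows by induction on $\abs{A}$ from the recursion in Definition~\ref{def:calibrated P} and the calibration identity $m_A^\rc(\emptyset)=p(o,A)$ built into $\eta^\rc$), any such $v_1$ automatically satisfies $\sum_l v_{1,l}=1$. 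Hence $v_1$ is a probability distribution on $R(X)$ and $P_\pi^\rc$ is a $\rum\text{-}\hrc$-rule, which is the first half of statement~(1). The only step requiring genuine care is the row-sum induction; everything else is a direct rewriting, which is why I expect the bookkeeping of the normalization $\sum_l v_{1,l}=1$ to be the main (though modest) obstacle.
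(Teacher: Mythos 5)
Your proposal is correct and follows essentially the same route as the paper, whose own proof is simply a citation to the McFadden--Richter / Kitamura--Stoye cone characterization of random utility: you spell out the standard details (closedness of the finitely generated cone, the block-diagonal decoupling of $G$ into the $B v_1 = P_{\pi}^{\rc}$ and $v_2 = m^{\rc}$ pieces, and the row-sum normalization forcing $\sum_l v_{1,l}=1$) that the citation leaves implicit. The row-sum induction you flag does go through, since the calibration forces $m_A^{\rc}(\emptyset)=p(o,A)$ and $\sum_{D\subseteq A}m_A^{\rc}(D)=1$.
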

\begin{proof}
See \citet{mcfadden1990stochastic} and \citet{kitamura2018nonparametric}.
\end{proof}
Theorem~\ref{thm:WRU characterization2} implies that we can test the null hypothesis that $\inf_{v\in\Real^{d_1}_{+}}\norm{g_f^{\rc}-G_1v}=0$. Fortunately, this testing problem can be directly cast to the testing problem in \citet{kitamura2018nonparametric}. 
\par
To take into account the frame variation and consequentialism, we need to modify the matrix $G_1$. Let $d_f=\abs{F}$ and $B$ be a matrix of size $d_f\cdot d_p\times\abs{X}!$ that consists of $d_f$ matrices $B_1$ stacked together. That is, $B=\left(B_1\tr \:B_1\tr\:\dots\:B_1\tr\right)\tr$. Let $d=\abs{X}!+d_f\cdot d_m$. Similarly to $G_1$ define,
\[
G=\left[\begin{array}{cc}
     B &0_{d_f\cdot d_p\times d_f\cdot d_m}\\
     0_{d_f\cdot d_m \times \abs{X}!}&I_{d_f\cdot d_m} 
\end{array}
\right].
\]
Note that when $d_f=1$ (i.e., the frame is fixed), then $G=G_1$.
\begin{cor}\label{cor:WRU characterization2}
The following are equivalent:
\begin{enumerate}
    \item $(P_f)_{f\in F}$ satisfies consequentialism; $P_{f,\pi}^{\rc}$ is a $\rum\text{-}\hrc$-rule and $m_f^{\rc}$ is well-defined for all $f\in F$.
    \item $\inf_{v\in\Real^{d}_{+}}\norm{g^{\rc}-Gv}=0$, where $g^{\rc}=\left((P^\rc_{f,\pi})\tr_{f\in F}, m_f^{\rc})\tr_{f\in F}\right)\tr$.
\end{enumerate}
\end{cor}

\section{The Experiment}\label{section:experiment}

Our testing approach does not have requirements in terms of repeated individual choice data. Exploiting this feature, our experiment was designed to study the performance of different theories of random consideration sets with few observations per individual. In particular, we conducted the experiment in Amazon MTurk for a large cross-section with at most two (disjoint) choice sets per individual (see Section~\ref{subsection:experimental_design}). The large sample size of the dataset generated by our experiment is fundamental for ensuring high statistical power using the tools in \citet{kitamura2018nonparametric}. 

All sessions were run between August 25, 2018 and September 17, 2018 on the MTurk platform with surveys designed in Qualtrics.\footnote{By clicking the link on the MTurk page, subjects were randomly directed to one of the treatments implemented by Qualtrics. After completing their task, subjects were also asked to complete a short survey regarding their demographic information. Subjects were not allowed to participate in the experiment more than once. Only subjects living in the USA were recruited.} 
We surveyed $2135$ individuals. They were paid on average $\$1.09$ as a result of $\$0.25$ for participation fee and the outcome of a randomly selected task that pays a minimum of $\$0$ and a maximum of $\$2$. All payments were made in USD. The average duration of the session was $251.68$ seconds (slightly over 4 minutes).\footnote{The average duration of each task is about $23$ seconds, and the duration is significantly correlated with the length of the choice set and the frame.} This means that our average payment per hour is roughly $\$15$.

The payment in our experiment is comparable to other well-known experiments conducted in MTurk. To name a few, \citet{horton2011online} studied behavior in MTurk using games with the payment range between $\$0.40$ and $\$1.60$. They find that behavior in MTurk is consistent with behavior in the lab, where the stakes of games are ten times bigger. They also estimate the median minimum wage in MTurk as $\$0.14$ per hour.\footnote{The minimum wage here refers to the reservation wage the MTurk subjects have for performing a given task.} \citet{dean2014preference} conducted experiments of decision-making. The average payment for completing 15-min long tasks was between $\$1.35$ and $\$1.55$ including the show-up fee of $\$0.25$. \citet{kim2016discounting} conducted an experiment in MTurk for several weeks with one 10-min task each week. The average earnings from each week's task were below $\$1.00$. \citet{rand2012spontaneous} also conducted a public good game with MTurkers and the payment range was between $\$0.90$ and $\$1.50$ including the show-up fee of $\$0.50$.

\subsection{Experimental Design} \label{subsection:experimental_design}
In our experiment, we have two independent sources of exogenous variation: full variation in choice sets, and variation in frames. Recall that full variation in choice sets means that all possible choice sets are observed, and variation in frames means that we vary observable information without affecting the rational assessment of the alternatives. These two sources of variation allow us to test consideration-mediated choice theories in a large cross-section of heterogeneous individuals, and differentiate these theories from $\rumo$. The latter is possible since consideration is allowed to change with frames, but preferences must remain stable due to consequentialism. We vary the frame through changes in the cost of consideration.

\subsection*{Choice set design}
To induce preference heterogeneity, we consider lottery alternatives with different expected values and variances. Table~\ref{lotteries} shows the alternatives and implied preference rankings if DMs are expected utility maximizers with CRRA Bernoulli utility function. The outside option is dominated for moderate levels of risk aversion (e.g. \citealp{holt2002risk}).\footnote{Recall that the assumption that the default is dominated is a testable assumption in our framework. In addition, without cost of consideration treatments, the outside alternative is easier to understand than the rest because of its simplicity. Hence, it works as a consideration-reference point in the sense of \citet{suleymanov}.}

\begin{table}[ht]	\caption[Lotteries]{\textsc{Lotteries measured in tokens, expected values, and variance}}\label{lotteries}  \centering
\resizebox{0.95\textwidth}{!}{\begin{tabular}{@{\extracolsep{1pt}}|c l| c | c | p{9mm}p{9mm}p{9mm}p{9mm}p{9mm}p{9mm} |@{}}
\cline{1-10}
\multicolumn{2}{|c|}{} &  &  &\multicolumn{6}{c|}{}  \\	[.25ex] 
\multicolumn{2}{|c|}{\textsc{Lottery}} & \textsc{Expectation} & \textsc{Variance} &\multicolumn{6}{c|}{\textsc{Preference Rank $u(x)=\frac{x^{1-\sigma}}{1-\sigma}$ with $\sigma$}}  \\ [2ex]
 &&&&-2&0&0.25&0.30&0.50&0.75\\ [.1ex] 	\cline{1-10} 
 &&&&&&&&&\\ [.1ex]
(1)&$\frac{1}{2} 50 + \frac{1}{2} 0$ & 25.000 & 625.00   &1&1&2&5&5&6 \\[1ex]
(2)&$\frac{1}{2} 30 + \frac{1}{2} 10$ & 20.000 & 100.00 &5&5&5&2&1&1\\[1ex]
(3)&$\frac{1}{4} 50 + \frac{1}{4} 30 + \frac{1}{4} 10 + \frac{1}{4} 0$ & 22.500 & 368.75 &3&3&4&4&3&4\\[1ex]
(4)&$\frac{1}{4} 50 + \frac{1}{5} 48 + \frac{3}{20} 14 + \frac{2}{5} 0$ & 24.125 & 511.73 &2&2&1&3&4&5\\[1ex]
(5)&$\frac{1}{5} 48 + \frac{1}{4} 30 + \frac{3}{20} 14 + \frac{1}{4} 10+ \frac{3}{20} 0$ & 21.625 & 251.11&4&4&3&1&2&3 \\[1ex]
(o) & 12 with probability 1  & 12.000 & 0.00 &6&6&6&6&6&2\\ \cline{1-10}\end{tabular}}
\end{table}

Let $X=\{l_1,l_2, l_3, l_4, l_5\}$ be the set of all nondefault alternatives, and let $o$ be the default/outside option. All menus $A\in\M$ are observed in the sample. The outside option is always present and is shown first, while the order of other alternatives is randomized. Menus can be thought as different treatments. 

Our primitive to test $\rhrc$ is $\hat{P}=\left(\hat{p}(a,A)\right)_{a\in A\cup\{o\},A\in\M}$, therefore we proceeded with stratified sampling, setting the minimal number of observations per choice set to be proportional to its cardinality, i.e. $n_A=\lambda (\abs{A}+1)$ with $\lambda\geq30$. This design requires a minimum of $\sum_{A\in \M}\abs{A}=3330$ tasks. 

For each menu, the DM faced three consideration frames or cost-treatments: High (H), Medium (M), and Low (L). These frames/cost-treatments were induced by introducing a $k$-length two digit addition/subtraction to compute each prize in the lottery. The length $k$ was set equal to $5$, $3$, and $1$, for the high, the medium, and the low cost, respectively. Since in our experiment attention frames only change the complexity of the description of lotteries, we assume that preferences of DMs do not depend on the way alternatives are described, thus are consistent with consequentialism. 
\par
The numbers for the cognitive task were randomly generated. Examples can be seen in Figure~\ref{dif_costs}. The default alternative $o$ was presented as is, and there was no need to solve an arithmetic problem to understand it across the different levels of cost.

\begin{figure}
	\begin{subfigure}[b]{0.48\textwidth}
		\centering
		\includegraphics[width=0.98\textwidth]{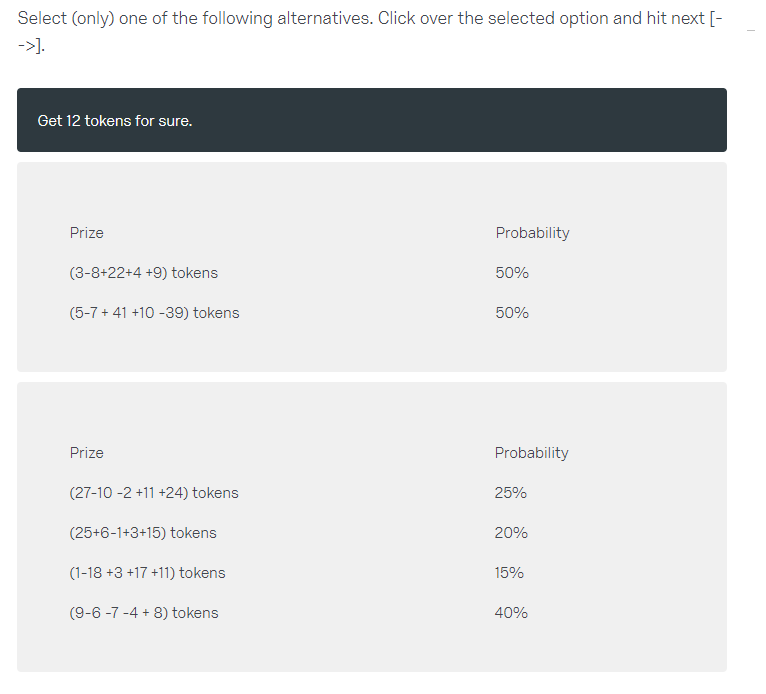}
		\caption{High Cost}
	\end{subfigure}
	\begin{subfigure}[b]{0.48\textwidth}
		\centering
		\includegraphics[width=0.98\textwidth]{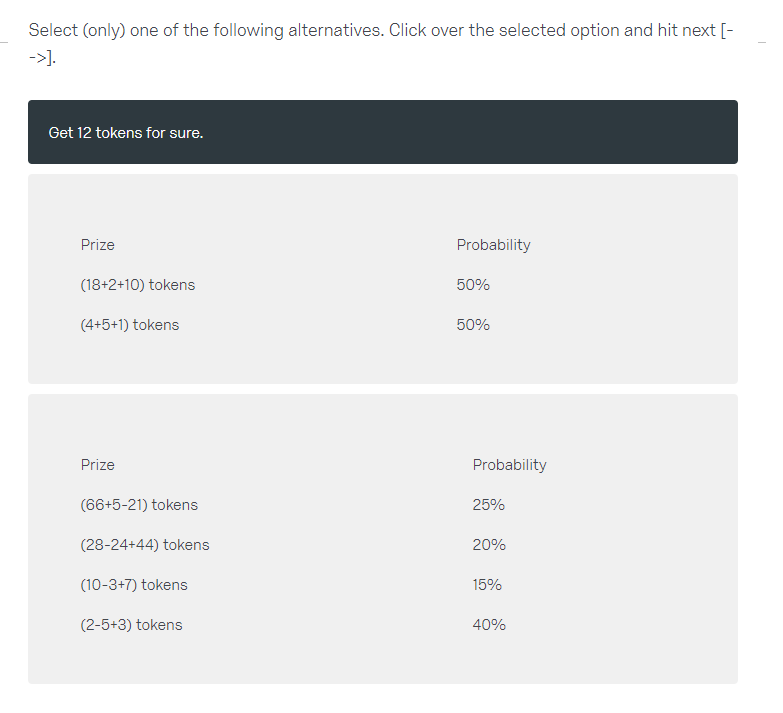}
		\caption{Medium Cost}
	\end{subfigure}
	\centering
	\begin{subfigure}[b]{0.75\textwidth}
		\centering
		\includegraphics[width=0.62\textwidth]{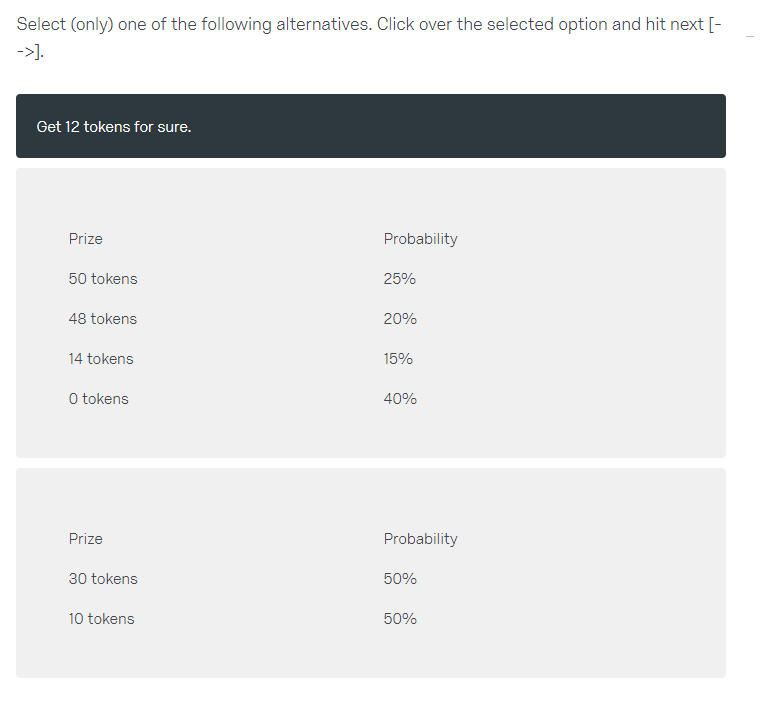}
		\caption{Low Cost} 
	\end{subfigure}
	\caption{\textsc{Consideration Cost Treatments } Different induced costs for Choice set $\{o, l_2, l_4\}$ } \label{dif_costs}
\end{figure}

To prevent possible learning that could attenuate consideration costs, subjects were faced with disjoint choice sets. That is, subjects were either presented with the full choice set and the outside option ($X\cup\{o\}$); or a partition of $X$ (presented at random order), i.e. $A_j\cup\{o\},A_k\cup\{o\}$ with $A_j\cup A_k=X$ and $A_j\cap A_k=\emptyset$. Our experimental design is summarized in Figure~\ref{figure:experiment_design}.
\paragraph{The default alternative} Our design allows us to use $o$ as the opportunity cost of incurring in the cost of consideration and understanding the other lotteries in the choice set. We use a degenerate lottery as the default due to its simplicity. In this sense, we believe the alternative $o$ in our design has effectively zero cost of consideration. Recall, that for any choice set/frame the outside option is always present and shown first. Moreover, it is pre-selected as the default alternative. If the subject decides to skip the task, she is informed that $o$ will be chosen for her. 

\begin{figure}[t]
\centering
\begin{tikzpicture}	[scale=1.05]
\draw [line width=1.25pt](5,10) ellipse (3cm and 0.5cm);
\draw (5,9.75) node[above] {\footnotesize{\textsc{DM - MTurk}}};
\draw [thick,->,>=stealth, line width=2pt] (5,9.5)--(5,8.5);
\draw [dotted, line width=1.5pt ] (3,8.5) -- (3,7.75) -- (7,7.75)--(7,8.5) --(3,8.5);
\draw (5,7.85) node[above] {\footnotesize{\textsc{Stratification}}};
\draw (5.8,7.1) node[right] {\footnotesize{$prob(A_j)=q_j\propto \left(\abs{A_j}+1\right)$}};
\draw (-2.2,7.1) node[right] {\footnotesize{$A_2\cup A_j=X$ and $A_2\cap A_j=\emptyset$}};
\draw [thick,->,>=stealth, line width=2pt,gray] (4,7.5)--(1.5,6.5);
\draw [thick,->,>=stealth, line width=2pt] (5.75,7.5)--(5.75,6.5);
\draw[fill=blue!45] (-0.75,6.25)--(-0.75,5.5)--(.35,5.5)--(.35,6.25)--(-0.75,6.25);
\draw[fill=red!20] (-4.5,5.15)--(-4.5,4.4)--(-1,4.4)--(-1,5.15)--(-4.5,5.15);
\draw (-2.75,4.55) node[above] {\footnotesize{(H) $a+b+c+d+e$}};
\draw (-2.85,5.6) node[above] {\footnotesize{\textsc{Prizes per Treatment}}};
\draw (-5,4) node[above, rotate=90] {\footnotesize{\textsc{Consideration cost treatments}}};
\draw (10.5,4) node[above, rotate=270] {\footnotesize{\textsc{Same menu/pay at random}}};
\draw[fill=yellow!20] (-4.5,3.85)--(-4.5,3.15)--(-1,3.15)--(-1,3.85)--(-4.5,3.85);
\draw[fill=green!20] (-4.5,2.6)--(-4.5,1.85)--(-1,1.85)--(-1,2.6)--(-4.5,2.6);		
\draw (-.2,5.6) node[above] {\scriptsize{$A_1\cup \{o\}$}};
\draw (-.2,4.5) node[above] {\scriptsize{$\{o,l_1\}$}};
\draw (-.2,3.25) node[above] {\scriptsize{$\{o,l_1\}$}};
\draw (-.2,2) node[above] {\scriptsize{$\{o,l_1\}$}};
\draw[fill=blue!35] (0.75,6.25)--(0.75,5.5)--(1.8,5.5)--(1.8,6.25)--(0.75,6.25);
\draw (1.25,5.6) node[above] {\scriptsize{$A_2\cup\{o\}$}};		
\draw (1.25,4.5) node[above] {\scriptsize{$\{o,l_1,l_2\}$}};
\draw (1.25,3.25) node[above] {\scriptsize{$\{o,l_1,l_2\}$}};
\draw (1.25,2) node[above] {\scriptsize{$\{o,l_1,l_2\}$}};
\draw[fill=blue!25] (2.2,6.25)--(2.2,5.5)--(3.3,5.5)--(3.3,6.25)--(2.2,6.25);
\draw (2.75,5.6) node[above] {\scriptsize{$A_3\cup\{o\}$}};
\draw (2.7,4.5) node[above] {\scriptsize{$\{o,l_1,l_2,l_3\}$}};
\draw (2.7,3.25) node[above] {\scriptsize{$\{o,l_1,l_2,l_3\}$}};
\draw (2.7,2) node[above] {\scriptsize{$\{o,l_1,l_2,l_3\}$}};
\draw [dotted, line width=1.5pt] (3.3,5.9)--(5,5.9);
\draw [dotted, line width=1.5pt] (6.25,5.9)--(8.25,5.9);
\draw[fill=blue!15] (5.1,6.25)--(5.1,5.5)--(6.2,5.5)--(6.2,6.25)--(5.1,6.25);
\draw (5.65,5.6) node[above] {\scriptsize{$A_{j}\cup\{o\}$}};
\draw (5.65,4.5) node[above] {\scriptsize{$\{o,l_3,l_4,l_5\}$}};
\draw (5.65,3.25) node[above] {\scriptsize{$\{o,l_3,l_4,l_5\}$}};
\draw (5.65,2) node[above] {\scriptsize{$\{o,l_3,l_4,l_5\}$}};
\draw[fill=blue!10] (8.3,6.25)--(8.3,5.5)--(9.5,5.5)--(9.5,6.25)--(8.3,6.25);
\draw (8.9,5.6) node[above] {\scriptsize{$A_{31}\cup\{o\}$}};
\draw (8.9,4.5) node[above] {\scriptsize{$\{o,l_1,l_2,l_3,l_4,l_5\}$}};
\draw (8.9,3.25) node[above] {\scriptsize{$\{o,l_1,l_2,l_3,l_4,l_5\}$}};
\draw (8.9,2) node[above] {\scriptsize{$\{o,l_1,l_2,l_3,l_4,l_5\}$}};
\draw (-2.75,3.2) node[above] {\footnotesize{(M) $f+g+h$}};
\draw (-2.75,1.9) node[above] {\footnotesize{(L) $k$}};
\end{tikzpicture}	
	\caption{\textsc{Experimental Design:} DM $i$ draws with probability $p(A_j)$ menu $A_j$ with $\abs{A_j}\in\{3,4,5\}$. In the picture, $A_j=\{l_3,l_4,l_5\}$. Therefore she is asked to choose from menus $A_j\cup\{o\}$ and $A_2\cup\{o\}$, since $A_2\cup A_j=X$ and $A_2\cap A_j=\emptyset$. She is faced with one of these menus first (randomly selected) and asked to choose when the cost is H, M and L. Then, she faced the other menu for the three cost treatments.   } \label{figure:experiment_design}
\end{figure}

\subsection{Sample}\label{subsection:sample}
The sample consists of 2135 individuals that selected alternatives from one or two choice sets for all costs of attention, as shown in Figure~\ref{figure:experiment_design}, for a total of 12297 observations. The number of observations per alternative/choice set are shown in Table~\ref{observations}. Based on these observations the primitive for our analysis is the collection of observed frequencies $(\hat p(a,A))_{a\in A,\,A\in\M}$. We compute these frequencies for all costs.
Unless otherwise stated $\hat{p}(a,A)$ refers to observed frequency in the data pooled across attention costs.  
\begin{figure}
		\centering
		\includegraphics[width=1\textwidth]{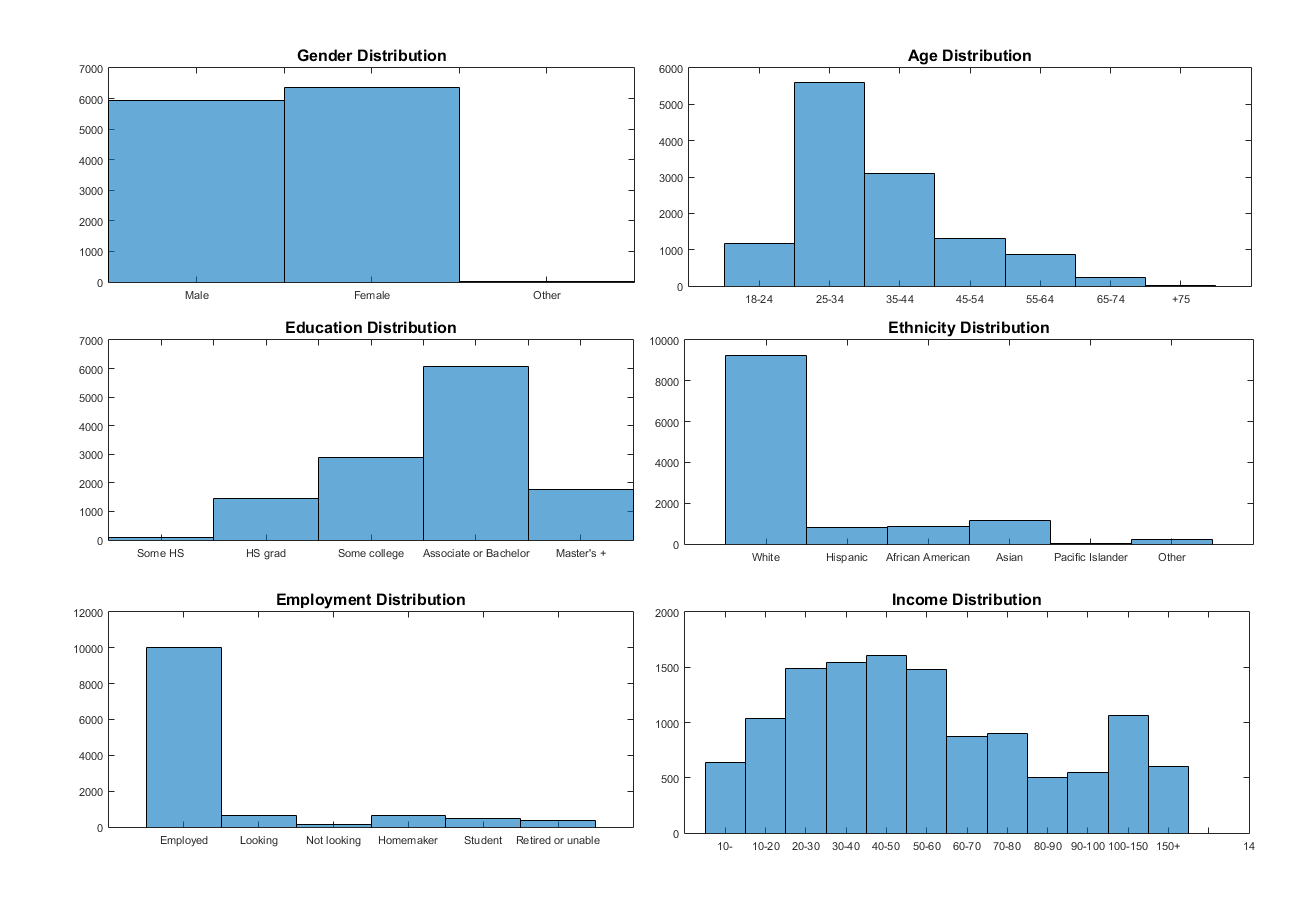}
	\caption{\textsc{Distribution of demographics in sample} } \label{dist_demographics}
\end{figure}

Figure~\ref{dist_demographics} summarizes the distribution of gender, age, education, ethnicity, labor, and income in our sample. Our subjects are a diverse sample of US individuals. By design, demographics are balanced across consideration cost treatments and choice sets (that can also be thought of as treatments). 

\subsection{Descriptive Analysis: Evidence for Costly Consideration and Frame Effects} \label{subsection:comparative_statistics}
In this section, we describe the behavior of individuals in our sample and present suggestive evidence that our cost treatments effectively induce costly consideration or frame effects. In particular, we observe that the consideration cost treatments/frames: (i)  affect the choice frequency of the outside option; (ii) have a heterogeneous effect on the choice frequencies of all other alternatives; (iii) and affect the patterns of choice with respect to the size of the menu. Moreover, all these effects depend monotonically on the level of difficulty of choice induced by each treatment. 
\par
Under the null hypothesis of full consideration and consequentialism, the observed frequency of choice of the outside option should remain the same across frames. The reason is that the choice menu remains the same across cost treatments, and payment is at random. However, the outside option is chosen more often as the cost increases (see Figure~\ref{fig:prob_outside}). This is evidence against full consideration and in favor of frame-dependent choice.

\begin{figure}	
\centering
\includegraphics[width=.6\textwidth]{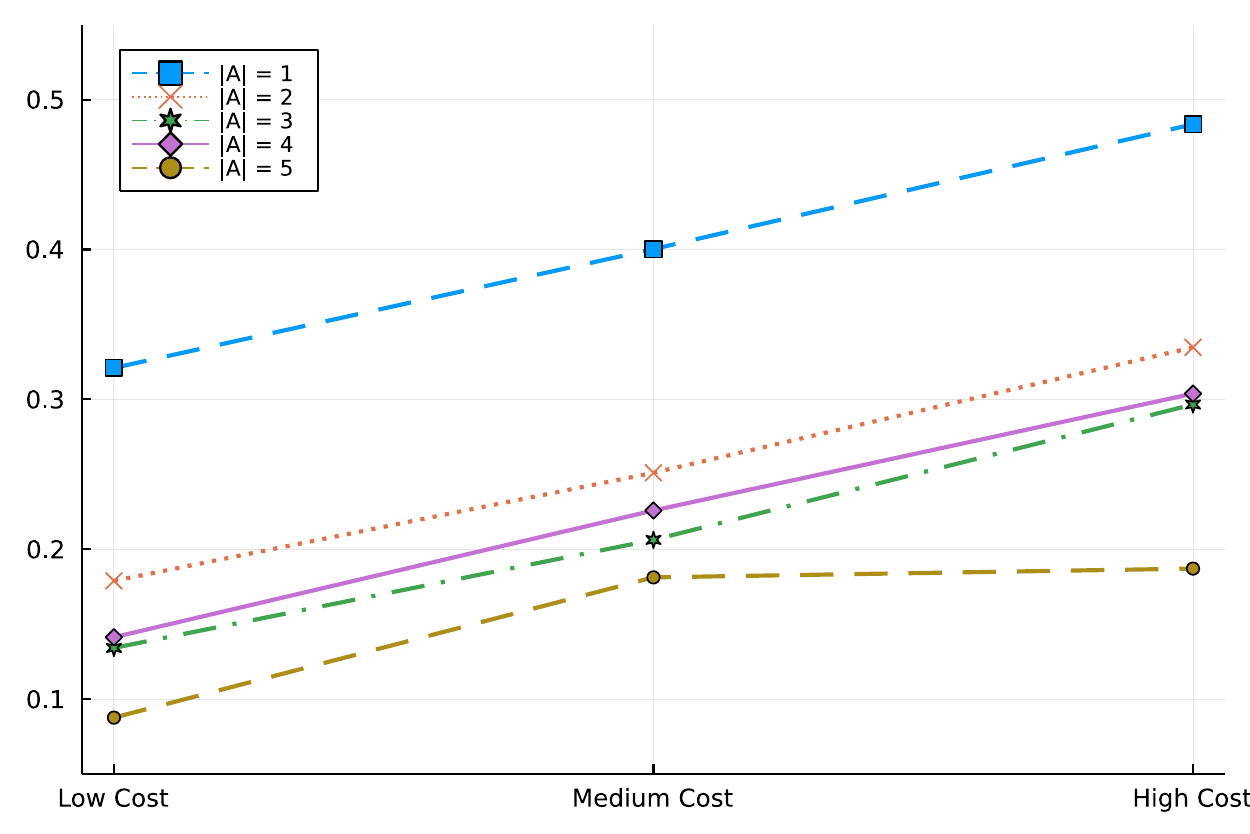}
\caption{\textsc{Frequency of the Outside Option Conditional on the Menu size and Frame.}}\label{fig:prob_outside}
\end{figure}
	
We remind the reader here that low, medium, and high cost corresponds to $1$, $3$, and $5$ arithmetic operations required to understand the monetary prize of each lottery, respectively. The monotone relation shows that our treatments were effective, and that the frequency of choice of the default is in fact ordered in the way it was expected. 
\par
Figure~\ref{fig:choices_cost} shows the effect of the frames/cost-treatments on the choice frequencies. The harder it is to understand the lotteries,\footnote{Here simplicity comes in the form of how easy (number of arithmetic operations) it is to compute expectation, variance, and expected utility of the lottery in terms of the number of prizes and whether the probabilities are uniform on the support of the lottery or not.} the more likely subjects opt to not consider them and instead choose the outside option. These results support the effectiveness of the induced treatments.
\begin{figure}
		\centering
		\includegraphics[width=.9\textwidth]{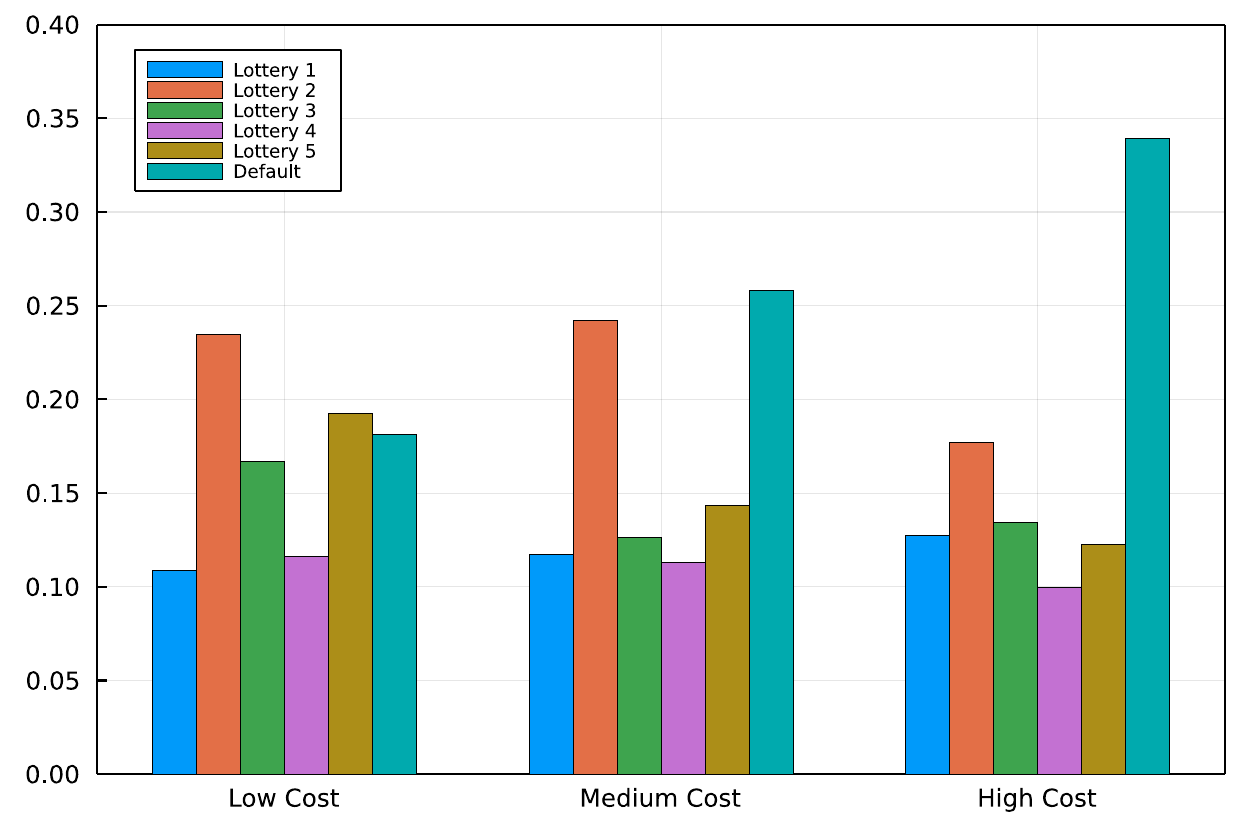}
	\caption{\textsc{Frequencies of Choices For Different Lotteries and Frames} } \label{fig:choices_cost}
\end{figure}
Figure~\ref{fig:choices_cost} also shows that the effect of the cost treatment is not homogeneous across alternatives. The choice frequency of lottery $1$ (the simplest to understand after the default) increases with the cost treatment; the cost does not have a significant impact on the probability of selecting lottery $4$; while it has a negative impact on the other lotteries.
\par
Overall, the probability of selecting any given lottery does not necessarily decrease with the size of the menu, suggesting that regularity is violated in our sample, as shown in Figure~\ref{choice_menu_size}. Indeed, we test $\rumo$ later and confirm that these violations are significant, since we reject $\rumo$ at the $5$ percent significance level.
\par 
Notice that, for a fixed frame/attention cost, some lotteries are harder to compute than others. For instance, lottery $2$ can pay $30$ or $10$ tokens, while lottery $3$ can pay $50$, 
$30$, $10$, or $0$. So for every frame, lottery $3$ is harder to compute than lottery $2$. Our attention-index framework allows for alternatives or lotteries with heterogeneous complexity (e.g., the simplest model $\mm$ has a lottery-specific attention parameter). For instance, the simplest lottery $1$ is picked more often as the cost rises, while the hardest to compute lottery $5$ displays the opposite pattern.  These observations confirm the presence of heterogeneous attention patterns.

\subsection{Evidence for Total Monotonic Attention}
Recall that, under the null hypothesis, the dataset is generated by an attention-index model with a link function that is totally monotone. Hence,  the frequency of choosing the outside option is equal to a monotone transformation of cumulative attention  associated with the attention-index  $\sum_{C\subseteq A} \eta(C)$:
\[
p(o,A)=m_{A}(\emptyset)=\varphi\left(\sum_{C\subseteq A} \eta(C),\eta(\emptyset)\right).
\]
 
In Figure~\ref{choice_menu_size}, we observe that the frequency of choice of the outside option \emph{is not} a decreasing function of the cardinality of the choice set. The latter is inconsistent with total monotonicity, and, thus, we may naively conclude that neither $\rumo$, $\la$, nor $\rcg$ can explain our dataset. However, as we confirm in the next section, this slight violation of monotonicity is an artifact of sample variability, and it is not statistically significant.  
\par 
Choice overload refers to the case when the propensity of not choosing any alternative (i.e., the probability of picking the default alternative) increases with the size of the choice sets \citep{iyengar2000choice}.  Our findings are informative on whether this effect, which may be present at the individual level, still matters at the population level. Note that neither $\rumo$ nor any model of limited consideration that we study (including \citealp{cattaneo2017random}) can rationalize choice overload.\footnote{\citet{chernev2015choice} provides a meta-data analysis of the determinants of choice overload.} 
Existing models of limited consideration are fundamentally at odds with choice overload, since one of the important reasons to form a consideration set is to simplify choice.\footnote{Other models of stochastic choice that are not models of limited consideration usually can accommodate choice overload. See, \citet{fudenberg2015stochastic, echenique2018perception, kovach2018imbalanced}, and \citet{Natenzon2018}.}
We find no statistical support for choice overload in our dataset. However, our choice set is of moderate size.

\subsection{Differentiating Between \texorpdfstring{$\la$}{LA} and \texorpdfstring{$\rcg$}{EBA}}
In Figure~\ref{choice_menu_size}, we also plot the frequencies of choice of the nondefault alternatives and find that there are violations of total monotonicity (e.g., lottery $3$).   This evidence suggests that $\rcg$ cannot describe this dataset because total monotonicity should hold for all alternatives \citep{block_random_1960,aguiar2017random}. However, this is not enough to conclude that $\rcg$ cannot describe the dataset because of sample variability. Nonetheless, we reject the null hypothesis that $\rcg$ can describe this dataset. In addition, we must  highlight that $\la$ is the only candidate that can accommodate the observed nonmonotonicity of the nondefault alternatives observed in this dataset. We confirm this insight in our formal testing section by showing that $\la$ does a good job at describing this dataset.
	
	\begin{figure}[ht]		
	\centering
		\includegraphics[width=1\textwidth]{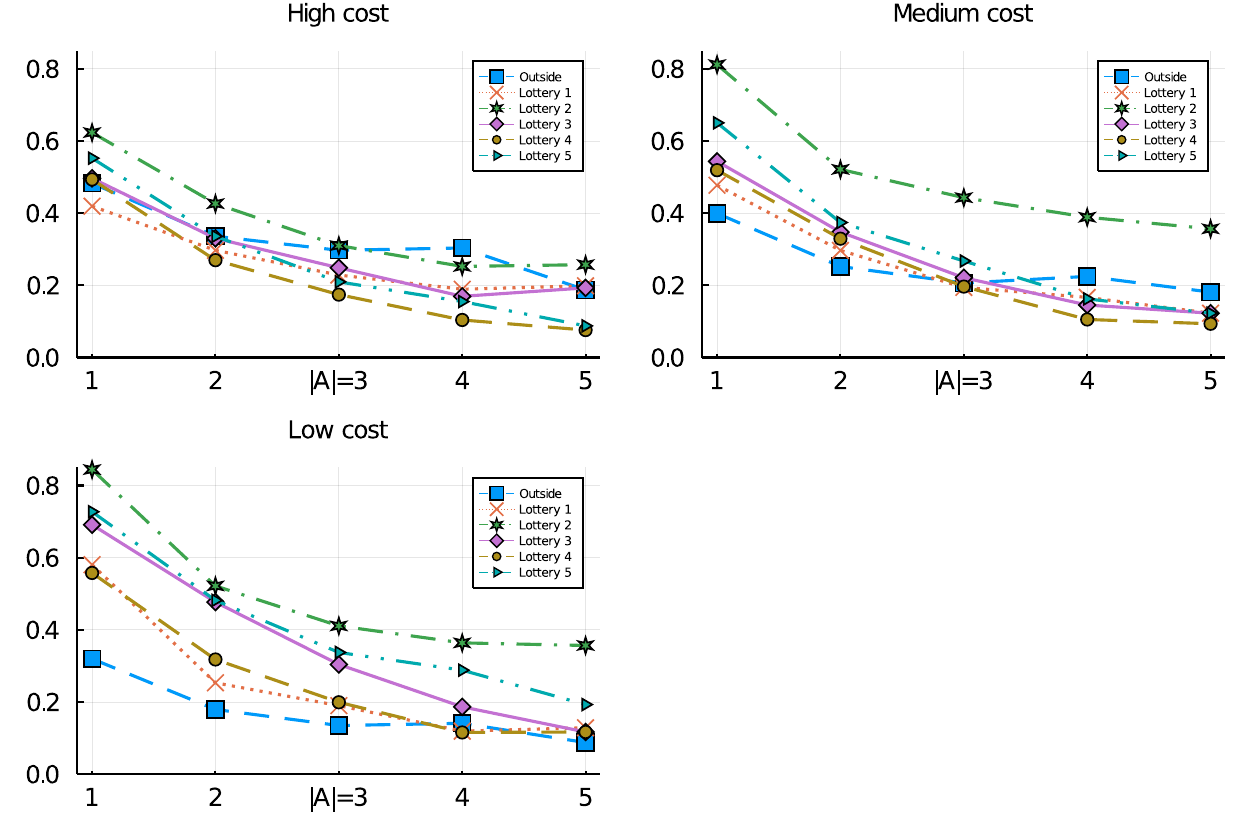}
		\caption{\textsc{Frequency of Choice for all frames as a function of the menu size}}\label{choice_menu_size}
	\end{figure}
The nonmonotonicity that we observe in Figure~\ref{choice_menu_size} is usually called \emph{attraction effect} \citep{huber1982adding}.  The attraction effect refers to a phenomenon when, as a new alternative is added to the choice set,  the probability of the existing items is boosted. Our findings support the presence of the attraction effect in our experimental sample.

\section{Testing Random Consideration Models}\label{section:testing}
In this section, we report the results of testing the ability of $\rumo$, $\la$, and $\rcg$ to describe our experimental data. We test these models without imposing any restrictions on preferences except stability over frames. 
Unless otherwise stated, the tested hypothesis is that, for a particular specification of our model (consideration set stochastic rule), there exists $(m,\pi)$ that is a $\rhrc$ representation for behavior.

\subsection{Econometric Testing}\label{subsec: econometric testing}
For every frame $f$, although $P_f$ is not observed, the realized choice frequencies $\hat{P}_f$ are. For every $A\in\M$ and $f\in F$, let $n_{f,A}$ denote the number of individuals in the sample that faced choice set $A$ and frame $f$, and let $\rand{a}_{i,f,A}$, $i=1,\dots,n_{A}$ be the observed choice of individual $i$ from choice set $A\cup\{o\}$ and frame $f$. We assume that the researcher observes a cross-section of observations (i.e., i.i.d. observations) for every menu and frame.\footnote{For a given frame, this is a standard stochastic choice dataset in the literature on limited consideration.} Then we define the estimated stochastic choice rule as 
\[
\hat P_f=(\hat p_f(a,\A))_{A\in\M,a\in A\cup\{o\}}.
\]
with $\hat p_f(a,\A)=n_{f,A}^{-1}\sum_{i=1}^{n_{f,A}}\Char{\rand{a}_{i,f,A}=a}$ for any $a\in A\cup\{o\}$. 
\par
Given the model of interest $\rc$ and the estimator of $P_f$, $\hat{P}_f$, we can compute the estimators of $m_f^{\rc}$ and $P_{f,\pi}^{\rc}$, $\hat m_f^{\rc}$ and $\hat P_{f,\pi}^{\rc}$, using Definitions~\ref{def:calibrated m} and~\ref{def:calibrated P}.\footnote{To compute $\hat P_{f,\pi}^{\rc}$, in our empirical application, we minimized the Euclidean distance between $\hat P_f$ and $\left(\sum_{D\subseteq A}\hat{m}_f^{\rc} p_{f,\pi}(a,D)\right)_{a\in A,A\in\mathcal{A}}$ subject to $p_{f,\pi}(a,A)\geq 0$, $\sum_{a\in D}p_{f,\pi}(a,D)=1$ for all $a$ and $A$, and $p_{f,\pi}(a,D)=0$ for all $D$ and $a\not\in D$.} 
Given the results of Corollary~\ref{cor:WRU characterization2}, a natural test statistic is
\begin{align*}
    \mathrm{T}_n=n\min_{v\in\Real^{d}_{+}}\norm{\hat g^{\rc}-Gv}^2,
\end{align*}
where $n=\min_f(\sum_{A}n_{f,A})$ is the smallest sample size across frames and $\hat g^{\rc}=\left(\left(\hat P_{f\pi}^{\rc}\right)_{f\in F}\tr,\left(\hat m_f^{\rc}\right)_{f\in F}\tr\right)\tr$.
\par
Let $\hat g^{\rc,*}_l$, $l=1,\dots,L$, be bootstrap replications of $\hat g^{\rc}$. Let  $\tau_n\geq 0$ be a tuning parameter and $\iota$ be a vector of ones of dimension $d$.\footnote{In our empirical application we conducted tests for different values of $\tau_n$ (e.g., $\tau_n=\sqrt{\dfrac{\log(\min_{f,A}n_{f,A})}{\min_{f,A}n_{f,A}}}$ following \citet{kitamura2018nonparametric}, and $\tau_n=0$). The results are qualitatively the same.} To compute the critical values of $\mathrm{T}_n$ we follow the bootstrap procedure proposed in \citet{kitamura2018nonparametric}:
\begin{enumerate}
    \item Compute $\hat\eta_{\tau_n}=Gv_{\tau_n}$, where $v_{\tau_n}$ solves
    \[
    n\min_{[v-\tau_n \iota/d]\in\Real^{d}_{+}}\norm{\hat g^{\rc}-Gv}^2;
    \]
\item Compute the bootstrap test statistic
\[
\mathrm{T}^{*}_{n,l}=n\min_{[v-\tau_n \iota/d]\in\Real^{d}_{+}}\norm{\hat g^{\rc,*}_l-\hat g^{\rc}+\hat\eta_{\tau_n}-Gv}^2,\quad l=1,\dots,L;
\]
\item Use the empirical distribution of the bootstrap statistic to compute critical values of $\mathrm{T}_n$.
\end{enumerate}
For a given significance level $\alpha\in (0,1/2)$, the decision rule for the test is ``reject the null hypothesis if $\mathrm{T}_n>\hat c_{1-\alpha}$'', where $\hat c_{1-\alpha}$ is an $(1-\alpha)$-quantile of the empirical distribution of the bootstrap statistic. 
\par
We would like to conclude this section by observing that we can test the model conditional on additional observables (e.g., gender, income brackets, and education level). For discrete (or discretized) covariates one just needs to perform the test for a subgroup of the population.

\subsection{Survival Race the \texorpdfstring{$\rhrc$}{Link-Behavioral}-rule v.s. \texorpdfstring{$\rumo$}{Random Utility Model}: Stability of Preferences}\label{subsection:stability_pi}
Without frame variation, many models of consideration are empirically indistinguishable from $\rumo$. For instance, if a dataset is consistent with $\rcg$ or $\mm$, for a fixed frame, then it is also consistent with $\rumo$. However, these models and $\rumo$ will typically recover a distinct distribution of preferences. Varying frames, we can test whether the distribution of preferences remains the same across frames. 
\par
$\rhrc$-model assumes that the distribution of preferences in the population is independent of the consideration rule.  In our experiment, the choice sets faced by any subject are exactly the same for the three consideration cost treatments. Given our pay-at-random incentives scheme, choices from each choice set can be considered as i.i.d. draws from the underlying random utility distribution under the null hypothesis of stochastic rationality. Therefore, the independence assumption together with our experimental design imply that if one of the $\rhrc$ theories describes the behavior of the high-cost treatment, it must also describe the behavior of the low-cost treatment. That is, if the independence assumption holds, then the distribution of preferences, $\pi$, should be invariant to changes in consideration costs for theories that we cannot reject. We check the validity of the different models of interest under this preference stability restriction. 
\par
We apply the procedure described in Section~\ref{subsec: econometric testing} to test whether the $\rumo$, $\la$, and $\rcg$ models can explain the data with the restriction that the distribution of preferences remains frame-independent (i.e., consequentialism).
The results of testing are presented in Table~\ref{results_joit_pref}. In this table, we report the values of the test statistic and the corresponding p-values coming from the bootstrap distribution ($1000$ bootstrap replications for every test statistic were conducted) for different models.\footnote{The p-value is interpreted as the probability of observing a realization of the test statistic that is above the one that is actually observed due to sample variability, if the null hypothesis is indeed correct. Then, the smaller the p-value is, the more evidence the researcher has to reject the hypothesis of the validity of a given model.}
First, we strongly reject $\rumo$ at any reasonable significance level. In other words, for $\rumo$ we reject the hypothesis that the same distribution of preferences can rationalize behavior across consideration cost treatments. In contrast, we cannot reject the $\la$ model at any standard significance level. In addition, we reject the hypothesis that $\rcg$ explains the population behavior under preference stability. Note that we can discriminate between $\rumo$ and $\rcg$ because of variation in frames--$\rcg$ is more general than $\rumo$ because it allows for flexible attention per frame. So the rejection of $\rcg$ with stable preferences does not follow from the rejection of $\rumo$. Taken together, our results show that our experimental subjects behave as if they are maximizing their preferences given a consideration set that follows the $\la$ restriction. 

\begin{table}
\begin{center} 
\caption[Consistency preferences]{\textsc{Testing Results under Preference Stability}  } \label{results_joit_pref} 
\centering
\begin{tabular}{@{\extracolsep{2pt}} |l cc  |  @{}}
\cline{1-3}
\multicolumn{3}{|c|}{}\\
\textsc{Model}&\textsc{$\mathrm{T}_n$}&\textsc{p-value}\\ [2ex]
\cline{1-3} 
\multicolumn{3}{|c|}{}\\
$\rumo$ &	3231.59 & <0.001\\[2ex]
$\la$ &24959.06 & 0.524\\[2ex]
$\rcg$  &24840.23& 0.001\\
\multicolumn{3}{|c|}{}\\ \cline{1-3}
\multicolumn{3}{l}{\footnotesize{Notes: Number of bootstrap replications=1000.}}
\end{tabular}
\end{center} 
\end{table}

\subsection{Discussion}\label{subsection:discussion}
Our findings strongly support the hypothesis that the population behaves as if it is consistent with the $\la$ model of limited consideration and has a stable distribution of preferences across frames.\footnote{We also rejected the hypothesis of whether the $\la$ model with homogeneous preferences can explain the data. See Appendix~\ref{app: homogeneous LA}.} All frame effects observed in our descriptive analysis are fully captured by the variation in the random consideration rule that changes conditional on the frame. In contrast, the traditional $\rumo$ fails to describe the population behavior. To confirm that our testing procedure has power against $\la$, in Appendix~\ref{app: MC simulations}, we access the performance of our procedure using Monte Carlos simulation. In particular, we show that our test can reject the false null hypothesis of data being consistent with the $\la$ model with high frequency in finite samples that are comparable to our experiment.
\par
We highlight that our analysis cannot exclude the possibility that other models of behavior could also explain the population behavior in our experiment. We have only established that the population behavior is \emph{as if} it is consistent with a $\rhrc$-rule. 
\par
Although we do not impose any restrictions on preferences, e.g expected utility, our results relate to the work of \citet{freeman2018eliciting}. They provide an alternative mechanism for the selection of a riskless lottery (default) over dominant risky choices from pairwise comparisons, when binary choice sets are presented as lists. They propose a theoretical explanation of the choice of the riskless choice with a model of reference dependence. The class of reference dependence models used by these authors is a special case of utility maximization. Recall that we find evidence against $\rumo$ in our experiment, thus, ruling out \citet{freeman2018eliciting} mechanism for our environment with costly consideration. In addition, in our experimental design, subjects are not required to choose from lists nor are restricted to pairwise comparisons. For an extended discussion of the role of misperception see Appendix~\ref{subsectionapp:imperfect_perception}.
\par 
We have maintained the assumption that the default alternative is also the worst alternative for both the $\la$ and $\rcg$ models. However, $\rumo$ allows the default to be ranked arbitrarily. Hence, the main findings that $\la$ explains the dataset and $\rumo$ fails to do the same are robust to this assumption. We leave it as an open question whether $\rcg$ can explain this dataset if this assumption is relaxed. Nevertheless, we believe that this assumption is reasonable in our experimental setup.
\par
We have done our empirical analysis without conditioning on observable heterogeneity (e.g., age or gender). Attention and preferences may differ across different demographic groups. Methodologically, our tools can be applied after conditioning on observable heterogeneity, as explained in \citet{kitamura2018nonparametric}. The study of consideration set rules and their relation to demographics is beyond the scope of this paper.
\par 
It is noteworthy that if individuals have convex risk preferences to mix between lotteries, repetition of discretized choice tasks may make it hard to estimate risk preferences.  \citet{feldman2020revealing} show that subjects who mix between lotteries in convex budgets sets are more likely to randomize choices in a repeated discretized task. Our experiment that uses different cost-treatments and randomizes the order of lotteries in presentation may alleviate such concerns compared to previous experiments in which the same set of lotteries in a fixed order was repeatedly presented to subjects.\footnote{We thank an anonymous referee for pointing out this issue.} 
\par 
We finish this section by discussing our model and our findings in relation to Rational Inattention (RI) models. \citet{caplin2016rationalconsideration} shows that rational inattentive DMs form (deterministic) consideration sets. Generally, RI primitives cannot be point-identified with standard stochastic choice datasets. Nonetheless, RI models may still have testable implications in standard stochastic choice datasets. In Appendix~\ref{subsectionapp:RI}, we show that a representative RI DM is compatible with deterministic consideration sets (i.e, the presence of zero probability of choice), which is not supported in our data. The case of a population of heterogeneous rational inattentive DMs and the aggregation of such behavior in the population is left for future research. 

\section{Conclusion} \label{section:conclusion}
We have designed and implemented a novel experiment with a large sample that allowed us to statistically discern among competing models of population behavior. By exogenously varying choice sets and the frames induced by the cost of considering alternatives, we can disentangle two sources of stochastic behavior: limited consideration and preference heterogeneity. We use this novel dataset to test $\rumo$ and two models of limited consideration, $\la$ and $\rcg$. 
\par
These models provide testable implications on choices that uniquely identify the stochastic consideration set rule from data. By calibrating consideration given the theory, we show that testing the $\rhrc$-model can be cast into \citet{kitamura2018nonparametric} framework for testing $\rumo$. That is, we show that there exists a stochastic rule (computed from data) that is $\rumo$ if and only if observed choices are generated by a population of individuals consistent with the $\rhrc$-model.
\par
We provide evidence against classical $\rumo$, since consideration costs are binding for some individuals in the population. In contrast, we find support for the $\la$ model with heterogeneous preferences. Crucially, we cannot reject that the distribution of preferences implied by $\la$ is the same across all attention frames. This means that once we disentangled attention and preferences under $\la$, the recovered distribution of preferences does not change with the frame. 

\appendix
\section{Proofs}\label{sectionapp:proofs}
\subsection{Proof of Lemma~\ref{lemma:empiricalcontent}}
We define $m_A(\{a\})=p(a,A)$, and $m_A(D)=0$ for all $D\subseteq A$, $D\neq\{a\}$. Let $\tilde{\pi}\in\Delta(R(X))$ be the uniform distribution. The pair $((m_A)_{A\in\mathcal{M}},\tilde{\pi})$ is a $\hrc$-rule. We now prove that it generates any data $P$. By definition if $P$ can be generated by a $\hrc$-rule, then we have that 
\[ 
p(a,A)=\sum_{D\subseteq A}m_A(D)\sum_{\rr\in R(X)}\tilde{\pi}(\rr)\Char{a\rr b,\forall b\in D}
\]
for all $A$ and $a\in A$.
Rearranging and replacing the choice of $\tilde{\pi}$ in the above equation we get that
\[
\sum_{D\subseteq A}m_A(D)\sum_{\rr\in R}\tilde{\pi}(\rr)\Char{a\rr b,\forall b\in D}=\frac{1}{\abs{R(X)}}\sum_{\rr\in R(X)}[\sum_{D\subseteq A}m_A(D)\Char{a\rr b,\forall b\in D}]\,.
\]

For given $\rr$ and $m_A(\{a\})=p(a,A)$, we have 
\[
\sum_{D\subseteq A}m_A(D)\Char{a\rr b,\forall b\in D}=p(a,A)\Char{a\rr a}=p(a,A)
\]
because $\rr$ includes the diagonal $a\rr a$ for all $a\in X$. 

This implies that 
\[
\frac{1}{\abs{R(X)}}\sum_{\rr\in R(X)}[\sum_{D\subseteq A}m_A(D)\Char{a\rr b,\forall b\in D}]=p(a,A)
\]
given that
\[
\frac{1}{\abs{R(X)}}\sum_{\rr\in R(X)}[\sum_{D\subseteq A}m_A(D)\Char{a\rr b,\forall b\in D}]=\frac{1}{\abs{R(X)}}\sum_{\rr\in R(X)}p(a,A)=p(a,A)\,. 
\]

%%%%%%%%%%%%%%%%%%%%%%%%%%%%%%%%%%%%%%%%%%%%%%%%%%%%%%%%%%%%%%%%%%%%%%%%%%%%%%%%%%%%%%%%%%%%%%%%%%%%%%%%%%%%%%%%%%%%%%%%%%%%%%%%%%%%%%%%%%%%%%%
\subsection{Proof of Theorem~\ref{thm:calibration of m}}
(i) implies (ii).
A complete stochastic choice rule $P$ is a $\hrc$-rule if there exists a pair $(m,\pi)$ such that 
\[
p(a,\A)=\sum_{\D\subseteq \A}m_\A(\D)\sum_{\rr\in R(X)}\pi(\rr)\Char{a \rr b,\:\forall\: b\in \D}\,,
\]
for all $a\in X$ and $\A\in\M$, where we exchanged the summation operator with respect to the consideration sets and the linear orders exploiting independence. 

Note that we can write the probability of the default alternative as $p(o,\A)=1-\sum_{a\in \A}p(a,\A)$. This implies that
\[
1-p(o,\A)=\sum_{\D\subseteq \A}m_\A(\D)[\sum_{a\in \A}\sum_{\rr\in R(X)}\pi(\rr)\Char{a \rr b,\:\forall\: b\in \D}]\,,
\]
where the summation operator with respect to the items $a\in \A$ can be exchanged with the summation over consideration sets. This is possible because the latter summation does not depend on the items $a\in \A$. 

Now, we notice that $\sum_{a\in \A}\sum_{\rr\in R(X)}\pi(\rr)\Char{a \rr b,\:\forall\: b\in \D}=1$ for all $D\subseteq A$.
This implies that the default probability does not depend on the distribution of preferences and can be written in terms of the cumulative distribution of the consideration set distribution:

\[
p(o,\A)=1-\sum_{\D\subseteq \A,\D\neq \emptyset}m_\A(\D).
\]

We let the capacity $\varphi^*:2^X\to [0,1]$ be defined by $\varphi^*(A)=p(o,\A)$.
\par
The fact that $\eta=\eta^\rc$ under the correct specification of the link function follows from our monotonicity assumptions and the existence of a unique Mobius inverse of the mapping $v(\cdot)=\sum_{C\subseteq \cdot}\eta(C)$ \citep{shafer1976mathematical,chateauneuf1989some}. We provide specific derivations for each of the models of interest in this paper, to connect them with the existing literature, but they follow directly from the general $\eta^\rc$ formula. 
\par
For given $\rc\in\{\la,\mm,\rcg,\rum\}$ and $P$:

\begin{itemize}

\item If $m\in \mathcal{M^{\la}}$, then $m_A(D)=\frac{\eta(D)}{\sum_{C\subseteq A}\eta(C)}$ for some $\eta\in \Delta(2^X)\cap \Real_{++}$. 

This means that $\frac{\varphi^*(X)}{\varphi^*(A)}=\sum_{D\subseteq A}\eta(D)$. Then by \citet{shafer1976mathematical} it must be that 
\[
\eta(D)=\sum_{B\subseteq D}(-1)^{\abs{D\setminus B}}\frac{\varphi^*(X)}{\varphi^*(B)}=\sum_{B\subseteq D}(-1)^{\abs{D\setminus B}}\frac{p(o,X)}{p(o,B)}\,;
\]

\item If $m\in \mathcal{M}^{\mm}$, then $m_A(D)=\frac{\eta(D)}{\sum_{C\subseteq A}\eta(C)}$ for some $\eta\in \Delta(2^X)\cap \Real_{++}$ with,

\[
\eta(D)=\prod_{a\in X\setminus D}\left(1-\gamma(a)\right)\prod_{b\in D}\gamma(b),
\]
and $\gamma:X\rightarrow (0,1)$. 
This implies by simple computation that 
\[
\gamma(a)=1-\frac{\varphi^*(A)}{\varphi^*(A\setminus{\{a\}})}=1-\frac{p(o,A)}{p(o,A\setminus{\{a\}})} 
\]
for some $A\in \M$ that contains $a$;

\item If $m\in \mathcal{M}^{\rcg}$, then $m_A(D)=\sum_{C:C\cap A=D}\eta(C)$ for some $\eta\in \Delta(2^X)$. Then
\[
\varphi^*(A)=\sum_{D\cap A\neq \emptyset}\eta(D).
\]
Using \citet{shafer1976mathematical} and \citet{chateauneuf1989some} we conclude that
\[
\eta(D)=\sum_{A\subseteq D:D\in \M}(-1)^{\abs{D\setminus{A}}}(1-\varphi^*(X\setminus{A}))=\sum_{A\subseteq D:D\in \M}(-1)^{\abs{D\setminus{A}}}(p(o,X\setminus{A}));
\]

\item If $m$ is $\rum$, then obviously $m_{A}(D)=\Char{A=D}$.
\end{itemize}

To establish that $m=m^{\rc}$ for given $\rc\in\{\la,\mm,\rcg,\rum\}$ and $P$, we exploit the uniqueness of $m$, which is a consequence of the invertibility of the Mobius transform and the completeness of $P$. In particular, if $(m,\pi)$ and $(m',\pi)$ represent the same $P$ then it must be that $m'=m$ for the cases of $\rc\in\{\la,\mm,\rcg,\rum\}$. 
To see that this is true recall that if $P$ is a $\rhrc$-rule with $(m,\pi)$, then 
$1-\sum_{\D\subseteq \A,\D\neq \emptyset}m_\A(\D)=\varphi^*(A)$.
This is exactly the same for the case where there is homogeneity in the preferences such that there is a linear order $\succ\in R(X)$ such that $\pi(\succ)=1$. Since this equivalence does not depend on the distribution of preferences and due to the completeness of the dataset, we can use this fact to apply known results from the consideration set literature regarding the uniqueness of $m$.

Now, by the Mobius inverse, it follows that 
\[
\eta^{\la}(D)=\sum_{B\subseteq D}(-1)^{\abs{D\setminus B}}\frac{p(o,X)}{p(o,B)}\,,
\] 
for all $D\in 2^X$. 
In particular,
\begin{itemize}
\item By Theorem~$3.1$ in \citet{brady2016menu}, it must be that $m$ is uniquely identified by 
\[
m_A(D)=\frac{\eta(D)}{\sum_{C\subseteq D}\eta(D)}\,,
\]
where $\eta\in \Delta(2^X)\cap \Real_{++}$ follows from the requirement that $\sum_{B\subseteq D}(-1)^{\abs{D\setminus B}}\frac{p(o,X)}{p(o,B)}>0$ for all $D\in 2^X$.

\item Given $\gamma^{\mm}(a)=1-p(o,{a})\in (0,1)$ for all $a\in X$ (which is well-defined by the completeness of $P$) and $\eta^{\mm}(D)=\prod_{a\in X\setminus D}\left(1-\gamma^{\mm}(a)\right)\prod_{b\in D}\gamma^{\mm}(b)$, it follows that $m$ is uniquely identified by
\[
m_A(D)=\prod_{a\in D}\gamma^{\mm}(a)\prod_{b\in A\setminus D}(1-\gamma^{\mm}(b))\,,
\]
for all $A\subseteq D$. Note that $\prod_{b\in \emptyset}\gamma^{mm}(b)=1$ by convention. Also observe that uniqueness follows from Theorem~$3.3$ in \citet{brady2016menu} since the $\mm$ restriction is a special case of the $\la$ restriction.

\item Given $\eta^{\rcg}(D)=\sum_{A\subseteq D:D\in \M}(-1)^{\abs{D\setminus{A}}}(1-p(o,X\setminus{A}))\geq 0 $ it follows by Theorem~$1$ in \citet{aguiar2017random} that $m$ is uniquely identified by
\[
m^{\rcg}_A(D)=\sum_{C:C\cap A=D}\eta^{\rcg}(C)\,,
\]
for all $D\subseteq A$, where $D\neq \emptyset$ and $m_A(\emptyset)=1-\sum_{D\subseteq A,D\neq \emptyset}m_A(D)$. 
\item The case of $\rum$ is trivial. 
\end{itemize}

%%%%%%%%%%%%%%%%%%%%%%%%%%%%%%%%%%%%%%%%%%%%%%%%%%%%%%%%%%%%%%%%%%%%%%%%%%%%%%%%%%%%%%%%%%%%%%%%%%%%%%%%%%%%%%%
\subsection{Proof of Theorem~\ref{thm:WRU characterization}}
(i) implies (ii).
If $P$ is a $\rhrc$-rule then by Theorem~\ref{thm:calibration of m}, under conditions (i) and (ii), it must be that 
\[
p^{\rc}_{\pi}(a,\A)=\dfrac{p_{m,\pi}(a,\A)-\sum_{C\subset\A}m^{\rc}_\A(C)p^{\rc}_{\pi}(a,C)}{m^{\rc}_{\A}(\A)}\,,
\]
where $p_{m,\pi}(a,\A)=\sum_{D\subseteq A}m_A^{\rc}(D)[\sum_{\succ\in R(X)}\pi(\succ)\Char{a\succ b\forall b\in D}]$. Following the recursive formula, we can show that
\[
p^{\rc}_{\pi}(a,\A)=\dfrac{m^{\rc}_{\A}(\A)[\sum_{\succ\in R(X)}\pi(\succ)\Char{a\succ b\forall b\in A}]}{m^{\rc}_{\A}(\A)}=\sum_{\succ\in R(X)}\pi(\succ)\Char{a\succ b\forall b\in A}\,.
\]
This implies that $P^{\rc}$ is a $\rum\text{-}\hrc$-rule.

(ii) implies (i).
Under conditions (i) and (ii), the fact that
\[
p^{\rc}_{\pi}(a,\A)=\dfrac{p_{m,\pi}(a,\A)-\sum_{C\subset\A}m^{\rc}_\A(C)p^{\rc}_{\pi}(a,C)}{m^{\rc}_{\A}(\A)}
\]
implies that for all $A\in \M$ and all $a\in A$,
\[
p(a,A)=\sum_{D\subseteq A}m^{\rc}_A(D)p^{\rc}_{\pi}(a,D)\,. 
\]
If $P^{\rc}$ is a $\rum\text{-}\hrc$-rule, then it implies that there exists $\pi \in \Delta(R(X))$ such that
\[
p^{\rc}_{\pi}(a,\A)=\sum_{\succ\in R(X)}\pi(\succ)\Char{a\succ b\forall b\in A}\,.
\]
Hence, $P$ is a $\rhrc$-rule. In fact, for all $A\in \M$ and all $a\in A$, it must be that the pair $(m^{\rc},\pi)$ generates the dataset $P$:
\[
p(a,A)=\sum_{D\subseteq A}m^{\rc}_A(D)\sum_{\succ\in R(X)}\pi(\succ)\Char{a\succ b\forall b\in A},.
\]

\subsection{Proof of Theorem~\ref{thm:identification}}
We first prove that if $P$ is described by $(m,\pi)$ and $(m',\pi')$, then it must be that $m=m'$. This follows from \citet{chateauneuf1989some}. In particular, \citet{brady2016menu} shows the identification results for $\rc=\la$, while \citet{aguiar2017random} provides identification results for $\rc=\rcg$. For $\rc=\mm$ the result holds trivially.

Fixing $m$, if $P$ is described by both $(m,\pi)$ and $(m,\pi')$, then 
\[
p_{\pi}^{\rc}(a,A)=\dfrac{p(a,\A)-\sum_{C\subset\A}m^{\rc}_\A(C)p^{\rc}_{\pi}(a,C)}{m^{\rc}_{\A}(\A)}\,,
\]
and 
\[
p_{\pi'}^{\rc}(a,A)=\dfrac{p(a,\A)-\sum_{C\subset\A}m^{\rc}_\A(C)p^{\rc}_{\pi'}(a,C)}{m^{\rc}_{\A}(\A)}\,,
\]
for all $a\in A$ and nonempty $A\subseteq X$, which follows from Definition~\ref{def:calibrated P}. By condition (ii), $m_{A}^{\rc}(A)>0$ and using the recursive definitions above for binary sets, we can see that $p_{\pi}^{\rc}(a,\{a,b\})=p_{\pi'}^{\rc}(a,\{a,b\})$ for any $a,b\in X$. For a fixed $m$ the recursive formula leads to the equivalence $p^{\rc}_{\pi'}=p^{\rc}_{\pi}$.

\section{Sensitivity Analysis for the Default\label{appendix:sensitivitydefault}}
One key assumption in our setup is that the outside option is only picked under full consideration if the choice set only contains the outside option. In our notation, we write this as $p_{\pi}(o,\emptyset)=1$ and $p_{\pi}(o,A)=0$ for all $A\neq \emptyset$.  
In this section, we relax this assumption to allow the probability of choosing the default to satisfy $p_{\pi}(o,\emptyset)=1$ and $p_{\pi}(o,A)=\frac{e}{\abs{A}}$ with $e\in [0,1)$ for all $A\neq \emptyset$. The sensitivity parameter $e$ is interpreted as the fraction of DMs that choose the dominated default even when there are other available alternatives. This is a violation of the assumption that the default is the worst alternative. Then this implies that under the null of consistency with the $\hrc$-rule the probability of choosing the default is:
\[
p(o,A)=m_A(\emptyset)+\frac{e}{\abs{A}}(1-m_A(\emptyset)).
\]
This assumption is compatible with $\rumo$ and implies that the probability of $o$ being the best alternative in all linear orders is constant across menus under full consideration.\footnote{Indeed, this assumption allows the default to be chosen when compared to other alternatives. This is achieved by putting a mass (equal to $e/\abs{A}$) on the event that $o$ is the first among all alternatives in a given menu. This is a restriction on the distribution of preferences.} Notice, that under this assumption we can calibrate $m_A(\emptyset)$ for all $A\in \M$:
\[
m_A(\emptyset)=\frac{p(o,A)-\frac{e}{\abs{A}}}{\left(1-\frac{e}{\abs{A}}\right)}.
\]
Then for a given sensitivity parameter $e\in[0,1)$, we can calibrate the empirical attention-index $\eta^{\rc}$ without any changes.
Next, we can compute $p_{\pi}$ given the calibrated $m^{\rc},$ for $A\cup \{o\}$, and we can test $\rumo$ here using the tools we have developed. In our current empirical results, since the $\la$ model passes for $e=0$, it is unnecessary to do this sensitivity analysis.    

\section{Comparison with Models of Stochastic Choice}\label{app:section_models}
In this appendix we analyze the connection between the three consideration-mediated choice theories discussed in this paper and models that allow for stochastic behavior exclusively in preferences or in consideration. 

\subsection{Comparison with RUM}\label{subsectionapp:comparison_RUM}

As explained previously, randomness arising from limited consideration as in $\rcg$ and $\mm$ can be rationalized under the umbrella of random utility for a fixed frame. However, $\la$ allows for behavior that is inconsistent with regularity. Therefore $\la$ is not nested in $\rumo$. By construction our model $\rhrc$ generalizes $\rum$ by allowing for independent variation in choices due to limited consideration. In particular, $\rhrc$ is $\rumo$ defined over $X$ (what we call, equivalently, $\rum$) when the stochastic choice rule is such that $m_A(D)=\Char{D=A}$. We call this model $\rum$. 

Moreover, the $\rhrc$ model is more general than $\rumo$. This follows from the analysis in previous sections. In particular, fixing preferences, $\pi(\succ_i)=\Char{\succ_i=\succ}$ for $\succ_i \in R(X)$, $\rhrc$ with $\rc=\la$ reduces to original $\la$ model by \citet{brady2016menu}, and therefore potentially inconsistent with $\rumo$. Of course, when we add frame variation we can distinguish between $\rcg$ and $\rumo$.

\subsection{Comparison to RAM } \label{subsectionapp:comparison_RAM}

\citet{cattaneo2017random} extends many theories of consideration by proposing the Random Attention Model (RAM). The authors allow for random consideration maps in the context of limited attention models. RAM abstracts away from the particular consideration-set-formation rule by considering a class of nonparametric random attention rules. The authors acknowledge that \emph{RAM is best suited for eliciting information about preference ordering of a single decision-making unit when her choices are observed repeatedly}, which justifies the preference homogeneity assumption in their setting. 

Many of the canonical models of limited attention proposed in the literature satisfy the Monotonic Attention property of \citet{cattaneo2017random}. For instance, RAM nests $\la$, $\mm$ and $\rcg$ without preference heterogeneity among other salient models of consideration sets. Additionally, RAM is a strict generalization of $\rumo$. However, our $\rhrc$ is not nested in RAM, see \citet{cattaneo2017random} for a complete description of its relationship to the literature. 

Here we show that, in the presence of preference heterogeneity RAM may fail to rationalize behavior that can be explained by $\rhrc$. First, we formally define the restrictions imposed by RAM.

RAM imposes a \textit{ monotonic attention } restriction on consideration rules: the probability of paying attention to a particular subset does not decrease when the total number of possible consideration sets decreases. Formally,

\begin{definition}[Monotonic Attention] For any $a\in A\setminus D$, $m_A(D)\leq m_{A\setminus \{a\}}(D)$ \label{defn:monotonic_att}
\end{definition}

Moreover, \citet{cattaneo2017random} provides a characterization of the model in terms of the revealed preference information inferred from data. Formally,

\begin{definition} [Revealed Preference, RAM]
 Let $p$ be a RAM. Define $P_R$ as the transitive closure of $P$ defined as 
\[
aPb \text{ if there exists } A\in \M \text{ with }a,b\in A \text{ such that }p(a,A)>p(a,A\setminus \{b\})\,.
\] 
Then $a$ is revealed preferred to $b$ of and only if $a~P_R~ b$. 
\end{definition}

Then, a choice rule has a RAM representation if and only if $P_R$ has no cycles. The following example of a $\rhrc$-rule, which results from a $m\in\mathcal{M}^{\la}$ for two linear orders $\succ_1$ and $\succ_2$ with $\pi(\succ_i)=0.5$ with $i=1,2$, cannot be generated by RAM.

\begin{example}[RAM violation]\label{example_RAM}
Let $X=\{a,b,c\}$ and consider a $\la$ model for the random consideration set probability measure with $\eta(D)$ given as in Table~\ref{table_ex_RAM}. Moreover, consider two preference relations $\succ_1$ such that $a\succ_1 b\succ_1 c$, and $\succ_2$ such that $c\succ_2 b\succ_2 a$.\footnote{In our experiment, this preference heterogeneity can be explained by heterogeneity in risk aversion. For example, let $a\equiv l_4$, $b\equiv l_3$, $c\equiv l_2$, and assume that DMs are expected-utility-maximizers with CRRA Bernoulli utilities. Then $a\succ b\succ c$ for individuals that are risk-neutrals ($\sigma=0$), while $c\succ b\succ a$ for risk averse individuals ($\sigma>0.5$). \citet{holt2002risk} finds that these types are common in their experiment across payment schemes. } The resulting probabilistic choice rule is generated by a $\la\text{-}\hrc$ by construction. However, it cannot be rationalized by RAM since both $aPb$ and $bPa$ (i.e., $p(a,\{a,b,c\})>p(a,\{a,c\})$ and $p(b,\{a,b,c\})>p(b,\{b,c\})$). This means that a $\la\text{-}\hrc$-rule allows cycles of the revealed preference relation $P$, which is ruled out by RAM. 

\begin{table}
	\begin{center}
\caption[Example~\ref{example_RAM}.]{\textsc{Example~\ref{example_RAM}} Stochastic choice rule and random consideration set probability. $p$ is consistent with $\la\text{-}\hrc$ but cannot be generated by RAM.} \label{table_ex_RAM}
		\begin{tabular}{@{\extracolsep{3pt}}|c| c c c c c c c c| } 		\cline{1-9} 	
		&&&&&&&&\\
 & $\{a,b,c\}$& $\{a,b\}$& $\{a,c\}$ & $\{b,c\}$& $\{a\}$& $\{b\}$& $\{c\}$& $\emptyset$\\ \cline{1-9}
  &&&&&&&&\\
$a$&0.305&0.339&0.157&&0.208&&&\\
$b$&0.250&0.339&&0.227&&0.208&&\\
$c$&0.255&&0.300&0.341&&&0.345&\\
$o$&0.190&0.322&0.543&0.432&0.792&0.792&0.655&1\\ \cline{1-9} 
&&&&&&&&\\
$\eta(D)$&0.20	&0.30	&0.01&	0.10&	0.05	&0.05&	0.10&	0.19\\ [1ex]\cline{1-9}
			\end{tabular} 
		\end{center} 
	\end{table}
\end{example}

\subsection{Consideration Cost and Imperfect Perception}\label{subsectionapp:imperfect_perception}

One possible concern with our design is that DMs consider an alternative but misperceives the attributes (i.e., computes the wrong utility). We must point out that this concern applies broadly to any experimental design in which subjects have a nontrivial cognitive task. The following analysis assumes that the consideration cost is fixed. 
First, we need some preliminaries. For a given distribution of preferences $\pi \in \Delta(R(X))$, with perfect perception, there exists a random utility array $\rand{u}=(\rand{u}_a)_{a\in A}$ supported on $\Real^{\abs{A}}$ such that for a given menu of alternatives $A\in\M$: 
\[
\Prob{\rand{u}_a>\rand{u}_b,\:\forall b \in A\setminus{\{a\}}}=\pi(\succ\::\:a\succ b,\: \forall b\in A\setminus{\{a\}}).
\]
Now, miss-perception of any alternative $a\in A$ can be represented by another (possibly wrong) random utility variable $\rand{w}_{a}$ supported on the reals. We let $\rand{w}=(\rand{w}_{a})_{a\in A}$ be the array of such random variables. This random variable represents the subjective value that the DMs assigns to alternative $a$ given her own perception of the item. Hence, $\rand{w}$ and $\rand{u}$ may be different (even when they may be correlated). Without loss of generality, we can define miss-perception as:
\[
\rand{e}_{a}=\rand{w}_{a}-\rand{u}_a,
\]
for all $a \in A$ (and $\rand{e}=(\rand{e}_a)_{a\in A}$).\footnote{Note that in our experimental design, menus are randomly assigned to a DM. In addition, the presentation of each alternative remains the same across menu, conditional on the cost of consideration. Then, it must be that the distribution of miss-perception (by-design) is the same across menus.}  
\par

Under the assumption of independence of preference and attention. The only implication of miss-perception is that subjects' behavior will be governed not by $\pi$ but rather by a different distribution of preferences $\pi_e$ such that:
\[
    \pi_e(\succ:a\succ b ~\forall b \in A\setminus{\{a\}})=\Prob{\rand{u}_a+\rand{e}_a> \rand{u}_b+\rand{e}_b~\forall b \in A\setminus{\{a\}}}.\footnote{Where $\pi_e(\succ:\text{property})$ denotes the cumulative probability of preferences that have a certain property.}
\]
In other words, the population of DMs behavior captured by $P$ will still be represented by a $\rhrc$ model with $(\pi_e,m)$ instead of the true $(\pi,m)$. This means that our design is robust to any arbitrary miss-perception error, in terms of the validity of our conclusions about how good are the different models to describe the population. 
\par 
The only possible problem induced by miss-perception of alternatives is that we may lose the capacity to identify the true distribution of preferences. This possibility again is unavoidable in any experimental design that has a cognitive task. Nonetheless, this possibility is testable in our framework. In particular, if miss-perception exists, it must depend on the cognitive cost. Hence, we have the triple $(\rand{e}_H,\rand{e}_M,\rand{e}_L)$ that represents the miss-perception random array for the high, medium and low cost, respectively. 
Then, the distribution of preferences for any $\rhrc$ model must not be stable across attention treatments with corresponding $(\pi_{e_H},\pi_{e_M},\pi_{e_L})$ distribution of preferences (that differ among costs). 
\par
However, we cannot reject the null hypothesis that $\la$ has a stable distribution of preferences among the different cost distributions (i.e., $\pi_{e_H}=\pi_{e_M}=\pi_{e_L}=\pi$). In that sense, there is no evidence that miss-perception is important in our design.  

\subsection{Relation with Rational Inattention Models}\label{subsectionapp:RI}
Rational inattention (RI) models have recently gained a lot of interest to model situations when choice is hard. However, RI models usually need very rich datasets to be indentified/tested. That is, generally they cannot be identified with standard stochastic choice datasets. In that sense, we cannot do a full comparison between RI models and our approach, since the dataset requirements are different. However, we can derive some implications of RI behavior for our dataset.
\par
RI is a model for individual behavior. To the best of our knowledge the aggregate implications of this model are unknown. Hence we will focus on comparing our approach to a case of a representative RI behavior. The problem of the representative RI DM is to choose the best possible alternative from a choice set. She has a prior $\mu$ over the true value of alternatives, $V=(v_k)_{k\in X\cup \{o\}}$ , with $\mu \in \Delta(V)$. In response to the information structure, the RI DM chooses her optimal information to adquire and optimal action. We focus here on the subclass of RI problems with an additive cost of perception. The result of this problem is a true-value or state dependent stochastic choice rule $p_{v}(\cdot,A)\in \Delta(A\cup \{o\})$, defined as:
\[
p_{v}(\cdot,A)=\argmax_{p}\sum_{a\in A\cup \{o\}}p_v(a,A)v_a\mu(v_a)-\kappa(p_v(\cdot,A),\mu).
\]
For the specification of $\kappa$, we focus on the generalized entropy proposed in \citet{fosgerau2017discrete}, which generalizes widely used entropic cost. \citet{fosgerau2017discrete} shows that this state-dependent stochastic choice is observationally equivalent to an additive random utility choice rule conditional on the support. That is, if $p_v(\cdot,A)\in\Delta(A\cup \{o\})$ (positive probability of choice), then $p_v(\cdot,A)$ is a random utility rule. Even when the underlying utility is fixed (and equal to $v$ without loss of generality), there is randomness in choice due to costly information acquisition. The state-dependent stochastic choice only differs from $\rumo$ when there are items in the choice set that are never chosen. Therefore, the RI DM is compatible with \textbf{deterministic} consideration sets. However, in our experiment we do not observe any element chosen with zero probability. In fact, the lowest probability of choice is $6$ percent across all alternatives in $X\cup \{o\}$ and across all choice sets. 
\par
We have to aggregate across states to derive testable implications for the representative RI DM for our dataset. This is because in our setup, the experimenter does not know ex-ante the true value of alternatives. Preferences over lotteries (when there is not first-order stochastic dominance ordering among them) is unknown before choice. This is an important difference between our experiment and RI experimental literature, since they generally rely on enhanced datasets. We focus on collecting datasets that replicate standard stochastic choice data. 
\par
Using the fact that the sum of random utility rules is also a random utility rule, we notice that the marginal probability of choosing across different states is just the sum over the likelihood of this states (or the distribution of the true preferences). Then, if $P$ admits a representative RI DM: 
\[
 p(a,A)=\sum_{v\in V} p_{v}(a,A)\rho(v),
\]
where $\rho\in \Delta(V)$ is the objective probability of the unobserved states. 
\begin{lemma}
If $p_v(a,A)>0$ for all $a\in A\cup \{o\}$, and all $v\in V$, it follows that if $P$ admits a representative RI DM then, $P$ also admits a $\rumo$ representation. 
\end{lemma}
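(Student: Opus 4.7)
The plan is to exploit the text's earlier observation (citing \cite{fosgerau2017discrete}) that, under the generalized entropy cost, a state-dependent rational-inattention choice rule with full support is observationally equivalent to an additive random utility rule, and then close the argument by noting that $\rumo$ is closed under finite mixtures over the common support $R(X \cup \{o\})$.

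First I would fix an arbitrary menu $A \in \M$ and recall the hypothesis that $p_v(a,A) > 0$ for every $a \in A \cup \{o\}$ and every $v \in V$. By the \cite{fosgerau2017discrete} result invoked in the preceding paragraph, for each state $v$ there exists a probability measure $\pi_v \in \Delta(R(X \cup \{o\}))$ such that
\[
p_v(a,A) = \sum_{\succ \in R(X \cup \{o\})} \pi_v(\succ)\, \Char{a \succ b,\ \forall b \in A}
\]
for every $a \in A \cup \{o\}$ and every $A \in \M$. This is where the full-support assumption is essential: without it, $p_v$ could place zero probability on some chosen items, and $\rumo$ with a regular (nondegenerate) distribution need not generate it.

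Next I would aggregate across states. Using the representation $p(a,A) = \sum_{v \in V} p_v(a,A)\rho(v)$ together with the state-by-state $\rumo$ representation, I exchange the two finite sums to obtain
\[
p(a,A) = \sum_{\succ \in R(X \cup \{o\})} \left[\sum_{v \in V} \rho(v)\, \pi_v(\succ)\right] \Char{a \succ b,\ \forall b \in A}.
\]
Define $\pi(\succ) \equiv \sum_{v \in V} \rho(v)\, \pi_v(\succ)$. Since each $\pi_v$ is a probability measure on $R(X \cup \{o\})$ and $\rho \in \Delta(V)$, the object $\pi$ is a convex combination of elements of $\Delta(R(X \cup \{o\}))$, hence itself an element of $\Delta(R(X \cup \{o\}))$. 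Substituting $\pi$ into the previous display gives exactly Definition~\ref{definition:rumo}, so $P$ is consistent with $\rumo$.

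There is no genuine obstacle here beyond a careful invocation of the Fosgerau--De Palma--Melo representation; the only nontrivial point is to make clear that the full-support assumption is what upgrades each $p_v$ to a bona fide $\rumo$ (and not merely to a support-restricted additive random utility rule), after which the mixture step is purely mechanical and relies on the fact that the family of $\rumo$-rules over a fixed choice universe is a convex set.
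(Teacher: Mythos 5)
Your proposal is correct and follows essentially the same route as the paper, which simply cites \cite{fosgerau2017discrete} for the state-by-state $\rumo$ representation under full support and \cite{ABDsatisficing16} for the fact that a weighted sum of $\rumo$ rules is again a $\rumo$ rule. Your explicit exchange of sums and the observation that $\pi(\succ)=\sum_{v}\rho(v)\pi_v(\succ)$ lies in $\Delta(R(X\cup\{o\}))$ is exactly the content of that closure-under-mixtures step, just written out in detail.
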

The proof of this lemma follows from \citet{fosgerau2017discrete} and \citet{ABDsatisficing16} that showed that the weighted sum of $\rumo$ is also $\rumo$. The case in which one allows heterogeneity in discrete consideration sets, induced by RI, is difficult and left for future research.  

\section{Experiment}\label{sectionapp:section_experiment}

\subsection{Sample}\label{subsectionapp:sample}
The primitive for the considered models is the estimated stochastic choice rule $\hat{p}_f(a,A)$ for $f\in\{\text{H, M, L}\}$. Therefore, for a fixed level of the cost $f$, the minimal required sample size was calculated to be proportional to the cardinality of the choice set. To maximize the number of observations for a given set of individuals, some individuals faced two decision tasks. To prevent possible learning, these subjects faced disjoint choice sets. That is, every subject faced either the full choice set $X\cup\{o\}$ or two choice sets that only had the outside option in common. Therefore, because of random assignment, in our experiment 
\begin{enumerate}
	\item  171 subjects faced only the whole choice set (the targeted number is 180);
	\item  757 subjects faced pairs of disjoint choice sets: the set of size 4 and the set of size 1 (the targeted number is 750);
	\item 1207 subjects faced pairs of disjoint choice sets: the set of size 3 and the set of size 2 (the targeted number is 1200).
\end{enumerate}

This implies a total of 2135 subjects (the targeted number is 2130) for a total 4099 choices (the targeted number is 4080). Additionally, demographic data and preferences over binary comparison of lotteries were asked and incentivized. The effective number of observations per alternative/choice set/cost is summarized in Table~\ref{observations}.

			\begin{table}	\caption[Observations]{\textsc{Average number of observations per alternative/choice set}}  \label{observations} 
		\centering
			\begin{tabular}{@{\extracolsep{10pt}} |c | c | c | | c | c | c | @{}}
				\cline{1-6}
&&&&&\\
\textsc{Choice set}	 & ~~~$N$~~~ & ~~~$N/\abs{A}$~~~ &\textsc{Choice set}	 & ~~~$N$~~~ & ~~~$N/\abs{A}$~~~ \\ [1ex]	 \cline{1-6}
\textbf{o12345}& 171 & 28.50  & \textbf{o12}& 131 & 43.67   \\ [.5ex]	 \cline{1-3}
\textbf{o2345}& 155 & 31.00 &\textbf{o13}& 118 & 39.33  \\[.5ex]
\textbf{o1345}& 154 & 30.80 &\textbf{o14}& 125 &41.67    \\	 [.5ex]
\textbf{o1245}& 149 & 29.80&\textbf{o15}& 116 & 38.67 \\[.5ex]		
\textbf{o1235}& 156 & 31.20 &\textbf{o23}& 112 & 37.33 \\[.5ex]		
\textbf{o1234}& 143 & 28.60  &\textbf{o24}& 123 & 41.00  \\ [.5ex]	\cline{1-3}
\textbf{o345}  & 131 &32.75 &\textbf{o25}& 120 & 40.00 \\	  [.5ex]
\textbf{o245}& 118 & 29.50 &\textbf{o34}& 121 & 40.33  \\	[.5ex]
\textbf{o235}& 125 & 31.25  &\textbf{o35}& 122 & 40.67  \\[.5ex]	
\textbf{o234}& 116 & 29.00 &\textbf{o45} & 119 & 39.67 \\[.5ex]	\cline{4-6}	
\textbf{o145}& 112 & 28.00 &\textbf{o1}& 155 & 77.50   \\[.5ex]	
\textbf{o135}& 123 & 30.75  &\textbf{o2}& 154 & 77.00  \\[.5ex]	
\textbf{o134}& 120 & 30.00  &\textbf{o3}& 149 & 74.50 \\[.5ex]	
\textbf{o125}& 121 & 30.25 &\textbf{o4}& 156 & 78.00  \\[.5ex]	
\textbf{o124}& 122 & 30.50 &\textbf{o5}& 143 & 71.50 \\	[.5ex]

\textbf{o123}& 119 & 29.75 &&&\\[.5ex]	 \cline{1-6}
\end{tabular}
		\end{table}

\section{Performance of the Test}\label{app: MC simulations}

In this section we study the performance of our test in terms of statistical power. We are going to test the null  hypothesis of $\la$-$\hrc$ when the true choice process presents choice overload.  We consider behavior arising from a mixed population. A fraction $\lambda\in[0,1]$ of the population is consistent with $\mm\text{-}\hrc$ with $\gamma(x)=1/2$ for all $x\in X$ and preferences consistent with expected utility maximization. The remaining fraction, $\lambda$, follows simple heuristics such the DM chooses outside option with probability proportional to the cardinality of the set. If she decides to pay attention to the menu, then she chooses uniformly at random from it.  The process is then consistent with the following stochastic choice rule
\begin{equation}
p(a,A)=\lambda p^{\mm\text{-}\hrc}(a,A)+(1-\lambda) p^{CO}(a,A)
    \label{eq.process}
\end{equation}
where $p^{\mm\text{-}\hrc}(a,A)$ is consistent with $\mm\text{-}\hrc$ with $(m^{\mm},\pi)$ and $\pi(\succ)=1/120$ for all $\succ\in R(X)$; and 
\[
p^{CO}(o,A)=\frac{\abs{A}+1}{\abs{X}+1} \text{ and }p^{CO}(a,A)=\frac{1-p^{CO}(o,A)}{\abs{A}}.
\]
The assumed process implies that a fraction $1-\lambda$ of the population exhibits choice overload.\footnote{This process may intuitively arise when a DM that faced with a choice set only knows the size of the choice set and the alternatives in the grand set $X$. Knowing about the alternatives implies paying a cost $c$ per alternative. Assume that preferences over information are modelled by a willingness to pay attention variable, $w$, that is distributed uniformly in $[0,1]$. Then, given a choice set realization, after knowing $\abs{A}$ DM $i$ decides to pay attention to choice set $A$ if $w_i>\abs{A}\times c$ . This implies that the DM pays attention and decide in the interior of the set with probability $\sum_{a\in A}p(a,A)=1-c\abs{A}$; and $p(o,A)=c \abs{A}$. }

As the proportion of the population that exhibits choice overload increases so should increase the probability of rejecting the null that population behavior is generated by $\rhrc$. On the other extreme, when $\lambda=1$ we should not reject the model. In particular, for any $\lambda<32/39$ the process defined by equation~(\ref{eq.process}) exhibits choice overload. However,  for high values of $\lambda$ the magnitude of this effect may not be significant to reject $\rhrc$.

Table~\ref{Table:power} presents the results for power simulations for sample size $4000$ and $\lambda\in \{0.25,0.50\}$. For 500 replications, the table displays the proportion of simulations that are rejected at the $10$ percent and $5$ percent significance levels. As expected, the fraction of rejections is bigger for smaller values of $\lambda$. For $\lambda=0.25$ the rejection rate is $100$ percent.
We observe that at comparable sample size to our experiment the mixed process is rejected with power close to 1 when the choice overload fraction of DMs is moderate. 

\begin{table}	\caption[Performance of the Test]{The table displays the proportion of rejections at the $10$ percent and $5$ percent significance levels for $\la\text{-}\hrc$. Sample size=4000. Number of MC replications=500. Number of bootstrap replications=500}  \label{Table:power} 
\centering
\begin{tabular}{ |c|c| c| @{}}
\cline{1-3}
\multicolumn{3}{|c|}{\tiny{}}\\
&\multicolumn{2}{c|}{\textsc{Significance level}}\\[1ex] 
\textsc{Process}& \multicolumn{1}{c|}{10\%}& \multicolumn{1}{c|}{5\%} \\ [1ex]
\cline{1-3}
&&\\
~~$\lambda~=0.25$~~~	&~~~~~1~~~~~&~~~~~1~~~~~\\[1ex]
~~$\lambda~=0.50$~~~ &~~~~~0.464~~~~~&~~~~~0.73~~~~~\\[1ex]
\cline{1-3}
\end{tabular}
\end{table}

\section{The  \texorpdfstring{$\rc$}{Behavioral}-model with Homogeneous Preferences\label{app: homogeneous LA}}
In our setup, given an attention rule $\rc$ (e.g., $\la$ or $\mm$) and a given frame, we can recover the underlying full consideration probabilistic choice rule $p^{\rc}_{\pi}$. Under the null hypothesis that our experimental dataset can be generated by a $\rc$-rule with a strict preference relation over $X\cup \{o\}$ (such that $o$ is the worst alternative), it must be that $p^{\rc}_{\pi}$ is degenerate (i.e., $p^{\rc}_{\pi}(a,A)\in \{0,1\}$ for \emph{all} $A$ and $a\in A$). In particular, if we reject that $p^{\rc}_{\pi}(a_1,\{a_1,a_2\})\in \{0,1\}$ for \emph{some} binary menu $\{a_1,a_2\}$ and for \emph{some} attention cost, then we have to reject the $\rc$ model with homogeneous preferences. 
\par
Given that the only model with heterogeneous preferences that was not rejected in our experiment is $\la$, to show the importance of preference heterogeneity, we tested whether the $\la$ model with homogeneous preferences can explain the data. We took menu $\{1,3\}$ and the high cost frame and computed the implied by the $\la$ model the full consideration probability $\hat{p}^{\la}_{\pi}(1,\{1,3\})$. If the data can be explained by the $\la$ model with homogeneous preferences, then $\hat{p}^{\la}_{\pi}(1,\{1,3\})$ should converge in probability to either $0$ ($3\succ 1$ with probability 1) or $1$ ($1\succ 3$ with probability 1). We tested two hypotheses: (i) ${p}_{\pi}(1,\{1,3\})=0$ and (ii) ${p}_{\pi}(1,\{1,3\})=1$. Both were rejected at the $5$ percent significance level ($\text{p-value}<10^{-3}$). As a result we can conservatively claim that the null hypothesis that our experimental dataset can be explained by the $\la$-rule with a single preference relation is rejected at least at the $5$ percent significance level. 
\section{Experiment Instructions} \label{sectionapp:experiment_instructions}

\begin{figure}[H]
		\centering
		\includegraphics[height=9cm]{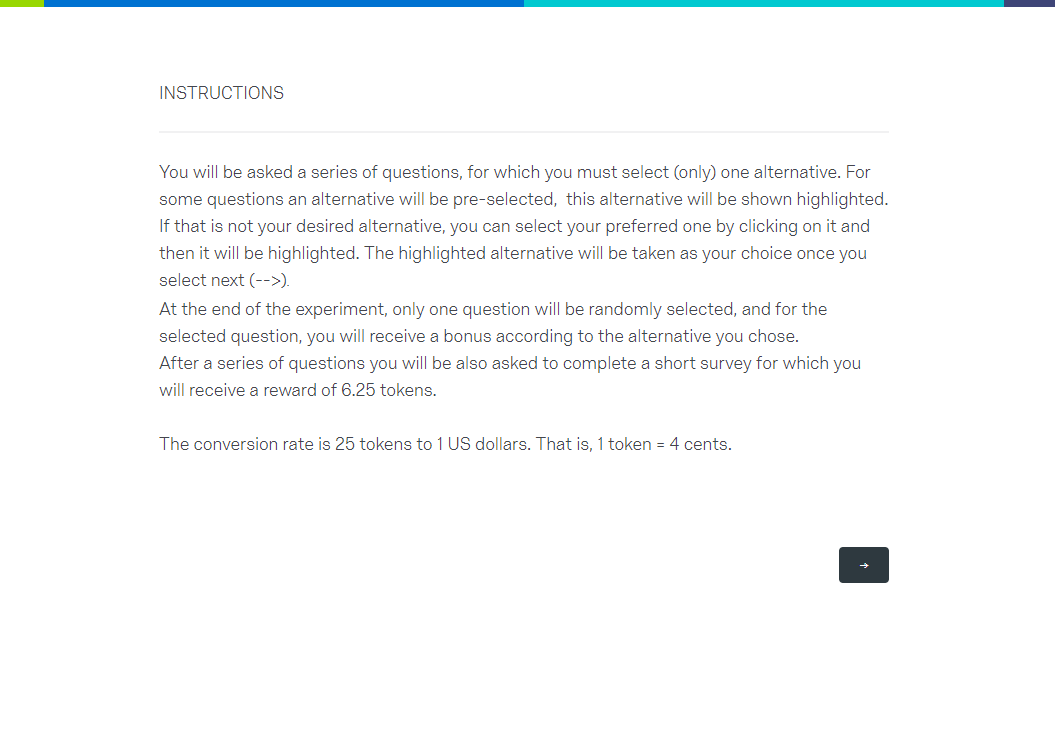}
		\caption{Instructions page 1}\label{instructions1}
	\end{figure}
\begin{figure}[H]
		\centering
		\includegraphics[height=9cm]{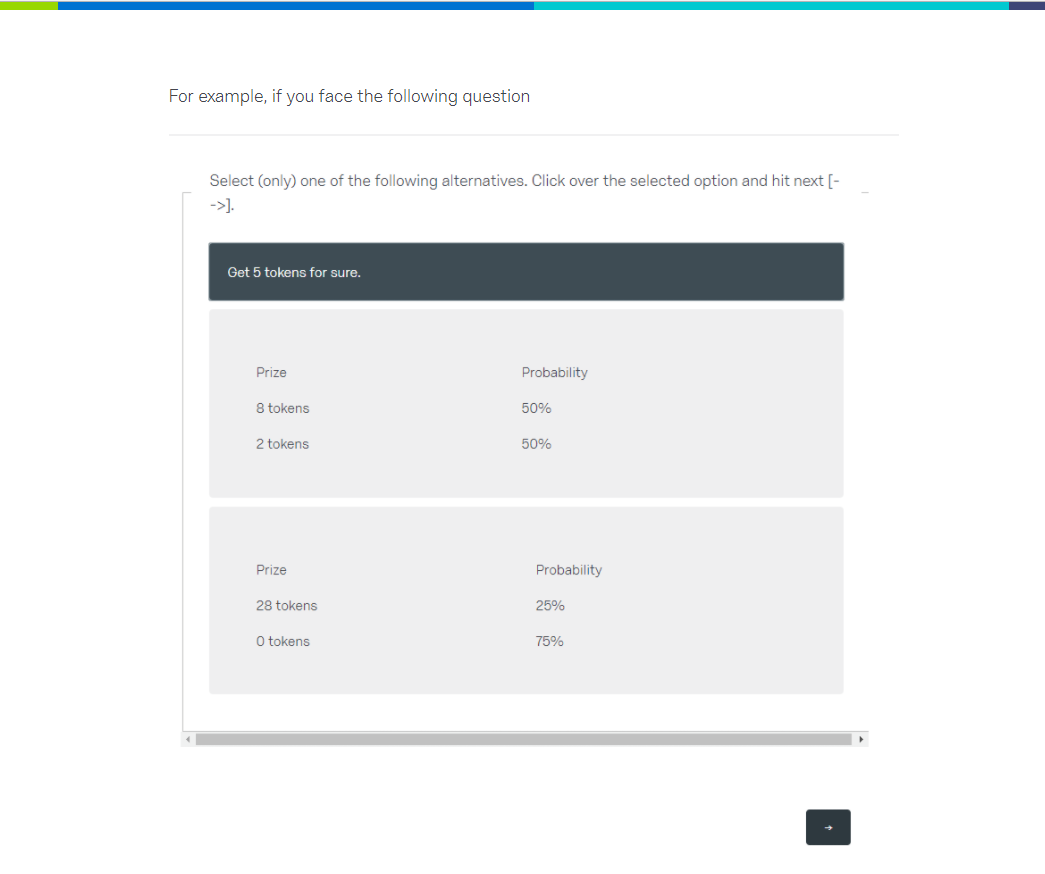}
		\caption{Instructions page 2}\label{instructions2}
	\end{figure}
	\begin{figure}[H]
		\centering
		\includegraphics[height=10cm]{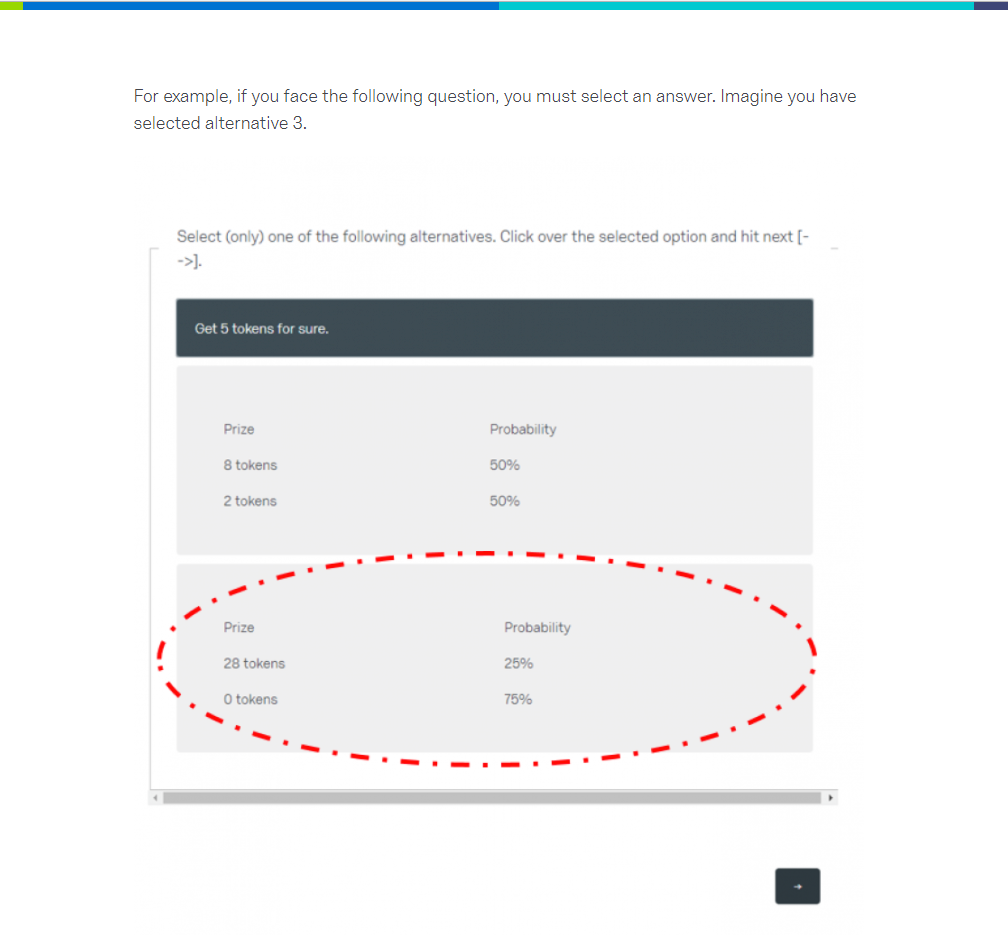}
		\caption{Instructions page 3}\label{instructions3}
	\end{figure}
	\begin{figure}[H]
		\centering
		\includegraphics[height=10cm]{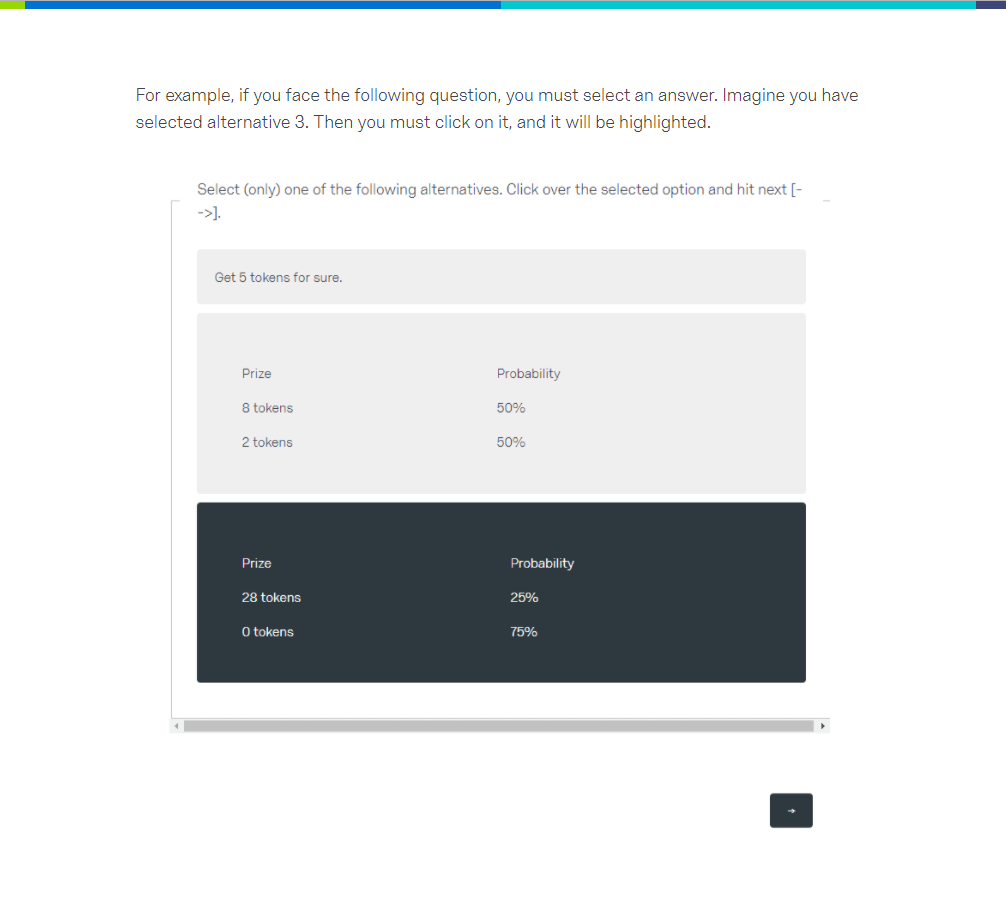}
		\caption{Instructions page 4}\label{instructions4}
	\end{figure}
	\begin{figure}[H]
		\centering
		\includegraphics[height=10cm]{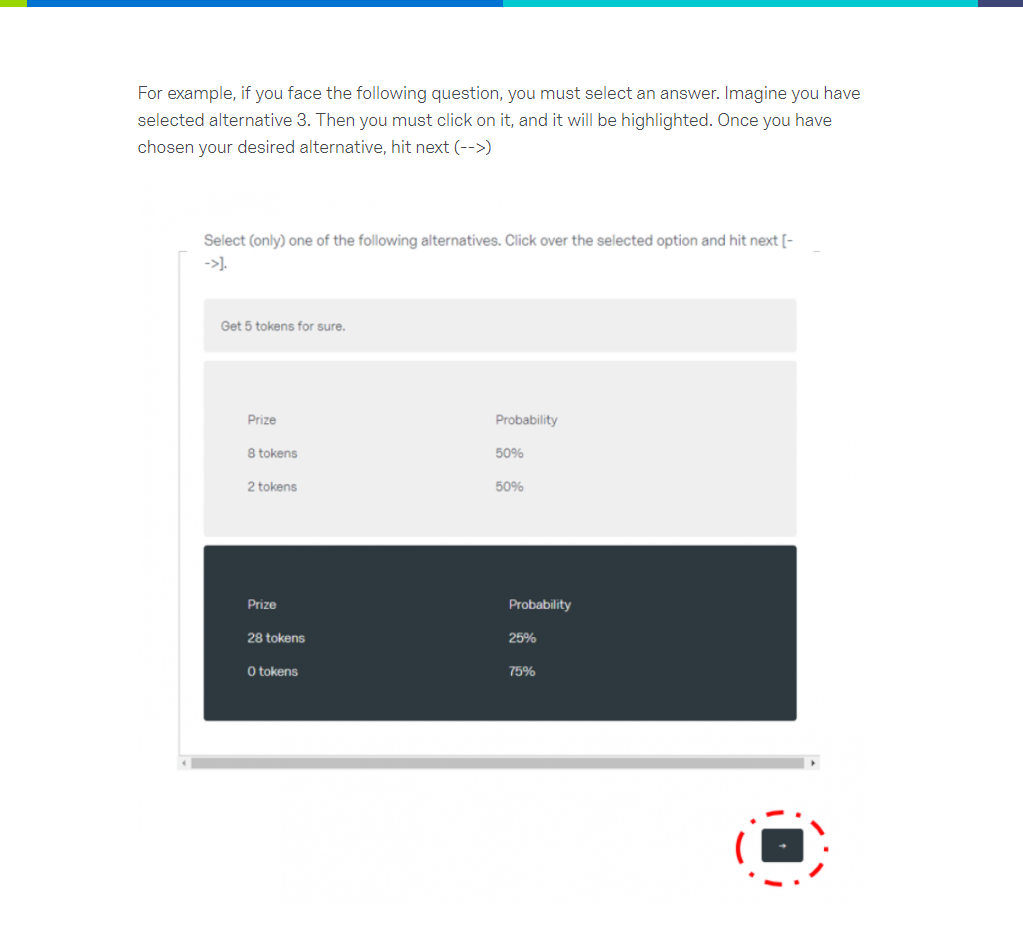}
		\caption{Instructions page 5}\label{instructions5}
	\end{figure}
	\begin{figure}[H]
		\centering
		\includegraphics[height=10cm]{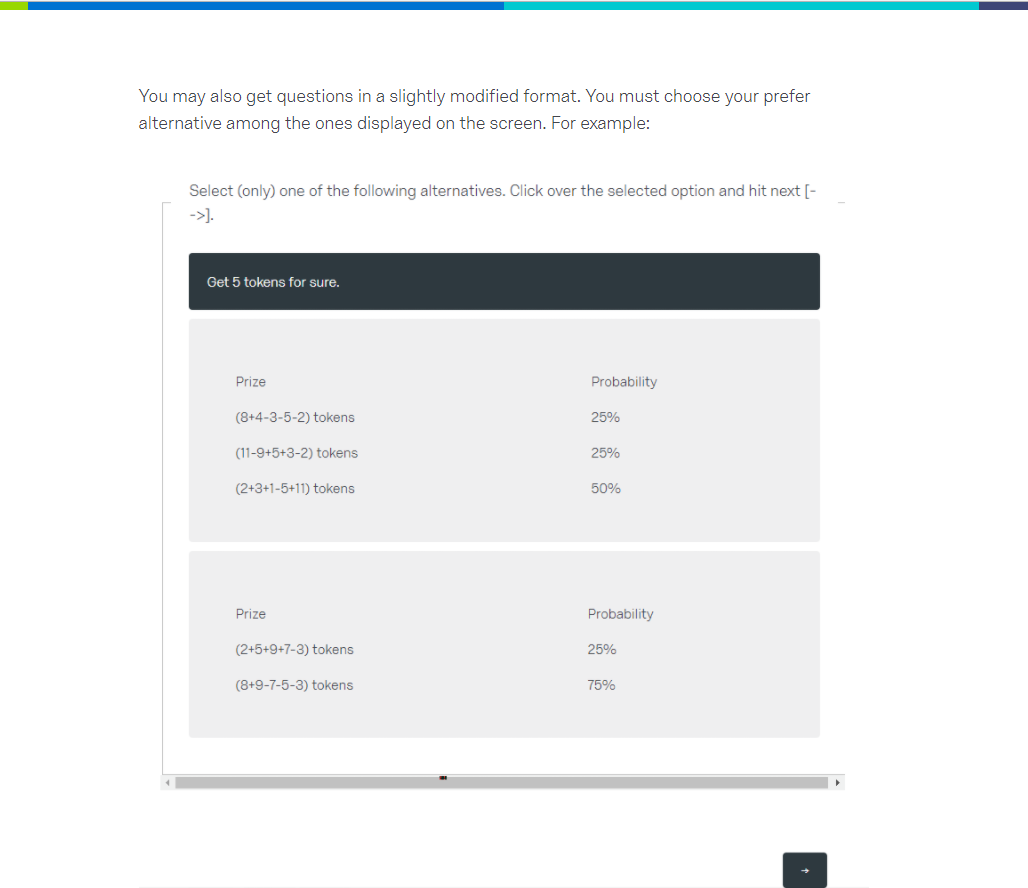}
		\caption{Instructions page 6}\label{instructions6}
	\end{figure}
	\begin{figure}[H]
		\centering
		\includegraphics[height=10cm]{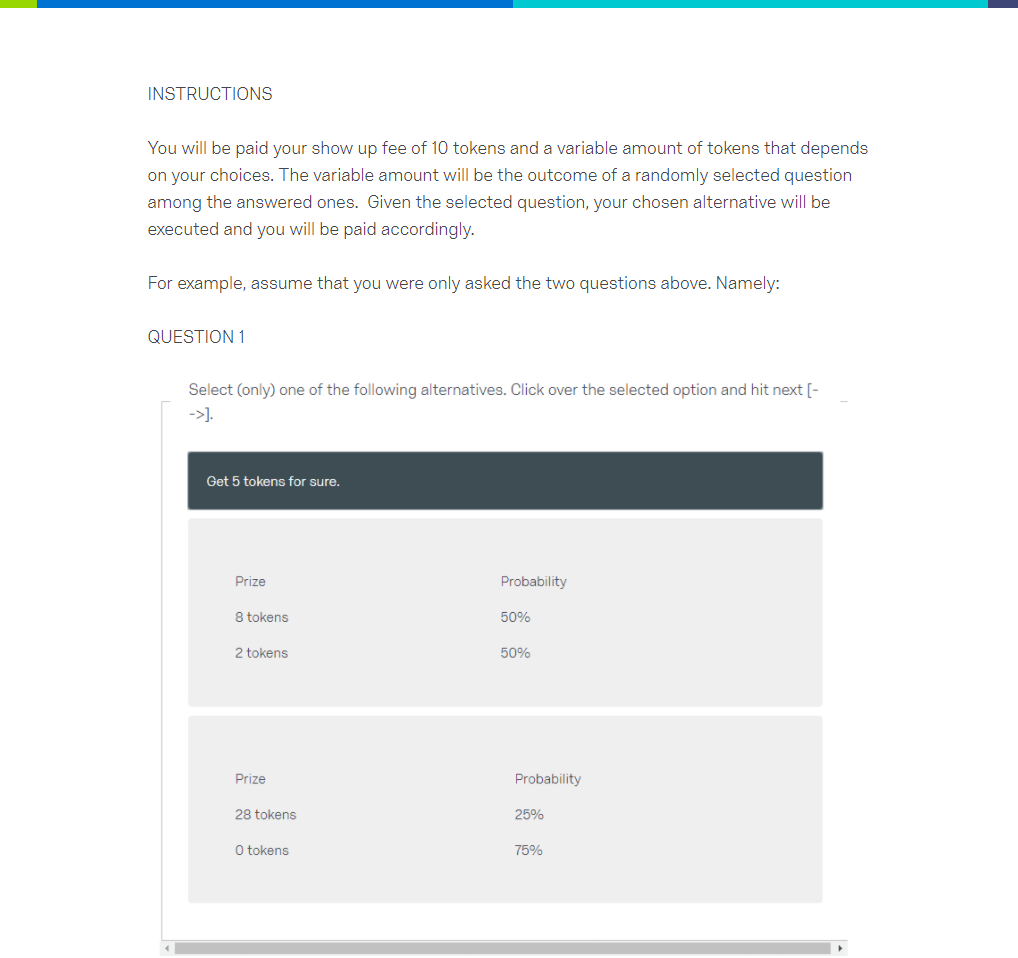}
		\caption{Instructions page 7}\label{instructions7}
	\end{figure}
	\begin{figure}[H]
		\centering
		\includegraphics[height=10cm]{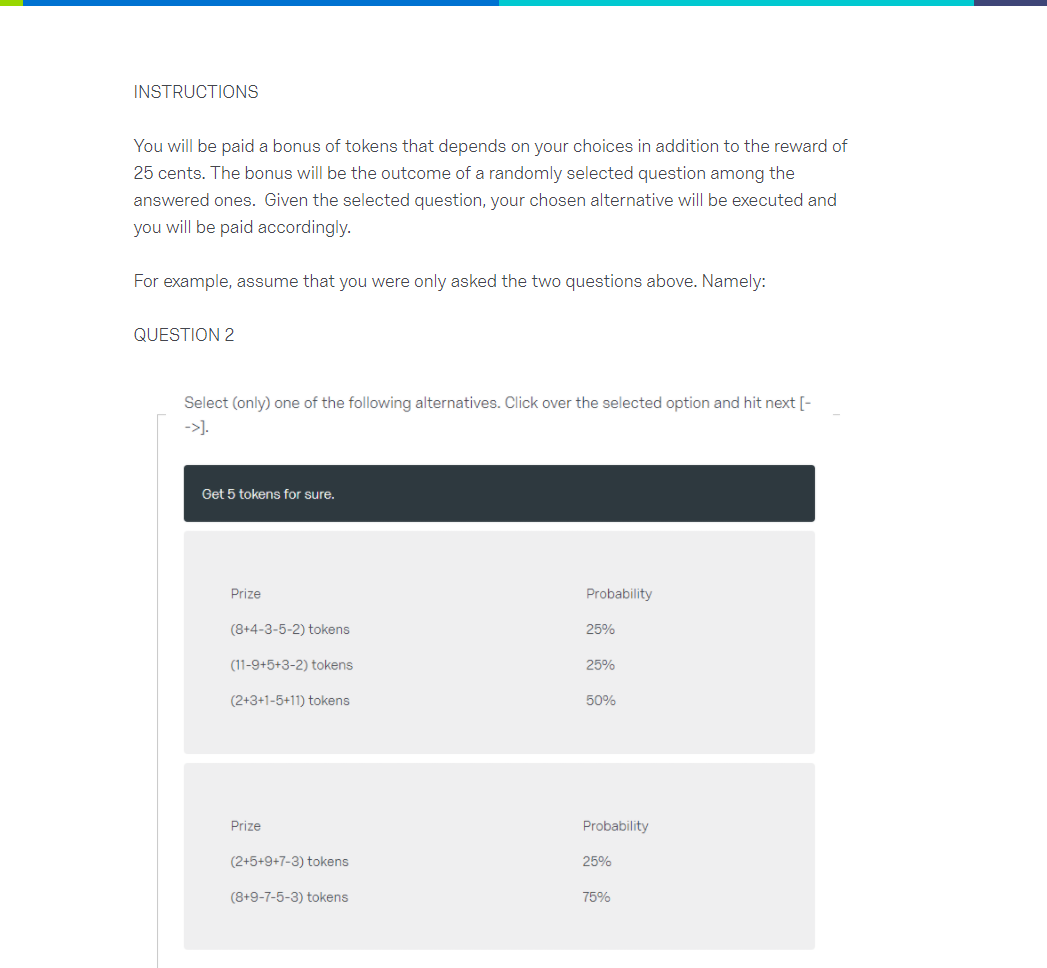}
		\caption{Instructions page 8}\label{instructions8}
	\end{figure}
	\begin{figure}[H]
		\centering
		\includegraphics[height=10cm]{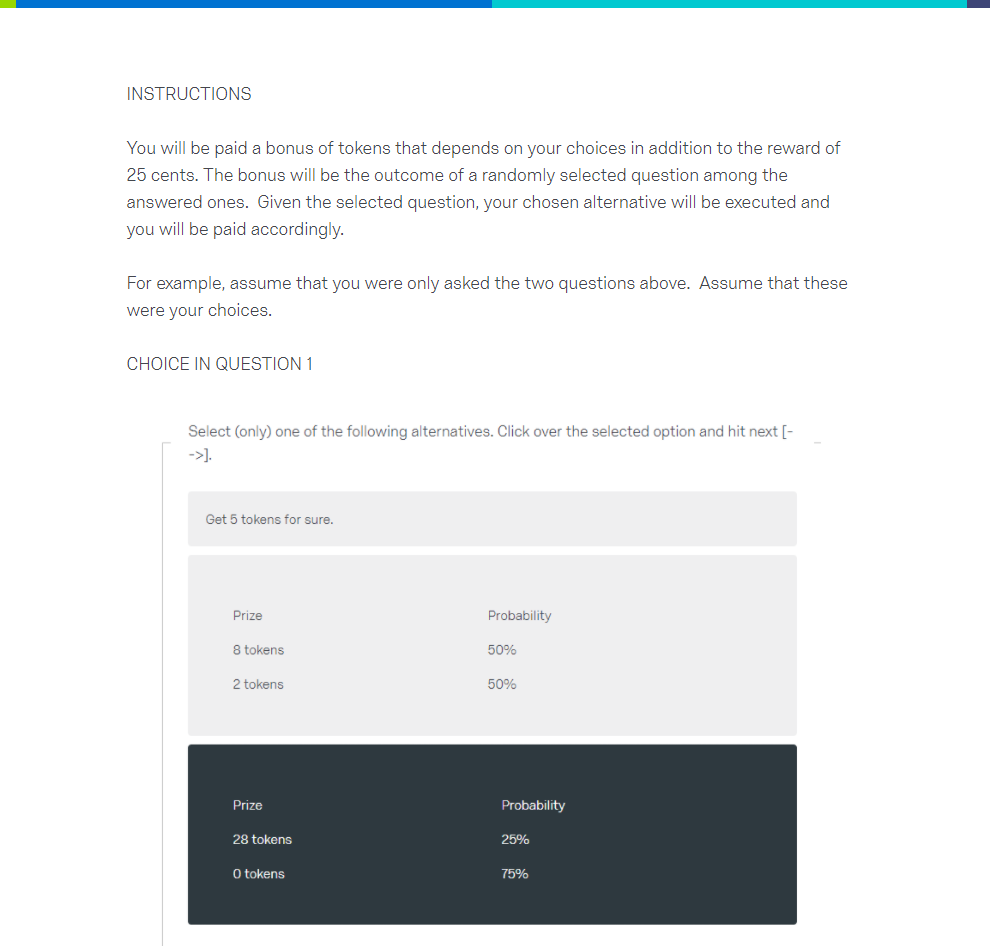}
		\caption{Instructions page 9}\label{instructions9}
	\end{figure}
	\begin{figure}[H]
		\centering
		\includegraphics[height=11cm]{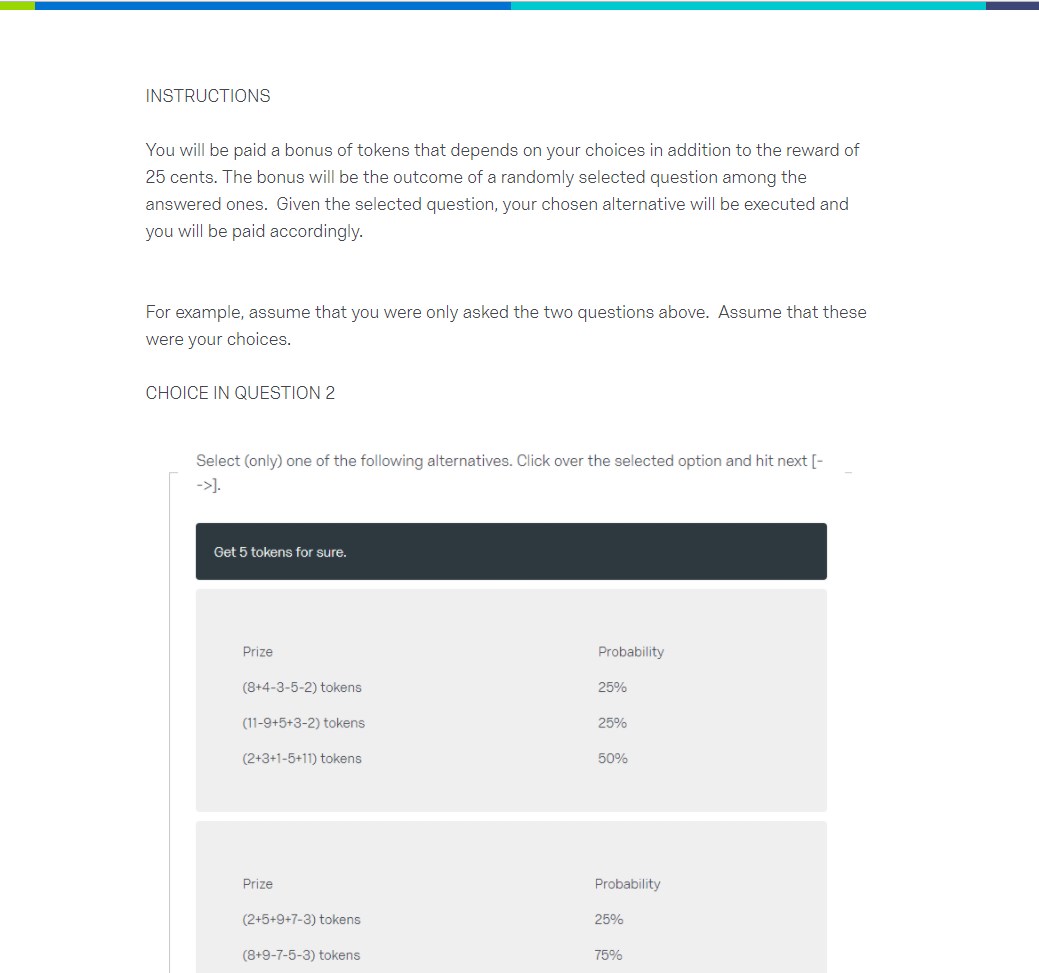}
		\caption{Instructions page 10}\label{instructions10}
	\end{figure}
	\begin{figure}[H]
		\centering
		\includegraphics[height=10cm]{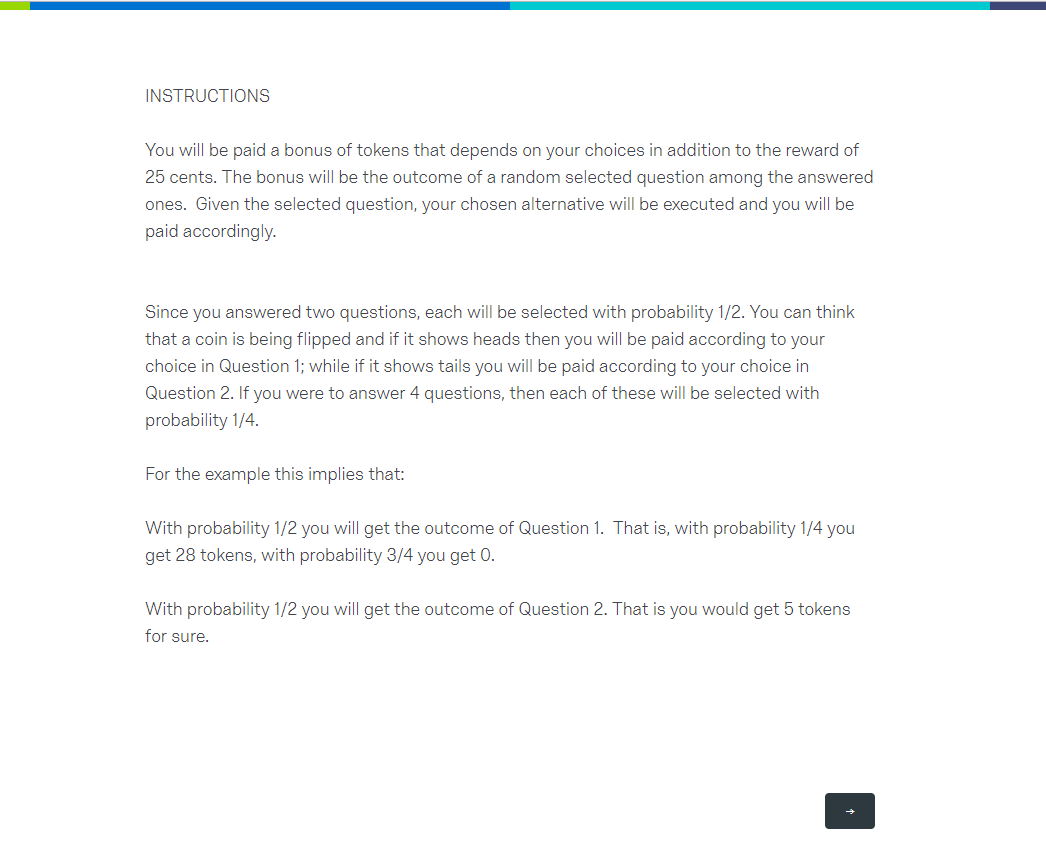}
		\caption{Instructions page 11}\label{instructions11}
	\end{figure}
	\begin{figure}[H]
		\centering
		\includegraphics[height=11cm]{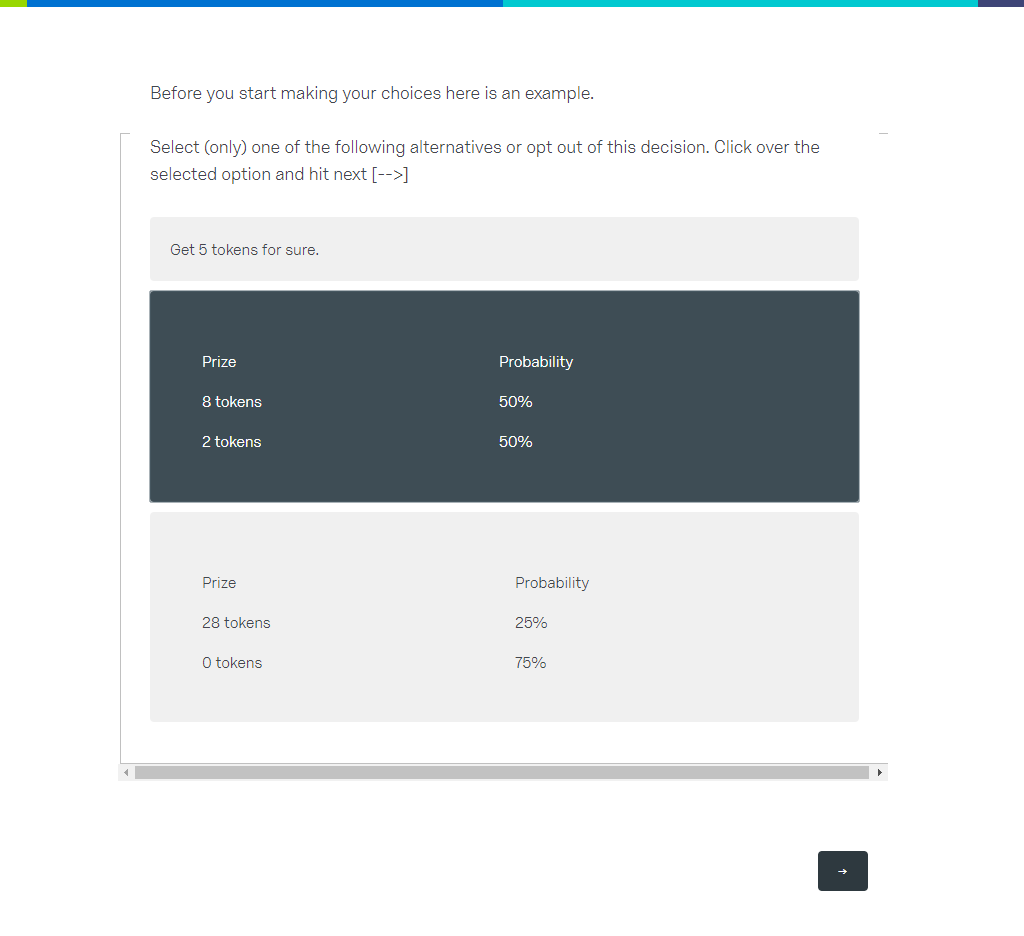}
		\caption{Instructions page 12}\label{instructions12}
	\end{figure}
	\begin{figure}[H]
		\centering
		\includegraphics[height=4.75cm]{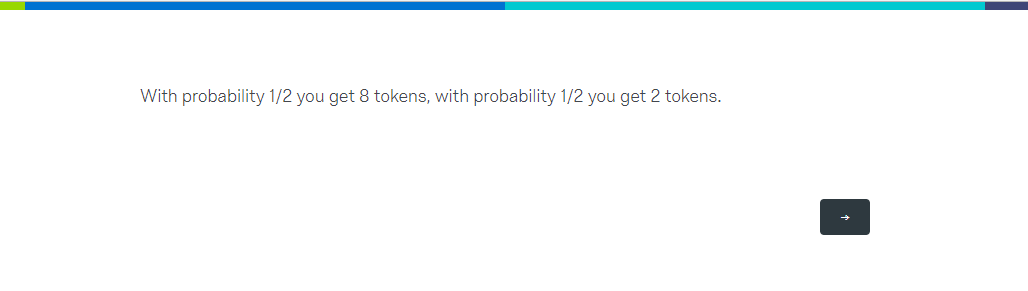}
		\caption{Instructions page 13}\label{instructions13}
	\end{figure}
	\begin{figure}[H]
		\centering
		\includegraphics[height=4.75cm]{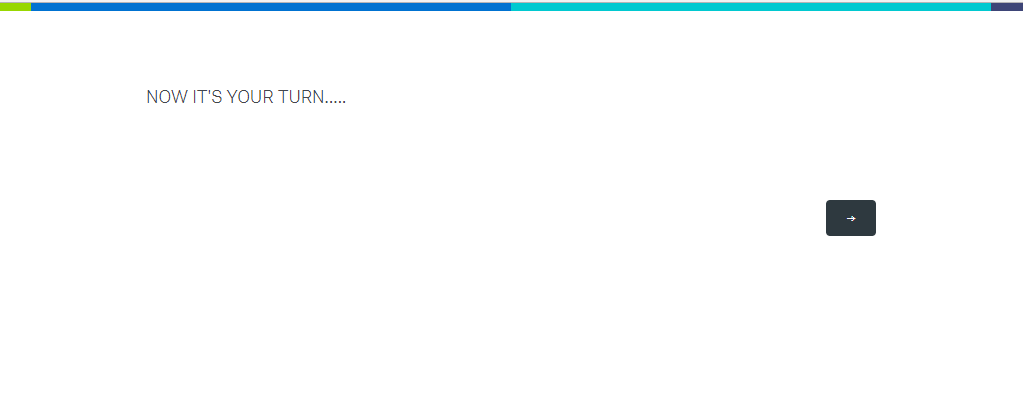}
		\caption{Instructions page 14}\label{instructions14}
	\end{figure}

\bibliographystyle{ecca}
\bibliography{fieldconsideration}

\end{document}